\newtcolorbox{cross}{blank,breakable,parbox=false,
  overlay={\draw[red,line width=2pt] (interior.south west)--(interior.north east);
    \draw[red,line width=2pt] (interior.north west)--(interior.south east);}}
\declaretheoremstyle[notefont=\normalfont]{normalhead}
\declaretheorem[style=normalhead]{theorem}
\declaretheorem[style=normalhead]{proof*}
\declaretheorem[style=normalhead]{definition}
\declaretheorem[style=normalhead]{corollary}
\declaretheorem[style=normalhead]{lemma}
\declaretheorem[style=normalhead]{proposition}
\declaretheorem[style=normalhead]{remark}
\newcommand{\editcolor}{black} 
\newcommand\MyBox[2]{
  \fbox{\lower0.75cm
    \vbox to 1.7cm{\vfil
      \hbox to 1.7cm{\hfil\parbox{1.4cm}{#1\\#2}\hfil}
      \vfil}%
  }%
}
\title{Modeling surrender risk in life insurance: theoretical and experimental insight }
\author{
    Mark Kiermayer \\ 
  Department of Natural Science\\
  University of Applied Sciences Ruhr West\\
 \texttt{mark.kiermayer@hs-ruhrwest.de} \\ 
}
\begin{document}
\maketitle
\raggedbottom

\begin{abstract}
    Surrender poses one of the major risks to life insurance and a sound modeling of its true probability has direct implication on the risk capital demanded by the Solvency II directive.  We add to the existing literature by performing extensive experiments that present highly practical results for various modeling approaches, including XGBoost, random forest, GLM and neural networks. Further, we detect shortcomings of prevalent model assessments, which are in essence based on a confusion matrix. Our results indicate that accurate label predictions and a sound modeling of the true probability can be opposing objectives. We illustrate this with the example of resampling. While resampling is  capable of improving label prediction in rare event settings, such as surrender, and thus is commonly applied, we show theoretically and numerically that models trained on resampled data predict significantly biased event probabilities. Following a probabilistic perspective on surrender, we further propose  time-dependent confidence bands on predicted mean surrender rates as a complementary assessment and demonstrate its benefit. This evaluation takes a very practical, going concern perspective, which respects that the composition of a portfolio, as well as the nature of underlying risk drivers might change over time. 
\end{abstract}

\keywords{confidence predictions, confidence bands, resampling, ensemble techniques, neural networks, XGBoost}

\section{Introduction} \label{section_introduction}

Managing risks is at the core of insurance business. With the publication of the Solvency II directive in 2009 European insurers are required to work towards a holistic view of risk. In particular, it states individual modules for the underwriting risk, market risk and counterparty default risk. For life insurance business, the Solvency II directive requires the underwriting risk to explicitly indicate the risk of individual sub-components like mortality, longevity, morbidity, life-expense, revision, lapse and life-catastrophe. \textcolor{\editcolor}{Capital requirements, as e.g. for the mortality sub-module, are then based on adverse changes or shocks on the underlying assumptions and are determined by the tail of the respective loss variable.} In the present work we look at a sound estimation of surrender risk\textcolor{\editcolor}{, a sub-module which, despite its economic relevance, see e.g. \cite{Kling.2014, Burkhart.2018},  has historically received less attention than mortality or longevity}. \textcolor{\editcolor}{As risk within the Solvency II directive is measured by the $99.5\%$ quantile of the loss variable,} we place a particular focus on \textcolor{\editcolor}{unbiased modeling of the distribution of surrender of policyholders and how}   resampling schemes, that are commonly used to mitigate the rare event character of surrender, \textcolor{\editcolor}{affect this distribution}. \\

Lapse risk has been identified as a major risk to life insurance business by the Quantitative Impact Study QIS5 of the European Insurance and Occupational Pensions Authority  and related research, see e.g. \cite{Kling.2014, Burkhart.2018}. In \cite{EuropeanComission.2010}, lapse risk is officially defined as "all legal or contractual policyholder options which can significantly change the value of the future cash-flows". Hence, lapse risk includes not only a premature and full termination of contracts, but for example also partial terminations, changes to the frequency or quantity of premium payments, altered benefits or extension of coverage. For accuracy, we use the term \textit{surrender} to specifically refer to a premature, full termination of a contract induced by the policyholder. Colloquially, however, surrender and lapse are often used interchangeably. Overall, the effects of surrender are manifold. Arguably most harmful to the insurer are early surrenders that cost the insurer its initial expenses or adverse selection that alters the composition of the portfolio and compromises a sound diversification. The review in \cite{Eling.2013} provides an extensive overview of past research on surrender risk and divides the respective literature into two subsets. Given a theoretical framework, we can determine a fair value estimate of the embedded options, see e.g. \cite{Bauer.2006, Kling.2014, Kolkiewicz.2006}. This theoretical perspective is not limited to risk-neutral individuals, but can be extended to risk-avers and non-rational consumer behavior, see \cite{Eling.2013}. On the other hand, there exists extensive empirical research, which aims to identify the main risk drivers for lapse events and to test for modeling hypotheses, as well as the quality of modeling approaches to capture the underlying risk, see e.g. \cite{Aleandri.2017, Burkhart.2018, Milhaud.2011, Yu.2019, Eling.2014, Kiesenbauer.2012}. The most common modeling classes include logistic regressions and tree based methods as CART and random forests, where the evaluation is typically performed with metrics on the respective confusion matrix or the related receiver operating curve (ROC) for assessment and model choice. The work in \cite{Aleandri.2017, Burkhart.2018, Loisel.2011, Loisel.2019} indicate how the quality of empirical lapse models translates to economic measures. In \cite{Barsotti.2016} we see the inclusion of contagion effects in the modeling of lapse risk. We note that the majority of empirical modeling approaches focus on a single-period lapse prediction, which is to a large extent motivated by the one-year $\text{VaR}_{0.995}$ risk measure imposed by the Solvency II directive. A multi-period perspective on lapse dynamics including a proposition for lapse tables, that can be used in similar fashion to standard mortality tables, can be found in \cite{Milhaud.2018}. The authors of \cite{Milhaud.2018} apply survival analytical methods, as e.g. the Fine\&Gray model, see \cite{Fine.1999}, in order to distinguish between competing risks that can lead to the termination of a contract. \\

In the present work, \textcolor{\editcolor}{we take a semi-empirical, semi-theoretical perspective. On simulated data, we follow a classical empirical approach, where we investigate the quality of common modeling approaches for surrender risk, such as logistic regression, tree based methods and neural networks. At the same time, we grant ourselves insight into the true, underlying surrender probabilities and monitor the generalization of models and how well latent information can be extracted from observable quantities. For resampling techniques commonly used in practice, we also investigate and explain the observed model behaviour from a theoretical point of view. By that we hope to provide valuable insight for all practitioners.} \textcolor{\editcolor}{Our experiments include} four different portfolios of endowment policies, each showing surrender behaviours that replicate findings reported in the literature, see \cite{Milhaud.2011, Eling.2014, Cerchiara.2008}. Overall, we find highly competitive results, where XGBoost consistently displays superior performance. For model evaluation, we look at the latent, true surrender probabilities, as well as observable realizations of surrender events. In view of the bias of a model, our results indicate a conflict between accurate label predictions and sound confidence predictions. Although, we can confirm common resampling techniques to improve the $F_1$-score, we theoretically and numerically show that resampling comes at the cost of a significant bias of estimated surrender probabilities. Following this probabilistic perspective, we further introduce time-dependent confidence bands for the predicted mean surrender rate as an alternative evaluation, which indicate the uncertainty of predicted surrender probabilities. This allows us to assess the quality of a model from a very practical going concern perspective, where the composition of the underlying portfolio and the predominant risk drivers might change over time.  In particular, confidence bands highlight that adding a risk buffer to a naive baseline does not cover the surrender risk sufficiently. \textcolor{\editcolor}{At the same time, we can use confidence bands as a diagnostic tool to uncover changes in the surrender behaviour of policyholders. } \\ 

The remainder of this paper is organized as follows. We start by clarifying the objective in Section \ref{section:objective}. In Section \ref{section:model_evaluation}, we \textcolor{\editcolor}{ recap} common, frequentist concepts to evaluate binary classifiers and \textcolor{\editcolor}{discuss } probabilistic alternatives. Next, in Section \ref{section:rare_events} we review the rare event setting of surrender risk and analyze the distributional effect of common resampling methods. Section \ref{section:data} presents the simulation and preparation of all data used in our numerical experiments. In Sections \ref{section:models} and \ref{section:num_experiments}, we then describe the specific parameterization of all classifiers and report numerical results. Lastly, we conclude in Section \ref{section:conclusion} \textcolor{\editcolor}{and provide} an outlook for future work. 

\section{Objective} \label{section:objective}
Let the probability space $(\Omega,\mathcal{F}, \mathbb{P})$ describe our stochastic environment. We denote an arbitrary insurance contract at time $t\in\mathbb{N}_0$ by $X_t: \Omega\rightarrow \mathbb{R}^n$, $n\in\mathbb{N}$. The time series $(X_0, X_1, X_2, \ldots, X_T)$ then represents the evolution of the contract up to the random period of termination $T:\Omega\rightarrow\mathbb{N}_{\geq 0}$, recording e.g. the age of the policyholder or the assured benefits.
As the termination can have multiple causes, like surrender, death or maturity, we record the respective state at the end of period $t$ by $J_{t}: \Omega\rightarrow\mathbb{N}_0$, where $J_t=0$ for $t=0,\ldots, T -1$ indicates an active contract. All states $J_{ T}\in\mathbb{N}$ are terminal, absorbing states. They describe competing, censoring events, in the sense that observing $\lbrace J_{ T}=i\rbrace$ and $\lbrace J_{ T }=j\rbrace$, $i\neq j$, are mutually exclusive, both end the observation on the time series  and prevent a realization of an alternative state. For more detail on censoring and competing risks we refer the reader to \cite{Aalen.2008}.

\paragraph{Single period setting.}In this work we consider three terminal states, namely surrender, death and maturity. Surrender is represented by $\lbrace J_{ T }=1\rbrace$. Death and maturity are indicated by $\lbrace J_{ T }=2\rbrace$ and $\lbrace J_{ T }=3\rbrace$, but will not be modeled explicitly. Further, in line with the Solvency II directive, we focus on a single period setting. Hence, we define targets $Y_t$ by $Y_t:=\mathds{1}_{\lbrace J_t=1 \rbrace}$. Given a realization $x_t$ of $X_t$, our objective is to estimate the probability of the active contract $x_t$ to surrender within the period $[t,t+1)$, i.e. to obtain an estimate for the conditional probability 
\begin{align} \label{eq:objective_prob} 
    &p({1\vert x_t}):=\mathbb{P}(Y_{t}=1\vert X_t=x_t),
    \intertext{or, equivalently, to model the \textcolor{\editcolor}{Bernoulli} random variable}
    &Y_{t}\vert (X_t=x_t)\sim\text{Ber}\left(p(1\vert x_t)\right).
\end{align}
It is important to note that \eqref{eq:objective_prob} implicitly imposes the Markov property that the random state $J_t$ and the past represented by $X_{t-k}=x_{t-k}$ are independent for $k=1,\ldots,t$. In the literature, it is common to assume that the time horizon influencing surrender decisions of policyholders is restricted to one period, see e.g. \cite{Aleandri.2017, Milhaud.2011}. Also, conditioning on an active contract $x_t$ concurrently implies $T\geq t$. \\

Overall, our objective is a common setup for supervised learning, where we aim to obtain a model $\hat{p}: \mathbb{R}^n\rightarrow [0,1],~x\mapsto \hat{p}(1\vert x)$, such that it minimizes the loss function $\ell: [0,1]\times \lbrace 0,1 \rbrace\rightarrow\mathbb{R}$, i.e.
\begin{align}
    \hat{p} := \arg\!\min_{\Bar{p}}\ell (\Bar{p}(1\vert X_t),Y_t).
\end{align}

We would like to draw attention to the non-binary output of the model $\hat{p}$. The contrast between binary and non-binary prediction will be a major topic throughout this paper. In the style of \cite{RobertE.Schapire.1999}, we call estimates $\hat{p}({1\vert x_t})\in[0,1]$ \textit{confidence prediction}. In contrast, we refer to binary predictions $\hat{y}_t\in\lbrace 0,1 \rbrace$, that purely focus on the realization of $Y_{t}\vert (X_t=x_t)$, as \textit{label predictions}. Naturally, any confidence prediction $\hat{p}({1\vert x_t})$ can be transformed to a label prediction based on some threshold $c\in[0,1]$, i.e. $\hat{y}_t=\mathds{1}_{\lbrace \hat{p}({1\vert x_t}) \geq c \rbrace}$. \\
In practice, obtaining a sound model $\hat{p}$ is complicated by the rare event character of surrender. We discuss rare events separately in Section \ref{section:rare_events}, with a specific focus on their effect on confidence predictions and the distributional effect of common resampling strategies. First, we would like to conclude this section by commenting on the setting at hand and review approaches to evaluate models in Section \ref{section:model_evaluation}.

\paragraph{Context for alternative settings.} The presented objective is customized to the Solvency II directive and a one-year perspective, which disregards the exact time of surrender within the year. Other settings, e.g. a risk neutral product design, might ask for a more general setup where the risk of individual events $\lbrace J_T=i\rbrace$, $i\in\mathbb{N}_0$, are estimated jointly in a multi-period setting. In particular, this requires a more general, real-valued random event $T:\Omega\rightarrow \mathbb{R}_{\geq 0}$ and random state variable $J_t: \Omega\rightarrow\mathbb{N}_0$, $0\leq t \leq T$. Given a realization $x_t$ of $X_t$, the objective then changes to estimating the \textit{cause-specific hazard rate} $\alpha_{T,i}(\Tilde{t}\vert x_t)$ of the terminal events $i\in J_t(\Omega)$ at time $\Tilde{t}\geq t$. According to \cite{Aalen.2008}, the quantity $\alpha_{T,i}(\Tilde{t}\vert x_t)$ is defined by
\begin{align} \label{eq:cs_hazard}
    \alpha_{T,i}(\Tilde{t}\vert x_t):=\lim_{\Delta\rightarrow 0}\tfrac{1}{\Delta}\mathbb{P}\left(\Tilde{t}\leq T \leq \Tilde{t}+\Delta,~J_T=i\vert~ T\geq \Tilde{t},~x_t\right).
\end{align}
Note that conditioning on an active contract $x_t$ also implies that $J_{\Tilde{t}}=0$ for all $\Tilde{t}<t$, as in our context all states $J_{\Tilde{t}}\neq 0$ are terminal and competing risks. For a detailed discussion of $\alpha_{T,i}(\Tilde{t}\vert x_t)$ and its estimation we refer the reader to \cite{Aalen.2008, Milhaud.2018}. Instead we want to comment on the practical implications of \eqref{eq:objective_prob} and \eqref{eq:cs_hazard}. \\
Quantities in \eqref{eq:cs_hazard} can be thought of as time-dependent transition intensities from the active state $0$ to a terminal state $i$ in a Markov model. Cause-specific hazard rates provide a more detailed insight into the likelihood of states $J_T$ to be realized and might  be used to form a table for lapse probabilities, see e.g. \cite{Milhaud.2018}. It is well know that simply ignoring competing risks, as e.g. death or maturity for surrender, alters the estimated hazards $\alpha_{T,i}(\Tilde{t}\vert x_t)$, which are asymptotically unbiased, see e.g. Section 3.4 in \cite{Aalen.2008}. 
On the other hand, modeling a Bernoulli variable as in \eqref{eq:objective_prob} translates to asking how likely is it to observe a surrender event of contract $x_t$ within the next period. Here, all non-surrender states $J_t\neq 1$ are combined by introducing the target $Y_t$. Further, in our one-period perspective we have complete observations as either surrender or non-surrender is realized. In contrast, if we do not observe contract $x$ entering a terminal state $J_T>0$, the observation on $x$ yields a right-censoring event for the estimation of \eqref{eq:cs_hazard}. \\

For notational convenience, in the remainder of this work we drop the time index $t$ in $(X_t,Y_t)$ and denote realizations $(x,y)\sim(X,Y)$ by the use of small letters. The progression of time is implicitly recorded by features like the age of the policyholder. 
Consequently, we denote the true surrender probability of realization $x$ by $p({1\vert x})$, the confidence prediction by $\hat{p}({1\vert x})$ and the label prediction by $\hat{y}$. In the following, we use indices to count data in our data set, e.g. $x_k$, and exponents, if we explicitly want to refer to components, e.g. $x_k^{(j)}$.
\section{Model evaluation} \label{section:model_evaluation}
In this section, we review classical approaches for evaluating classifiers, \textcolor{\editcolor}{indicate} shortcomings of their frequentist focus and provide probabilistic alternatives. Our reasoning is dictated by the \textcolor{\editcolor}{probabilistic} objective of sound confidence predictions, in contrast to accurate binary label predictions to model actions or definite class memberships.

\paragraph{Frequentist evaluation.} In a binary classification task, the label prediction $\hat{y}$ for a record $x$ with label $y$ can either be true, i.e. $\hat{y}=y$, or false, i.e. $\hat{y}\neq y$. Based on whether the prediction $\hat{y}$ is positive ($\hat{y}=1$) or negative ($\hat{y}=0$), we refer to a prediction as true, respectively false, positive or negative. A common \textcolor{\editcolor}{frequentist} evaluation is to present the count of all predictions $\hat{y}$ w.r.t. their correct label $y$ in a $2\times 2$ matrix, a so-called \textit{confusion matrix}, see \cite{Aleandri.2017, Milhaud.2011, Fawcett.2006}. Based on the absolute frequencies in a confusion matrix, we can then formulate a variety of performance measures. Common examples include, accuracy, precision, recall, specificity or $F_{\beta}$-score for $\beta>0$. Table \ref{table:classification_measures} in the Appendix \ref{appendix:tables} summarizes the definition of well-known performance measures. \textcolor{\editcolor}{More advanced approaches like the \textit{receiver operating characteristics} (ROC) curve, the \textit{precision-recall curve} and their quantification via the \textit{area under curve} (AUC) evaluate the model and its label predictions $\hat{y} = \mathds{1}_{\lbrace \hat{p}({1 \vert x})>c \rbrace}$ for multiple thresholds $c\in[0,1]$, see \cite{Provost.2001}. The concept of the \textit{convex hull} might be used to determine an optimal threshold or compare different classifiers, see \cite{Fawcett.2006}. Overall however, the listed frequentist evaluations show several limitations. Examples include the disregard for a full column or row of the confusion matrix, as for recall or specificity, an invariance towards changing the count of true-negative values, as for the $F_{\beta}$-score, or an insensitivity to the balance of the data, as for the ROC curve. The work in \cite{Sokolova.2009} provides a detailed summary of individual performance measures and how they are affected by permutations in the confusion matrix. More generally, one can argue that compressing a confidence prediction $\hat{p}({1 \vert x})$ to a label prediction $\hat{y}$ looses information. Hence, we will investigate an alternative, probabilistic model evaluation. }\\
    
    \paragraph{Probabilistic evaluation.} Ideally, we would like to compare the confidence predictions $\hat{p}(1\vert x)$ with true event probabilities $p({1\vert x})$, e.g. by employing $p$-$\hat{p}$ \textcolor{\editcolor}{scatter}plots. Given data $\lbrace(x_i,y_i)\rbrace_{i=1}^{N}$, we denote the mean absolute difference of these quantities by
    \begin{align}
        \text{mae}\left(p,\hat{p}\right)&:= \frac{1}{N}\sum_i~ \vert p({1\vert x_i}) - \hat{p}({1\vert x_i})\vert.
    \end{align}
    In our experiments we can access the true surrender probabilities $p(1\vert x)$ and look at the model fit from a probabilistic point of view. However, this is a purely theoretical perspective as in practice we do not observe $p(1\vert x)$ but only the binary realization $y$ from $\text{Ber}(p(1\vert x))$. Thus, information theory provides a more practical approach. One can show that the maximum likelihood estimator $\hat{p}(1\vert x)$ corresponds to minimizing the Kullback-Leibler divergence, respectively equivalently minimizing the cross-entropy, see \cite{Goodfellow.2016}. Using standard notation, we denote the empirical estimate of the (binary) cross-entropy by
    \begin{align}
        H(\hat{p}) &:= -\frac{1}{N}\sum_{i}y_i\log \hat{p}({1\vert x_i}) + (1-y_i)\log \left(1-\hat{p}({1\vert x_i})\right).
    \end{align}
    Hence, minimizing the binary cross-entropy provides a maximum-likelihood consistent estimator, which is why the cross-entropy is the most commonly chosen loss function in classification. Unfortunately however, the cross-entropy lacks a comprehensive, practical interpretation for surrender. Therefore, we introduce a third option to evaluate confidence predictions, \textcolor{\editcolor}{i.e. confidence intervals} based on the central limit theorem of Lindeberg-Feller, see e.g. \cite{Klenke.2014}.
    
    \begin{proposition} \label{prop:CLT}
        Let $x_1,\ldots,x_N$ be realizations of contract $X$. Let $Z_1,\ldots,Z_N$ describe random, independent surrender events, where $Z_i \sim \text{Ber}\left(p({1\vert x_i})\right)$, $i=1,\ldots,N$. If there exists $\varepsilon>0$ such that $p(1\vert x)\in[\varepsilon,1-\varepsilon]$ for any contract $x\sim X$, then
        \begin{align}
            &\frac{1}{\sigma(S_N)}\sum_{i} Z_i - p({1\vert x_i}) \overset{d}{\rightarrow} \mathcal{N}\left(0,1\right), \quad N\rightarrow\infty,
            \intertext{with}
            & \sigma(S_N) := \sqrt{\text{Var}\left(\sum_i Z_i\right)} = \sqrt{\sum_i p(1\vert x_i)(1-p({1 \vert x_i}))}
        \end{align}
        holds.
    \end{proposition}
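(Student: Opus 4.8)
The plan is to recognize this as a direct application of the Lindeberg--Feller central limit theorem for triangular arrays of independent (but not identically distributed) random variables. First I would define the centered summands $W_i := Z_i - p(1\vert x_i)$, which are independent with $\mathbb{E}[W_i]=0$ and $\mathrm{Var}(W_i)=p(1\vert x_i)(1-p(1\vert x_i))$, since each $Z_i$ is Bernoulli. The quantity $\sigma(S_N)^2 = \sum_i p(1\vert x_i)(1-p(1\vert x_i))$ is then exactly the variance of $S_N := \sum_i W_i$, which justifies the stated formula for $\sigma(S_N)$ by independence. The goal reduces to showing $S_N/\sigma(S_N) \overset{d}{\to}\mathcal{N}(0,1)$.

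The core of the argument is verifying the Lindeberg condition: for every $\delta>0$,
\begin{align}
    \frac{1}{\sigma(S_N)^2}\sum_{i=1}^{N}\mathbb{E}\!\left[W_i^2\,\mathds{1}_{\lbrace \vert W_i\vert > \delta\,\sigma(S_N)\rbrace}\right]\longrightarrow 0, \quad N\to\infty.
\end{align}
This is where the hypothesis $p(1\vert x)\in[\varepsilon,1-\varepsilon]$ does the essential work, and I expect it to be the main (though mild) obstacle to state cleanly. The uniform bound gives a two-sided control on the per-summand variance: each $W_i$ is bounded by $\vert W_i\vert\leq 1$ almost surely, while $\mathrm{Var}(W_i)=p(1\vert x_i)(1-p(1\vert x_i))\geq \varepsilon(1-\varepsilon)$. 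Consequently $\sigma(S_N)^2\geq N\varepsilon(1-\varepsilon)\to\infty$, so $\sigma(S_N)\to\infty$ and, for any fixed $\delta>0$, the threshold $\delta\,\sigma(S_N)$ eventually exceeds $1$. Once $\delta\,\sigma(S_N)>1\geq\vert W_i\vert$, the indicator $\mathds{1}_{\lbrace \vert W_i\vert > \delta\,\sigma(S_N)\rbrace}$ vanishes identically for all $i$, making the Lindeberg sum exactly zero for all large $N$.

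Having verified the Lindeberg condition, I would invoke the Lindeberg--Feller theorem (as stated in \cite{Klenke.2014}) to conclude the convergence in distribution. The only care needed is the triangular-array bookkeeping: the summands depend on $N$ through the normalization, so strictly one works with the array $W_{N,i} := W_i/\sigma(S_N)$, whose row sums have unit variance, and checks that the rescaled Lindeberg condition coincides with the one above. I would close by noting that the uniform boundedness away from $0$ and $1$ is a natural assumption in this surrender setting, since it excludes degenerate contracts that surrender with certainty or never, and it is precisely what guarantees the variance grows linearly so that the normalized fluctuations stabilize to a standard normal limit.
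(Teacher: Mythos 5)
Your proposal is correct and follows essentially the same route as the paper: both verify the Lindeberg condition for the Lindeberg--Feller theorem by combining the almost-sure boundedness of $Z_i - p(1\vert x_i)$ with the divergence $\sigma(S_N)^2 \geq N\varepsilon(1-\varepsilon)\to\infty$ guaranteed by the assumption $p(1\vert x)\in[\varepsilon,1-\varepsilon]$, so that the truncation indicators vanish for all large $N$. If anything, your observation that the Lindeberg sum is \emph{exactly} zero once $\delta\,\sigma(S_N)>1$ states the conclusion slightly more cleanly than the paper's phrasing.
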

    \begin{proof}\phantom{\qedhere}
        See Appendix \ref{proof:prop:CLT}.
    \end{proof}
    
    \textcolor{\editcolor}{We note that the surrender events $Z_i$ are in general non-identically distributed. Further,} assuming $Z_i:=(Y\vert X=x_i)$ in Proposition \ref{prop:CLT} to be independent means that, given circumstances $x_i$, policyholders act independently of each other. If $x_1$ and $x_2$ present observations on the same contract, i.e. the same policyholder, at different times this assumption is implausible. Hence, in application we restrict $x_i$ in Proposition \ref{prop:CLT} to \textcolor{\editcolor}{observations within} a fixed calendar year. 
    Features that can effect the surrender risk of all policyholders holistically, as e.g. rising interest rate, can be recorded as a component of contracts $x_i$. Thus, a holistic increase of the surrender risk does not violate the independence of surrender decisions $Z_i$. Note that the technical restriction of $p(1\vert x_i)\in [\varepsilon,1-\varepsilon]$ does not impact the asymptotic distribution in Proposition \ref{prop:CLT}. We can think of it as assuming a minimum risk for surrender and non-surrender. For any finite data set $\lbrace(x_i,y_i \rbrace_{i=1}^N$ with estimator $\hat{p}(1\vert x_i)\in(0,1)$ there exists $\varepsilon>0$ which satisfies $\hat{p}(1\vert x_i)\in[\varepsilon,1-\varepsilon]$.
    
    \begin{corollary} \label{cor:confidence_intervals}
        Let the assumptions of Proposition \ref{prop:CLT} hold and let $\hat{p}: x\mapsto\hat{p}({1\vert x})$ denote an estimator of $p(1\vert x)$. Given the confidence level $\alpha\in(0,1)$, we can then construct a confidence interval for the mean surrender rate by
            \begin{align}
                \left(\frac{1}{N}\sum_i \hat{p}({1\vert x_i})\right) \pm z_{1-\alpha/2}\frac{\sqrt{\sum_i \hat{p}({1\vert x_i})(1-\hat{p}({1\vert x_i}))}}{\sqrt{N(N-1)}},
            \end{align}
        where $\hat{p}({1\vert x_i})$ presents the model estimate for $p({1\vert x_i})$ and $z_{1-\alpha/2}$ the respective percentile of the standard normal distribution. The best point estimate for the mean surrender rate is given by $\frac{1}{N}\sum_i \hat{p}({1\vert x_i})$.
    \end{corollary}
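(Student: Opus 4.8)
The plan is to turn the limiting normal statement of Proposition~\ref{prop:CLT} into a two-sided probability statement and then invert it around the quantity of interest, the mean surrender rate. Writing $S_N := \sum_i Z_i$ and $\mu_N := \sum_i p(1\vert x_i)$, Proposition~\ref{prop:CLT} asserts that $(S_N - \mu_N)/\sigma(S_N) \overset{d}{\rightarrow} \mathcal{N}(0,1)$, so that for the standard normal quantile $z_{1-\alpha/2}$ and large $N$,
\begin{align}
  \mathbb{P}\left( -z_{1-\alpha/2} \le \frac{S_N - \mu_N}{\sigma(S_N)} \le z_{1-\alpha/2} \right) \approx 1-\alpha .
\end{align}
First I would multiply the two inequalities by $\sigma(S_N)$, isolate $\mu_N$, and divide by $N$, so that the event is expressed through the observed mean surrender rate $\bar Z_N := S_N/N$ and the target $\mu_N/N = \frac{1}{N}\sum_i p(1\vert x_i)$. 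This yields the generic confidence interval $\bar Z_N \pm z_{1-\alpha/2}\,\sigma(S_N)/N$ for the mean surrender probability.

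The second step replaces the unobservable population quantities by their model-based counterparts. Since in practice only the realizations and the estimator $\hat p$ are available, I would substitute $p(1\vert x_i)$ by $\hat p(1\vert x_i)$ inside $\sigma(S_N)$; this is justified because $t\mapsto t(1-t)$ is continuous and the estimator is assumed consistent, so a Slutsky/continuous-mapping argument leaves the asymptotic coverage unchanged. For the centre of the interval I would use $\frac{1}{N}\sum_i \hat p(1\vert x_i)$ in place of the observed $\bar Z_N$: under the model $Z_i\sim\text{Ber}(\hat p(1\vert x_i))$ the expectation of $\bar Z_N$ equals $\frac{1}{N}\sum_i\hat p(1\vert x_i)$, and since the minimiser of the model-expected squared error $\mathbb{E}[(\bar Z_N - c)^2]$ is $c=\mathbb{E}[\bar Z_N]$, this is the natural and best point estimate of the mean surrender rate, establishing the final claim of the corollary.

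The remaining, and most delicate, step is to reconcile the scaling $\sigma(S_N)/N$ coming directly from the central limit theorem with the denominator $\sqrt{N(N-1)}$ appearing in the statement. I would present this as the familiar Bessel correction: estimating the variance of a sample mean by the unbiased sample variance divided by $N$ contributes a factor $1/(N-1)$ rather than $1/N$, and plugging the model-based variance proxy $\sum_i \hat p(1\vert x_i)(1-\hat p(1\vert x_i))$ in for the empirical $\sum_i(Z_i-\bar Z_N)^2$ produces exactly the stated standard error $\sqrt{\sum_i \hat p(1\vert x_i)(1-\hat p(1\vert x_i))}/\sqrt{N(N-1)}$. I expect this to be the main obstacle, because the substitution is only a finite-sample heuristic: for heterogeneous $p(1\vert x_i)$ the model proxy and the empirical sum of squares agree in expectation only up to lower-order terms. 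The clean way around it is to lean on the asymptotics, exactly as in the remark following Proposition~\ref{prop:CLT}: since $N/\sqrt{N(N-1)}\rightarrow 1$, the choice between $N$ and $\sqrt{N(N-1)}$ does not affect the limiting coverage, and the slightly wider $\sqrt{N(N-1)}$ version is the conservative option, which I would adopt and flag as such.
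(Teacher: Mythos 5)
Your proposal is correct and takes essentially the same route as the paper: the paper's proof likewise inverts the normal approximation from Proposition \ref{prop:CLT} to obtain an interval of the form $\frac{1}{N}\sum_i p(1\vert x_i) \pm z_{1-\alpha/2}\,\sigma(S_N)/N$, and then substitutes the plug-in estimates $\hat{p}(1\vert x_i)$ together with the Bessel-type factor $\sqrt{N/(N-1)}$ (declared an asymptotically unbiased sample estimate of $\sigma(S_N)$), which produces exactly the $\sqrt{N(N-1)}$ denominator. Your additional observations --- that the substitution is a finite-sample heuristic whose validity rests on $N/\sqrt{N(N-1)}\rightarrow 1$, and that the resulting interval is the conservative choice --- are sound and, if anything, more careful than the paper's own justification.
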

    \begin{proof}\phantom{\qedhere}
        See Appendix \ref{proof:cor:CLT}.
    \end{proof}
    If we were to work with label predictions $\hat{y}_i := \mathds{1}_{\lbrace \hat{p}(1\vert x_i)\geq 0.5\rbrace}\in\lbrace 0,1 \rbrace$, the quantity $\tfrac{1}{N}\sum_i \hat{y}_i$ also provides an estimator for $\mathbb{E}\left(\tfrac{1}{N}\sum_i Z_i\right)$. However, without an estimate of the distribution of $Z_i$ we cannot express any level of confidence for our predictions. \textcolor{\editcolor}{Note that Corollary \ref{cor:confidence_intervals} does not guarantee the confidence intervals to strictly lie within $[0,1]$. This stems from the Gaussian character of the asymptotic confidence intervals. However, we will later find this to not be an issue in practical application. } In Section \ref{section:num_experiments}, we apply Corollary \ref{cor:confidence_intervals} for each calendar year separately and obtain confidence bands on our time series. \textcolor{\editcolor}{Further, Corollary \ref{cor:confidence_intervals}} relaxes the assumptions of \cite{Milhaud.2011}, where the authors construct confidence intervals under the assumption of homogeneous and identically distributed subgroups of contracts.
\section{Rare events} \label{section:rare_events}
We now combine the nature of rare events with the objective of Section \ref{section:objective} and \textcolor{\editcolor}{check whether selecting models based on} frequentist performance measures \textcolor{\editcolor}{is aligned with our} probabilistic objective. We start with a definition of rare events, motivate resampling and examine it from a purely theoretical point of view. Thereafter, we interpret these results for common classifiers in practice. \\

In binary classification, we generally refer to data $\lbrace (x_i,y_i)\rbrace_{i=1}^{N}$ drawn from $(X,Y)$ as \textit{balanced data} if both classes contain about equally many observations. Concerns about biased collection of data are discussed e.g. in \cite{King.2001}. Overall, it has been widely acknowledged in literature, see \cite{Wallace.11.12.201114.12.2011, Maalouf.2014, Seiffert.2007, Weiss.2004}, that imbalanced data can lead to classifiers which are biased towards predicting the majority class, and that standard metrics, as e.g. accuracy, \textcolor{\editcolor}{may provide little information about the actual quality of the model}. Here, the notion of a bias refers solely to label predictions. In the particular case of the logistic regression, imbalanced data even leads to low values in the Hessian matrix \textcolor{\editcolor}{of the loss function}, see \cite{Hastie.2017}. Hence, estimated regression parameters are highly unstable, despite being asymptotically unbiased. \\
In contrast to imbalanced data,  we characterize \textit{rare events} by their low, true probability $\mathbb{P}(Y=1)<\varepsilon$, where the value of $\varepsilon$ can be domain specific, see \cite{Rubino.2009}. In practice, values as low as $\varepsilon =0.05$ or $\varepsilon =0.01$ are commonly found in e.g. surrender of insurance contracts \cite{Milhaud.2011, Milhaud.2018, Aleandri.2017} or fraud detection \cite{Phua.2012, Li.2012}. In this work, we will assume perfect data in the sense that any imbalance between surrender and non-surrender events stems from the rare event character of the task at hand and is not the consequence of improper data collection. \\

To address the concern of biased classifiers on imbalanced data, the literature proposes two main techniques, cost-sensitive-learning and resampling, see e.g. \cite{Wallace.11.12.201114.12.2011, King.2001,Maalouf.2014, Elkan.2001, Weiss.2004, Seiffert.2007}. Both have empirically shown to be capable of improving classifiers regarding their recall, $F_{\beta}$, AUC or geometric mean of the true positive and true negative rate, see e.g. \cite{Wallace.11.12.201114.12.2011, Chawla.2002, Seiffert.2007, Weiss.2001}. In \cite{RobertE.Schapire.1999}, we see that popular boosting schemes like 'AdaBoost' can be also interpreted as an adaptive resampling. For completeness we note that, research in data mining indicates that classifiers can also benefit from removing outliers or accounting for noisy measurements close to the decision boundary. For a more comprehensive review we refer the reader to \cite{Chandola.2009, Hodge.2004}. In the following, we focus on common resampling schemes, as resampling data and applying weights to the empirical loss function are conceptually equivalent. This argument is supported by the authors in \cite{Wallace.11.12.201114.12.2011, An.2021}, who \textcolor{\editcolor}{even} contend that sampling approaches will generally outperform cost sensitive learning. \\ 

We start with a purely theoretical perspective on resampling. Therefore, we denote the marginal, respectively joint, density functions of $(X,Y)$ at $(x,y)$ by $p(x)$ and $p(y)$, respectively $p({x,y})$. The specific letter indicates the underlying random vector as it is common in Bayesian literature, see e.g. \cite{Wakefield.2013}. Further, we introduce the functions
    \begin{align} \label{eq:reg_cond_dist}
        p({y\vert x}) & := 
        \begin{cases}
            {p({x,y})}\bigm/{ p({x}) }, & p(x)>0 \\
            0, & p(x) =0,
        \end{cases} 
        & p({x\vert y}) & := 
        \begin{cases}
            {p({x,y})}\bigm/{ p({y}) }, & p(y)>0 \\
            0, & p(y) =0.
        \end{cases}
    \end{align}
Based on the concept of regular conditional distributions, see \cite{Durrett.2010, Klenke.2014}, both functions in \eqref{eq:reg_cond_dist} are well defined on the given probability space $(\Omega,\mathcal{F}, \mathbb{P})$ as a function w.r.t. either $x$ or $y$. Hence, we are justified to interpret $p({y\vert x})$ as a conditional density given the condition $x$ or as a measurable function w.r.t. its condition $x$. \\

    \textcolor{\editcolor}{In this context, resampling} aims to mitigate the imbalance between classes by either dropping samples of the majority class or sampling samples from the minority class. We call this procedure \textit{random over-} respectively \textit{undersampling} if the dropping respectively sampling is performed randomly. In practice, these resampling procedures are usually performed until the data is balanced. Formally, we define resampling with the purpose of increasing the share of the minority class 1, without specifying the specific target ratio of minority and minority class, as follows.
    
    \begin{definition} \label{def:resampling}
        Let $\mathcal{D}=\lbrace(x_i,y_i)\rbrace_{i=1}^N$ be i.i.d. samples generated from $(X,Y)$, where $Y\vert(X=x) \sim \text{Ber}\left(p(1\vert x)\right)$. 
        Let the data be imbalanced with minority class $Y=1$, such that $\sum_i y_i< \sum_i (1-y_i)$. 
        We then call $S$ a \textit{resampling scheme}, 
        if it \textcolor{\editcolor}{enlarges or reduces $\mathcal{D}$ to} \textcolor{\editcolor}{data} $\mathcal{D}^S=\lbrace(x^{S}_i,y^{S}_i)\rbrace_{i=1}^{N^{S}} \subset\mathcal{D}$ with $N^S>N$ or $N^S<N$, respectively, for which
            \begin{align*}
                \sum_{0\leq i\leq N} (1-y_i)  = \sum_{0\leq i\leq N^S} (1-y^{S}_i) \quad \text{ and }  \quad \lbrace (x_i,y_i)\in\mathcal{D}:y_i=0 \rbrace \subset \mathcal{D}^S, \tag{\textit{oversampling}} \\
                \intertext{respectively}
                \sum_{0\leq i\leq N} y_i  = \sum_{0\leq i\leq N^S} y^{S}_i \quad \text{ and } \quad \lbrace (x_i,y_i)\in\mathcal{D}:y_i=1 \rbrace \subset \mathcal{D}^S \quad  \tag{\textit{undersampling}}
            \end{align*}
        hold.
    \end{definition}    
    To indicate that we refer to empirical data, we write $\hat{p}{(\cdot)},~\hat{p}{(\cdot,\cdot)}$ and $\hat{p}{(\cdot\vert\cdot)}$ for the densities induced by the original data $\lbrace(x_i,y_i)\rbrace_{i=1}^N$. \textcolor{\editcolor}{Analogously}, $\hat{p}^S{(\cdot)},~\hat{p}^S{(\cdot,\cdot)}$ and $\hat{p}^S{(\cdot\vert\cdot)}$ are derived from resampled data $\lbrace(x^{S}_i,y^{S}_i)\rbrace_{i=1}^{N^{S}}$. Next, we impose that any proper resampling scheme should not alter the distribution of features in a given class, which is natural as any estimator $\hat{p}: x\mapsto \hat{p}(1\vert x)$ relies on features $x$. 
    
    \begin{definition}[Consistent resampling] \label{def:consistent_resampling}
        We call a \textit{resampling scheme} $S$ \textit{consistent} if it preserves the observed probabilities for features $X$ within every class $Y=y$, i.e. with $y\in\lbrace 0,1 \rbrace$ fixed
            $$\hat{p}^S(x\vert y) = \hat{p}(x\vert y)$$
        holds for arbitrary $x\in X(\Omega)$.
    \end{definition} 
    
    \begin{remark}[Examples for consistent resampling] \label{remark:constistent:resampling}
        Random undersampling and oversampling, as well as the popular resampling algorithm SMOTE, see \cite{Chawla.2002}, are all natural approaches to building a consistent resampling method. All approaches alter data $x$ only for either the minority or the majority class. We provide arguments for the consistency of these resampling techniques, focusing on the class altered by the specific algorithm.
        \begin{itemize}
            \item[1.] In random undersampling, we iteratively and randomly drop observation $(x,0)\in\lbrace(x_i,y_i)\rbrace_{i=1}^N$. In the first iteration, the probability to drop $(x,0)$ is characterized by the share of observations $x_i=x$ in class $y=0$, which poses an estimator $\hat{p}(x\vert y=0)$. Assuming a sufficiently large sample size for the majority class, i.e. $y=0$, we can approximate the probability of subsequent observations $(x^{\prime},0)$ to be dropped by $\hat{p}(x^{\prime}\vert y=0)$.
            \item[2.] In random oversampling, we iteratively draw observations $(x,1)\in\lbrace(x_i,y_i)\rbrace_{i=1}^N$ with replacement. Hence, each $x$ is drawn with probability $\hat{p}(x\vert y=1)$.
            \item[3.] SMOTE oversampling, see \cite{Chawla.2002}, also draws observations $(x,1)\in\lbrace(x_i,y_i)\rbrace_{i=1}^N$ with replacement and with probability $\hat{p}(x\vert y=1)$. In contrast to random oversampling, it does not sample $x$ but $x^{\prime}= x+u\cdot \tilde{x}$, where $u\sim U(0,1)$ and $\tilde{x}$ with $(\tilde{x},1)\in \lbrace(x_i,y_i)\rbrace_{i=1}^N$ a randomly chosen, $k$-nearest-neighbour to $x$. By default, SMOTE employs $k=5$. This approach samples new data points by assuming $\hat{p}(x\vert y=1)$ to be locally constant. The notion of locality is dictated by the distance metric used in the $k$-nearest neighbour algorithm.
        \end{itemize}
    \end{remark}
    
    According to the work in \cite{Elkan.2001}, the effect of consistent resampling can be formalized as follows.
    \begin{theorem}[Probabilistic view on resampling, see \cite{Elkan.2001}] \label{theorem:effect_of_resampling}
        Let $S$ be a consistent resampling scheme and $(x,y)$ arbitrary data which satisfy $\hat{p}(x)>0$ and $\hat{p}^S(x)>0$. Then the equation
        \begin{align*}
            \hat{p}^S(1\vert x) = \hat{p}(1\vert x)~\frac{\hat{p}^S(y=1)~(1-\hat{p}(y=1))}{\hat{p}(y=1)~(1-\hat{p}(1\vert x))+\hat{p}^S(y=1)~(\hat{p}(1\vert x)-\hat{p}(y=1))}
        \end{align*}
        holds.
    \end{theorem}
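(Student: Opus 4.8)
The plan is to unravel $\hat{p}^S(1\vert x)$ via Bayes' rule on the resampled distribution, invoke the consistency of $S$ to swap the resampled class-conditional densities for the original ones, and then re-expand those original class-conditionals (again via Bayes' rule, now on the unresampled data) so that the target quantity $\hat{p}(1\vert x)$ appears explicitly. The only nontrivial step at the end is an elementary algebraic identity matching the denominator to the claimed form.

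First I would write, using the law of total probability together with the definitions in \eqref{eq:reg_cond_dist} applied to the resampled data,
\begin{align*}
\hat{p}^S(1\vert x) = \frac{\hat{p}^S(x\vert y=1)\,\hat{p}^S(y=1)}{\hat{p}^S(x\vert y=1)\,\hat{p}^S(y=1) + \hat{p}^S(x\vert y=0)\,\hat{p}^S(y=0)}.
\end{align*}
By Definition \ref{def:consistent_resampling}, consistency gives $\hat{p}^S(x\vert y)=\hat{p}(x\vert y)$ for $y\in\{0,1\}$, so I may replace every resampled class-conditional density by the corresponding original one. This is the crucial use of consistency: it is precisely what permits the resampling to change the priors $\hat{p}(y=1)\mapsto\hat{p}^S(y=1)$ while leaving the likelihoods $\hat{p}(x\vert y)$ intact.

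Next I would reintroduce $\hat{p}(1\vert x)$ by inverting Bayes' rule in the original data, namely
\begin{align*}
\hat{p}(x\vert y=1)=\frac{\hat{p}(1\vert x)\,\hat{p}(x)}{\hat{p}(y=1)}, \qquad \hat{p}(x\vert y=0)=\frac{(1-\hat{p}(1\vert x))\,\hat{p}(x)}{1-\hat{p}(y=1)},
\end{align*}
which are valid since $\hat{p}(x)>0$ and $\hat{p}(y=0)=1-\hat{p}(y=1)$. Substituting these, the common factor $\hat{p}(x)$ cancels from numerator and denominator, and clearing the remaining fractions by $\hat{p}(y=1)(1-\hat{p}(y=1))$ leaves
\begin{align*}
\hat{p}^S(1\vert x)=\frac{\hat{p}(1\vert x)\,\hat{p}^S(y=1)\,(1-\hat{p}(y=1))}{\hat{p}(1\vert x)\,\hat{p}^S(y=1)\,(1-\hat{p}(y=1)) + (1-\hat{p}(1\vert x))\,(1-\hat{p}^S(y=1))\,\hat{p}(y=1)}.
\end{align*}

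Finally I would verify that this denominator equals the one in the statement. Abbreviating $p=\hat{p}(1\vert x)$, $q=\hat{p}(y=1)$ and $r=\hat{p}^S(y=1)$, both denominators expand to $pr + q - pq - qr$ (the cross terms $\pm pqr$ cancel), so the two expressions coincide and the claimed formula follows. I do not expect any genuine obstacle here: the entire argument is bookkeeping with Bayes' rule, and the single thing to watch is keeping the two directions of conditioning (original versus resampled) straight, so that consistency is applied exactly once, to the class-conditional densities and nowhere else.
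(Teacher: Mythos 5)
Your proof is correct and takes exactly the intended route: the paper prints no proof of Theorem \ref{theorem:effect_of_resampling} (it defers to \cite{Elkan.2001}), but its remark that the positivity assumptions are crucial ``to apply Bayes' theorem \ldots for both $\hat{p}$ and $\hat{p}^S$'' describes precisely your argument --- Bayes' rule on the resampled densities, consistency to swap $\hat{p}^S(x\vert y)$ for $\hat{p}(x\vert y)$, Bayes' rule again on the original data, and the closing algebra. Your final identity also checks out: with $p=\hat{p}(1\vert x)$, $q=\hat{p}(y=1)$, $r=\hat{p}^S(y=1)$, both your denominator $pr(1-q)+(1-p)(1-r)q$ and the theorem's $q(1-p)+r(p-q)$ expand to $pr+q-pq-qr$.
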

    
    In Theorem \ref{theorem:effect_of_resampling}, the assumption of positives densities $\hat{p}(x)>0$ and $\hat{p}^S(x)>0$ is crucial to apply Bayes' theorem in the proof of Theorem \ref{theorem:effect_of_resampling} for both $\hat{p}$ and $\hat{p}^S$. In the simplest case, where estimators $\hat{p}$ and $\hat{p}^S$ are based on relative frequencies this certainly holds true for data $(x,y)$ that are in the original and the resampled dataset. Next, we discuss the assumption of Theorem \ref{theorem:effect_of_resampling} for common classifiers. \\
    Generalized linear models, tree based classifiers or neural networks all can be used to construct confidence predictions $\hat{p}(1\vert x)$ which are, at the very least, locally continuous. For example, a logistic regression or a neural networks with a final sigmoid activation both result in $\mathbb{P}$-almost-surely differentiable $\hat{p}(1\vert x)\in\left(0,1\right)$ for all contracts $x$ and, thus, $p(x)>0$, see \eqref{eq:reg_cond_dist}. In its standard form, see \cite{Hastie.2017}, classification trees apply relative frequencies per node, i.e. a locally constant estimator $\hat{p}(1\vert x)$. In pure nodes, where $\hat{p}(1\vert x)\in\lbrace 0,1 \rbrace$, we find $p(x,y)=0$ and $p(x)>0$, see again \eqref{eq:reg_cond_dist}. Hence, the stated models all assume $p(x)>0$ globally and show the generality of Theorem \ref{theorem:effect_of_resampling} in practice. \\
    
    Observe that for any resampling schemes as in Definition \ref{def:resampling} holds $\frac{1}{N^S}\sum_{y_i^S}y_i^S>\frac{1}{N}\sum_{y_i}y_i$. Hence, it is natural to assume that $\hat{p}^S(y=1)>\hat{p}(y=1)$ holds for modeling approaches in practice. 
    Then, we can establish the effect of resampling from a distributional perspective.
    \begin{lemma}[Bias of traditional resampling] \label{lemma:bias_resampling}
        Let $S$ be a consistent resampling scheme. Further, consider a modeling approach with $\hat{p}^S(y=1)>\hat{p}(y=1)$ and arbitrary data $(x,y)$ with $\hat{p}(x)>0$ and $\hat{p}^S(x)>0$. Then, $S$ increases the conditional distribution of the minority class 1, such that $\hat{p}^S(1\vert x)>\hat{p}(1\vert x)$ holds.
    \end{lemma}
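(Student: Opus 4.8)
The plan is to feed the closed-form identity from Theorem \ref{theorem:effect_of_resampling} directly into the desired inequality and reduce everything to the single hypothesis $\hat{p}^S(y=1) > \hat{p}(y=1)$. To keep the algebra transparent I would abbreviate $q := \hat{p}(1\vert x)$, $\pi := \hat{p}(y=1)$ and $\pi^S := \hat{p}^S(y=1)$, so that Theorem \ref{theorem:effect_of_resampling} reads $\hat{p}^S(1\vert x) = q\,\pi^S(1-\pi)\big/ D$ with denominator $D := \pi(1-q) + \pi^S(q-\pi)$. Since the setting supplies confidence predictions $\hat{p}(1\vert x)\in(0,1)$, I may take $q\in(0,1)$, so that $q>0$ and $1-q>0$; the degenerate cases $q\in\{0,1\}$ are fixed points of the identity and are excluded precisely because a \emph{strict} inequality is claimed.

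The first genuine step is to verify that $D>0$, because the direction of an inequality is only preserved when one cross-multiplies by a positive quantity. Rewriting the denominator as $D = \pi(1-\pi^S) + q(\pi^S-\pi)$ makes this immediate: $\pi^S<1$ gives $\pi(1-\pi^S)\ge 0$, while $q>0$ together with the hypothesis $\pi^S>\pi$ gives $q(\pi^S-\pi)>0$, hence $D>0$.

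With positivity of both $D$ and $q$ in hand, the claim $\hat{p}^S(1\vert x)>\hat{p}(1\vert x)$ is equivalent, after dividing by $q$ and clearing the denominator, to $\pi^S(1-\pi) > D$. I would then expand both sides and collect terms; the difference telescopes neatly into the factored form
\begin{align*}
    \pi^S(1-\pi) - D = (\pi^S-\pi)(1-q).
\end{align*}
Both factors are strictly positive — $\pi^S-\pi>0$ by hypothesis and $1-q>0$ since $q<1$ — so the right-hand side is positive and the desired inequality follows.

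I do not anticipate a real obstacle, as the statement is essentially an algebraic corollary of Theorem \ref{theorem:effect_of_resampling}. The only point requiring care is sign management, namely confirming $D>0$ before cross-multiplying and observing that strictness hinges on $q\in(0,1)$: at a pure leaf of a tree where $\hat{p}(1\vert x)\in\{0,1\}$, the identity returns the same boundary value and the inequality would collapse to an equality.
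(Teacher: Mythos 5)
Your proof is correct and takes essentially the same route as the paper's: both substitute the identity from Theorem \ref{theorem:effect_of_resampling} and reduce the claim to showing the numerator $\hat{p}^S(y=1)(1-\hat{p}(y=1))$ strictly exceeds the denominator, using exactly the hypotheses $\hat{p}^S(y=1)>\hat{p}(y=1)$ and $1-\hat{p}(1\vert x)>0$. The only difference is presentational — the paper bounds the denominator term-wise by replacing $\hat{p}(y=1)$ with $\hat{p}^S(y=1)$, whereas you factor the difference as $(\pi^S-\pi)(1-q)$ — and your explicit checks that $D>0$ and $q\in(0,1)$ (needed for strictness) are sound points that the paper's proof leaves implicit.
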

    \begin{proof}\phantom{\qedhere}
        See Appendix \ref{proof:lemma:res}.
    \end{proof}
    
    Following the standard rules of differentiation we can further analyse the bias indicated by Lemma \ref{lemma:bias_resampling}.
    \begin{lemma} \label{lemma:skew_resampling}
        Let $S$ be a consistent resampling scheme and $(x,y)$ arbitrary data with $\hat{p}(x)>0$ and $\hat{p}^S(x)>0$. Then
        \begin{align*}
            \frac{\partial\hat{p}^S(1\vert x)}{~\partial \hat{p}(1\vert x)}>0 ~~~\text{ and } ~~~  
            \frac{\partial^2\hat{p}^S(1\vert x)}{~\partial \hat{p}(1\vert x)^2}<0
        \end{align*}
        holds.
    \end{lemma}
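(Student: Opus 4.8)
The plan is to start from the explicit formula for $\hat{p}^S(1\vert x)$ furnished by Theorem \ref{theorem:effect_of_resampling} and read it as a one-variable function of $p := \hat{p}(1\vert x)$. The two class priors $\hat{p}(y=1)$ and $\hat{p}^S(y=1)$ are global quantities that do not depend on the individual contract $x$, so they may be held fixed while differentiating in $p$. Writing $a := \hat{p}(y=1)$ and $b := \hat{p}^S(y=1)$, I would first expand and collect the denominator of the formula,
\begin{align*}
    a(1-p) + b(p-a) = a(1-b) + (b-a)\,p,
\end{align*}
so that the map assumes the M\"obius form
\begin{align*}
    p \longmapsto \hat{p}^S(1\vert x) = \frac{C\,p}{D + E\,p}, \qquad C := b(1-a),\quad D := a(1-b),\quad E := b-a.
\end{align*}

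Next I would differentiate this rational function directly. A short computation gives
\begin{align*}
    \frac{\partial \hat{p}^S(1\vert x)}{\partial \hat{p}(1\vert x)} = \frac{CD}{(D+E\,p)^2}, \qquad \frac{\partial^2 \hat{p}^S(1\vert x)}{\partial \hat{p}(1\vert x)^2} = \frac{-2\,CDE}{(D+E\,p)^3},
\end{align*}
where the cross terms $C E p$ in the numerator of the first derivative cancel; this cancellation is precisely what makes the signs transparent and reduces the claim to sign-checking the constants $C$, $D$, $E$ together with the denominator.

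It then remains to invoke the constraints. Since $a$ and $b$ are priors of an existing minority and majority class, we have $a,b\in(0,1)$, giving $C=b(1-a)>0$ and $D=a(1-b)>0$; and since resampling enlarges the minority share we have $b>a$, whence $E=b-a>0$. The only remaining point is positivity of the denominator $D+E\,p$ over the admissible range $p\in(0,1)$: as $p\mapsto D+E\,p$ is linear, it suffices to evaluate it at the endpoints, where it equals $a(1-b)>0$ at $p=0$ and $b(1-a)>0$ at $p=1$, so it is strictly positive on the whole interval. Combining these, $CD>0$ forces the first derivative to be strictly positive, while $-2CDE<0$ forces the second derivative to be strictly negative, which is exactly the asserted strict monotonicity and concavity.

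I expect the main (and only mild) obstacle to be the bookkeeping around the sign of $E=b-a$: the statement of the lemma does not restate the inequality $\hat{p}^S(y=1)>\hat{p}(y=1)$, so I would make explicit that this is the standing assumption of the resampling setting, holding for any scheme of Definition \ref{def:resampling} as noted in the discussion preceding Lemma \ref{lemma:bias_resampling}, and that the concavity genuinely relies on it. Everything else is a routine one-variable differentiation once the M\"obius form has been exposed.
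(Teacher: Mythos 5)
Your proposal is correct and is exactly the route the paper intends: the paper gives no written-out proof of this lemma, stating only that it follows from Theorem \ref{theorem:effect_of_resampling} ``following the standard rules of differentiation,'' and your M\"obius-form computation with $C=b(1-a)$, $D=a(1-b)$, $E=b-a$ supplies precisely those details, including the endpoint check that the denominator stays positive on $(0,1)$. Your closing caveat is also well taken --- the lemma's statement indeed omits the hypothesis $\hat{p}^S(y=1)>\hat{p}(y=1)$, which is supplied by the discussion preceding Lemma \ref{lemma:bias_resampling} and is genuinely needed for the sign of the second derivative (though not for the first).
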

    
    Lemma \ref{lemma:bias_resampling} immediately implies that if $\hat{p}(1\vert x)$ is a consistent estimator, i.e. it $\mathbb{P}$-almost-surely converges to the true, latent probability $p(1\vert x)$ as the sample size $N\rightarrow \infty$, then $\hat{p}^S(1\vert x)$ formed on resampled data will not be consistent.
    In particular, if we look at Theorem \ref{theorem:effect_of_resampling} as a function of $\hat{p}(1\vert x)$ we can explicitly retrieve the relative bias introduced by resampling. As recorded in Lemma \ref{lemma:skew_resampling}, resampling in particular alters low probabilities $\hat{p}(1\vert x)$ and results, if plotted against the non-resampling classifier $\hat{p}$, in a concave curve strictly above the line defined by the identity $\hat{p}^S(1\vert x)=\hat{p}(1\vert x)$.\\ 
    Let us assume $\hat{p}(1\vert x)$ to be consistent. In the special case of perfectly balanced data, i.e. $\hat{p}^S(y=1)=\tfrac{1}{2}$,  with an original base rate $\hat{p}(y=1)$ and the confidence predictions $\hat{p}^S(1\vert x)$, we can then explicitly reconstruct unbiased confidence predictions by 
            \begin{align} \label{remark_resampling_bias_correction}
                \hat{p}(1\vert x) &= \hat{p}(y=1)\left(\hat{p}(y=1)+\frac{(1-\hat{p}^S(1\vert x))(1-\hat{p}(y=1))}{\hat{p}^S(1\vert x)}\right)^{-1}.
            \end{align}
    
    We conclude this section with remarks on resampling.
    \begin{remark}
        \begin{itemize}
            \item[i)]  In a rare event setting the event probability $p(1\vert x)$ is generally low, leading to a small region in the feature space $X(\Omega)$ to be classified as the minority event. Lemma \ref{lemma:bias_resampling} and \ref{lemma:skew_resampling} illustrate that resampling increases this region and, thus, allows to shift the decision boundary of a classifier. For label predictions, this has empirically been shown to improve frequentist metrics, see \cite{Chawla.2002, Weiss.2001}.
            \item[ii)] If the given data represents the ground truth in the sense that data $\lbrace (x_i,y_i)\rbrace_{i=1}^{N}$ are i.i.d. realizations from $(X,Y)$ that allow for a sound estimation, in the sense of $\mathbb{E}_{x\sim X}\left[ p(1\vert x)-\hat{p}(1 \vert x)\right] = 0$, then resampling introduces a bias to our confidence predictions $\hat{p}^S(y\vert x)$. Any risk measure applied to the distribution induced by $\hat{p}^S(y\vert x)$ will reflect this bias.
            \item[iii)] It is obvious that random-undersampling disregards most of the information contained in data of the majority class. A natural approach to mitigate the loss of information is to use bootstrapping to obtain multiple balanced, random-undersampled training data and apply ensemble techniques, which have shown superior performance e.g. in \cite{Wallace.11.12.201114.12.2011}. 
        \end{itemize}
    \end{remark}
\section{Data} \label{section:data}
There exists a large number of empirical work that aims to describe the nature of the surrender probability $p(1\vert x)$, where $x$ includes macroeconomic, microeconomic and contractual factors, see \cite{Aleandri.2017, Burkhart.2018, Milhaud.2011, Yu.2019, Kiesenbauer.2012, Eling.2014}. As the nature of $p(1\vert x)$ seems to vary between countries and types of policy, we do not aim to construct an optimal estimator $\hat{p}(1\vert x)$ for all life insurance business. Our experiments focus on endowment contracts only. Further, we look at individual findings of the literature to construct four latent surrender models $p$ of varying nature and complexity, which we call \textit{surrender profiles}. Then, for each surrender profile we estimate $p$ with a particular focus on the consistency of estimator $\hat{p}$ and the effect of resampling. In the following, we summarize how data $\lbrace(x_i,y_i)\rbrace_{i=1}^N\sim(X,Y)$ is generated.


\paragraph{Simulation of contracts $x$.} Each endowment contract $x_i=(x_{i}^{(1)},\ldots,x_{i}^{(n)})$ is identified by the current age of the policyholder, the face amount, the duration of the contract, the elapsed duration, the remaining duration, the annual premium amount, the frequency of premium payments and the current calendar year. We start by generating a portfolio of $N_{\textcolor{\editcolor}{0}}=30'000$ endowment contracts at calendar year $0$, where we infer a realistic distributions of $X$ from \cite{Milhaud.2018}. In \cite{Milhaud.2018}, the authors provide detailed statistics on a portfolio of US whole life contracts. Given the similarity between endowment and whole life insurance\footnote{A whole life insurance is equivalent to an endowment insurance with infinite duration.}, the portfolio in \cite{Milhaud.2018} provides a realistic basis for our experiments. In Figure \ref{fig:portfolio_dist} we provide an illustration of the marginal distributions of the portfolio at calendar year $0$. Face amounts are calibrated such that annual premiums are consistent with \cite{Milhaud.2018}. To compute premiums we apply the equivalence-principle, see \cite{Fuhrer.2006, dickson_hardy_waters_2009}, under the following assumptions.
\begin{figure}[b!]
    \centering
    \includegraphics[width=\textwidth]{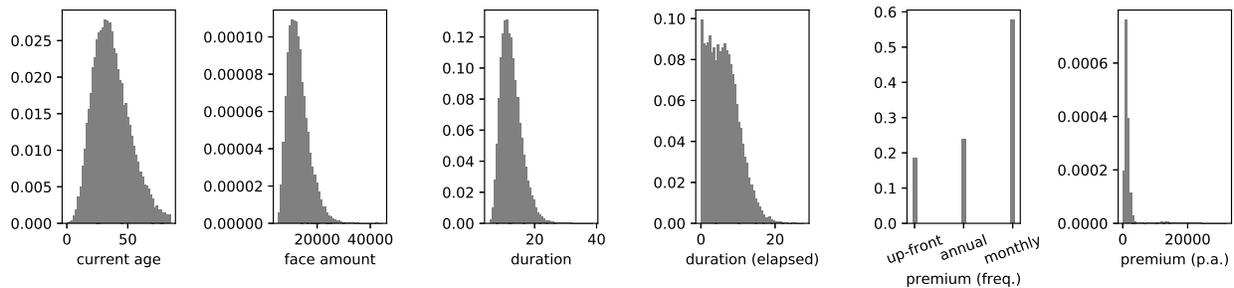}
    \caption{Marginal distribution of the portfolio at calendar year $0$.}
    \label{fig:portfolio_dist}
\end{figure}
\begin{itemize}
    \item The actuarial interest rate $i$ is constant with $i=0.02$.
    \item Expenses of the insurer for acquisition, administration and amortization of each contract are represented by $(\alpha, \beta, \alpha^{\gamma})=(0.025,0.03,0.001)$, see \cite{Burkhart.2018}.
    \item Premiums are paid up to the age of 67, the German age of retirement\footnote{To be precise, the age of retirement in Germany varies between 65 and 67 based on the date of birth. The basic retirement at 67 applies for individuals born on 1 January 1964 or later, see \cite{FED_Retirement_Germany}.}. Up-front premium payments are annualized linearly by the remaining time of premium payments.
    \item Mortality is described by a parametric survival model based on the Makeham Law, see \cite{dickson_hardy_waters_2009}. The $t$-year survival probability ${}_t p_a$ of individual aged $a$ is defined by 
    \begin{align} \label{eq_SUSM}
        {}_t p_a := \exp\left( -\text{A}t-\frac{\text{B}}{\log(\text{c})}
                        \text{c}^a(\text{c}^t-1) \right),
    \end{align}
    where we adopt the parameters of \cite{dickson_hardy_waters_2009} by setting the baseline hazard A$=0.00022$ and age related factors B$=2.7\cdot 10^{-7}$ and c$=1.124$. 
\end{itemize}

 More details on the simulation of data at calendar year $0$ can be found in the Appendix \ref{appendix:simulation_policies}. All \textcolor{\editcolor}{non-terminated} contracts are then iterated forward by increasing the age of the policyholder and the elapsed duration of the contract by the period length of one year. All other features are assumed to remain constant, i.e. changes to policies are not admissible. \textcolor{\editcolor}{The assumptions on when contracts terminate will be described in the subsequent paragraphs.} Additionally, at every iteration we simulate new business at the magnitude of $6\%$ of the existing business, which was empirically observed in the German market, see \cite{GDV.2019}. Contracts of the new business are simulated analogously to the portfolio at calendar year $0$, naturally with the restriction that the elapsed duration equals zero. \textcolor{\editcolor}{Over a time horizon of $15$ years we thereby generate about $N\in [260'000, 300'000]$ single year observations, depending on the specific assumptions, i.e. the surrender profile. The exact numbers and the imbalance is reported in the Appendix \ref{appendix:tables}, see Table \ref{table:data_review} \textcolor{\editcolor}{with $N=\vert\mathcal{D}_{\text{train}}\vert+\vert\mathcal{D}_{\text{test}}\vert$}. The next paragraphs describe the simulation of surrender events and introduce various surrender profiles.}

\paragraph{Meta model $p(1\vert \cdot)$.}

Given the input variable $x=(x^{(1)},\ldots,x^{(n)})$, we use a logistic regression model to obtain the true surrender probability $p(1\vert x)$ by
\begin{align}
    p(1\vert x) := \left(1+\exp(-\beta_0-\beta_x^{'}x)\right)^{-1},
\end{align}
where the regression coefficients $\beta_0\in\mathbb{R}$ and $\beta_x: \mathbb{R}^n\rightarrow\mathbb{R}^n$ are specified by the respective surrender profile. In contrast to its standard formulation, see \cite{Hastie.2017}, we denote coefficients $\beta_x$ as a function of $x$. We do so purely for notational convenience, as it allows us to set the effect $\beta_{x}^{(i)} x^{(i)}$ of the $i$th feature as piecewise constant without having to introduce hot-encoded auxiliary variables. 
This will be useful when describing our surrender profiles. Additionally, we note that the use of categorized risk drivers increases the complexity of the model $p(1\vert x)$, which is in line with the complexity of surrender activities.

\begin{figure}[bt] 
    \centering
    \begin{subfigure}{\textwidth}
        \centering
        \includegraphics[width=\textwidth]{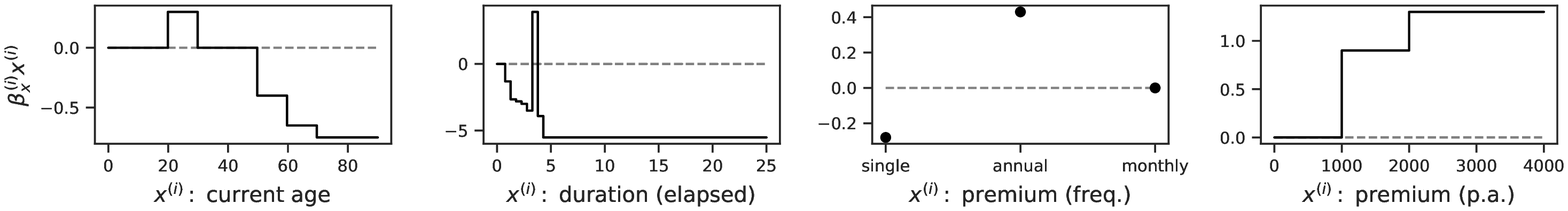}
        \subcaption{Surrender profile $1$}
        \label{fig:surrender_profile_1}
    \end{subfigure}
    \begin{subfigure}{\textwidth}
        \centering
        \includegraphics[width=\textwidth]{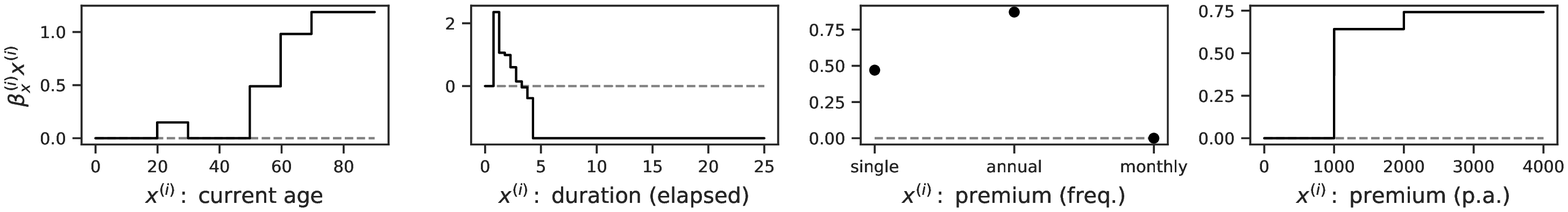}
        \subcaption{Surrender profile $2$}
        \label{fig:surrender_profile_2}
    \end{subfigure}
    \begin{subfigure}{\textwidth}
        \centering
        \includegraphics[width=\textwidth]{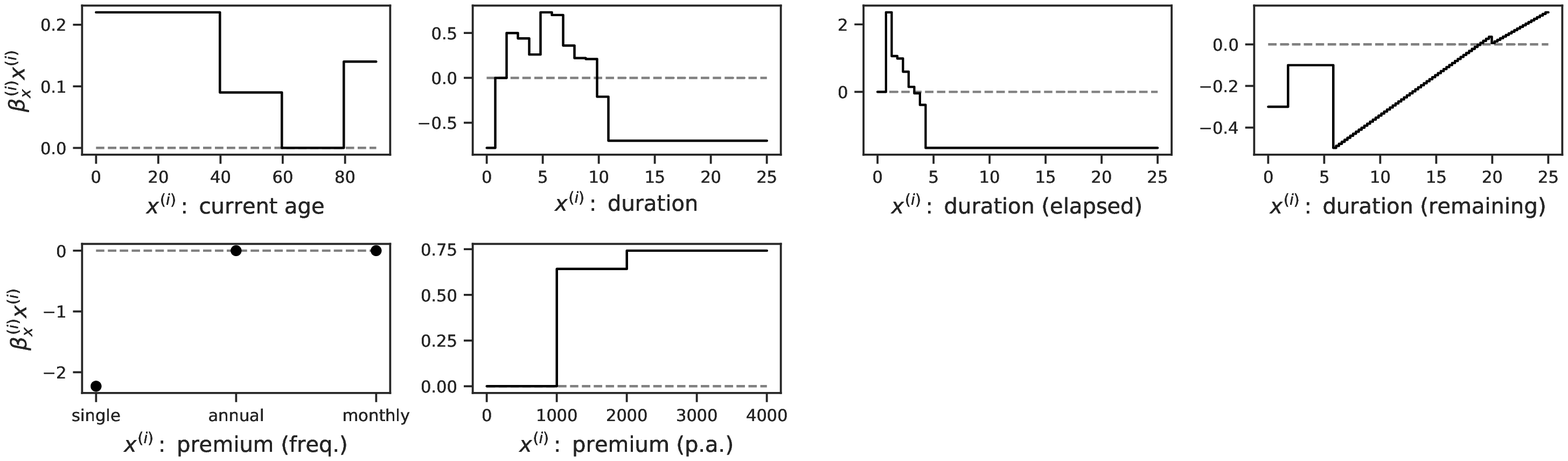}
        \subcaption{Surrender profile $3$}
        \label{fig:surrender_profile_3}
    \end{subfigure}
    \begin{subfigure}{\textwidth}
        \centering
        \includegraphics[width=\textwidth]{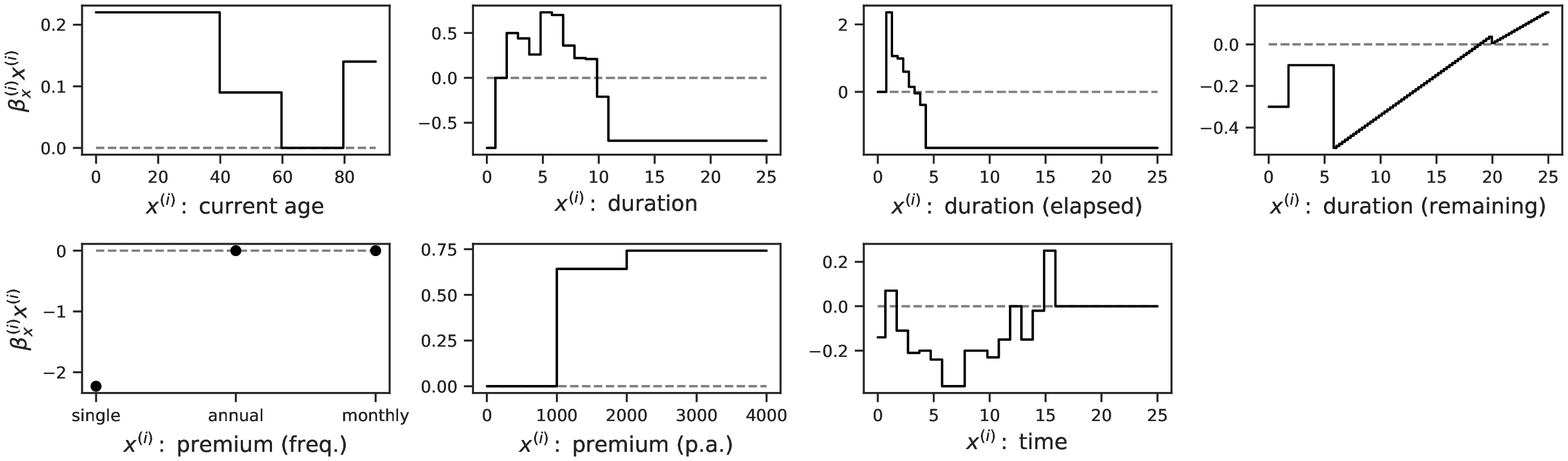}
        \subcaption{Surrender profile $4$}
        \label{fig:surrender_profile_4}
    \end{subfigure}
    \caption{Collection of distinct surrender profiles in our experiments.}
    \label{fig:surrender_profiles}
\end{figure}
In \cite{Milhaud.2011, Cerchiara.2008, Eling.2014}, the authors report specific values of $\beta_{x}^{(i)} x^{(i)}$ for a variety of categorized risk drivers $x^{(i)}$. We adopt and combine these values in four different surrender profiles, see Figure \ref{fig:surrender_profiles} for an illustration.  We consider the current age of a policyholder, the duration elapsed since underwriting the contract, the maximum duration of the contract, as well as the remaining duration until maturity, the frequency of premium payments and the annualized premium amount. \textcolor{\editcolor}{Most features have a locally constant effect on surrender, which translates to assuming that there exist of subgroups with homogeneous surrender behaviour. Also, from an approximation perspective locally constant functions are advantageous, as these simple, piece-wise constant functions can be viewed as a building block for estimating more intricate functions. } In surrender profile 4, the most complex setting, we additionally include a time factor represented by calendar years, which can be interpreted as a proxy of the economic environment at that time or alternatively as noise.  The range of profiles covers a variety of plausible dynamics, for example elevated surrender of young or higher ages, increased surrender at an early stage of contracts, for short-term contracts in general or higher surrender for higher amounts of annual premiums.  The intercept $\beta_0$ of each profile has been adjusted, such that observed surrender rates per calendar year fall into realistic ranges of 0.01 up to 0.05. In the Appendix \ref{appendix:surrender_profiles} we provide additional detail on the formulation of all surrender profiles.

\paragraph{Simulation of surrender activity $y$.} 
We observe a contract $x$ only up to its termination $T$, in our case either maturity, death or surrender. \textcolor{\editcolor}{We introduce the (potentially  censored) times of termination for all competing risks by $T^{(i)}:\Omega\rightarrow\mathbb{N}_{\geq 0}$, $i=1,2,3$. This is a slight misuse of notation. A more formal notation would include the earliest termination as a censoring on \textit{all} $T^{(i)}$. The reason for doing so is that we eventually simulate all $T^{(i)}$ separately and record the period and the type of earliest observation in $T=\min_{i} T^{(i)}$ and $J_T=\arg\!\min_i T^{(i)}$. } \\
Maturity events are easily identified by the duration \textcolor{\editcolor}{$d$} of a contract, \textcolor{\editcolor}{i.e. $T^{(3)}=d$}. To simulate the death of a policyholder with current age $a$ we numerically invert ${}_tp_a$ as a function of $t$ and obtain the time of \textcolor{\editcolor}{death} by $\left({}_{\textcolor{\editcolor}{u}}p_a\right)^{-1} \textcolor{\editcolor}{\in\mathbb{R}_{\geq 0}}$, $u\sim\mathcal{U}(0,1)$. \textcolor{\editcolor}{While $T^{(2)}$ only records the year of death, we also record the exact, non-integer time $({}_up_a)^{-1}$ to check which competing event is observed first.} Naturally, \textcolor{\editcolor}{the time of death is simulated} only once per policyholder. More detail on inverse transform sampling can be found in \cite{Glasserman.2010}. Surrender events are computed iteratively. \textcolor{\editcolor}{In simulation, the year of surrender is given by $T^{(1)}=\inf\lbrace t\in\mathbb{N}:~p(1\vert x_{t-1})\geq v \rbrace$ for $v\sim\mathcal{U}(0,1)$, where $x_t$ indicates the progression of a contract over time $t$}. 
If surrender and an alternative termination event are modeled to occur in the same year\textcolor{\editcolor}{, e.g. $T^{(1)}=5$ and $({}_up_a)^{-1}=4.2$}, we assume uniformity of the events, similar to fractional survival probabilities in \cite{Fuhrer.2006}. Given time $t\in(0,1)$ until the alternative termination event occurs\textcolor{\editcolor}{, e.g. $t=0.2$ as in the previous example}, we assume that the surrender event occurs with probability $t$ prior to the competing risk, i.e. maturity or death.

\paragraph{Data preparation.}
We one-hot encode the categorical feature 'frequency of premium payments' and apply min-max scaling to scale all contracts $x$ to the range of $[\textcolor{\editcolor}{-1},1]^n$. For the sake of readability, \textcolor{\editcolor}{we denote} the raw and the scaled contract both by $x$. Next, we split our data $\lbrace (x_i,y_i) \rbrace_{i=1}^{N}$ in time at the calendar year at which at least $70\%$ of data has been observed. Depending on the specific surrender profile, this split occurs at calendar years $t\in\lbrace 8,9 \rbrace$ and results in training data $\mathcal{D}_{\text{train}}$ and test data $\mathcal{D}_{\text{test}}$. Note that a split in time imitates a practical setting, where we test the model from a going concern perspective. A random train-test split would result in training on future data. Moreover, a split in time enables us to detect a potential bias of our model as the decomposition of the portfolio changes. Although we model new business to follow a stationary distribution, respectively all features but elapsed duration and the current calendar year, the inclusion of old business causes the composition of the portfolio to be non-stationary, e.g. with an increasing share of older policyholders or older contracts.\\ 
In the Appendix \ref{appendix:tables}, see Table \ref{table:data_review}, we additionally provide a summary of the number of one-year observations in training and test set, the imbalance of surrender and non-surrender events and the size of the training data set after random-undersampling or SMOTE-resampling to perfect balance. Without resampling, surrender events contribute a share of $0.01$ up to $0.045$ of all observations, which highlights the rare event character of our setting.

\section{\textcolor{\editcolor}{Modeling approaches}} \label{section:models} 
\textcolor{\editcolor}{For our analysis,} we \textcolor{\editcolor}{investigate} three commonly used model types, the logistic regression, tree based classifiers and neural networks, in their bagged and boosted form. \textcolor{\editcolor}{In theory, all three models have the capacity to fit the underlying, true surrender profiles arbitrarily close. While the logistic regression - given proper data preprocessing - coincides with the meta-model, standard tree-based classifiers by definition model locally constant probabilities. This implicit prior assumption matches our latent surrender profiles. Furthermore, neural networks serve as universal approximators, see \cite{Cybenko1989}, and can fit a large number of linear regions, see e.g. \cite{Montufar.2014}. Therefore, they are able to capture the dynamics of $p(1\vert x)$. At the same time, the optimization of a neural network is in general NP-hard, see e.g. \cite{Blum.1992}. So the question will be whether in our setting the proposed models can be optimized to identify the underlying policyholder behaviour.} As a substitute for a thorough exploratory data analysis, for each surrender profile we restrict the input $x$ to the estimator $\hat{p}(1\vert x)$ to the components $x_i$ that actually are part of the latent surrender model $p(1\vert x)$. \textcolor{\editcolor}{For the logistic regression we additionally provide explicit, polynomial feature engineering, to give the model more flexibility. Redundant features may be dropped due to an optional $L_1$-regularization. Hierarchical models such as neural networks or classification trees include implicit feature engineering. } Let us briefly comment on the specifics of the estimators\textcolor{\editcolor}{, including the specific parametrization found via extensive hyperparameter search. At this point, we do not apply resampling to the data set. }

\paragraph{\textcolor{\editcolor}{Approach of the hyperparameter search.}} \textcolor{\editcolor}{All following choices of hyperparameters are found by applying the package 'hyperopt', see \cite{bergstra2015hyperopt}, with its bayesian 'tpe.suggest' search algorithm and the cross-entropy as the loss function. For the logistic regression, random forest and XGBoost we take $128$ iterations and a $3$-fold cross-validation. Due to computational costs, for neural networks we do a simple split, where we set $30\%$ of the training data aside for validation, and decrease the number of iterations to $56$. This is justified as prior information of initial experiments, as e.g. practical depths and widths, allows us to reduce the search space. For all other models we primarily utilize the default space of hyperparameters suggested by \cite{bergstra2015hyperopt}. Specific changes include altering the maximum depth of a tree to a uniformly distributed integer value $d\in\lbrace 2,\ldots,20 \rbrace$ (random forest) and $d\in\lbrace 1,\ldots,11 \rbrace$ (XGBoost) or additional rates $r\in\lbrace 0.1, 0.2, \ldots, 0.7 \rbrace$ for the maximum number of features ('max$\_$features') considered during a split. More detail will be provided in the paragraph below on tree based methods.}

\paragraph{Baseline model.}
To set a baseline \textcolor{\editcolor}{performance}, we introduce a classifier with a constant surrender rate. This provides practical supervision whether subsequent classifiers are plausible and whether their increasing complexity is justified. Given a set of training data $\mathcal{D}_{\text{train}}$, which differs for each surrender profile, we define the confidence prediction $\hat{p}$ of the baseline model by
    \begin{align*}
        \hat{p}(1\vert x^{\prime}) &\equiv \frac{1}{\vert\mathcal{D}_{\text{train}} \vert}\sum_{(x,y)\in\mathcal{D}_{\text{train}}}y~~,
    \end{align*}
where $x^{\prime}$ presents an arbitrary contract.

\paragraph{Logistic regression.}
The structure of the logistic regression is given by $p(1\vert x) := \left(1+\exp(-\beta_0-\beta^{'}x)\right)^{-1}$, with intercept $\beta_0\in\mathbb{R}$ and coefficients $\beta\in\mathbb{R}^p$. In contrast to the meta model \textcolor{\editcolor}{in Section \ref{section:data}}, coefficients $\beta$ are constants. Providing the estimator with the true categories of all components of $x$, e.g. indicating ages 20-30 to be highly susceptible to surrender, seems unrealistic in a practical rare event setting. Instead, we feature engineer higher degree inputs $(x_{i}^{(j)})^k$, $k=1,\ldots,\textcolor{\editcolor}{4}$, for each feature $x_{i}^{(j)}$. \textcolor{\editcolor}{Note that for all surrender profiles the meta model for surrender does not consider interactions between features, which is why we do not engineer inputs of the form $(x_{i}^{(l)})^k(x_{i}^{(n)})^m$.} \textcolor{\editcolor}{Degrees of orders $k>4$ are not considered, as work in e.g. \cite{Weiss.2019, Krah.2018} suggests these to be linked with numeric instability and poor explanatory power. Also, additional features $(x_{i}^{(j)})^k$ with degrees $k>4$ would amplify multicollinearity in the data. } \textcolor{\editcolor}{Next, we} apply $L_{\textcolor{\editcolor}{p}}$-regularization $\sum_{i>0}\textcolor{\editcolor}{\vert}(\beta^{(i)}\textcolor{\editcolor}{\vert}^{\textcolor{\editcolor}{p}}$ \textcolor{\editcolor}{with $p\in\lbrace 1,2\rbrace $, where $L_1$-regularization provides implicit feature-selection.} \textcolor{\editcolor}{The inverse of the regularization strength ('$C$') and type of this regularization ('penalty') are then calibrated during a hyperparameter search and reported in Table \ref{table:hps:logit}. We observe little regularization for profiles 1 and 3, medium regularization for profile 2 and comparably high regularization for profile 4. Further, with the exception of profile 2, all profiles apply a $L_2$-regularization.} At last, we combine $N_{\text{bag}}=10$ estimators to an ensemble model\footnote{For completeness, we also tested adaptive boosting as in \cite{RobertE.Schapire.1999} on polynomial features, but dropped it due to poor performance.}, since a single logistic estimator is highly volatile in a rare event setting, see again \cite{Hastie.2017}. \textcolor{\editcolor}{Larger values of $N_{\text{bag}}$ showed no further improvements in our experiments.}  

\begin{table}[hptb]
    \centering
    \begin{tabular}{l| cc}
profile &            C &  penalty  \\
\midrule
1 &  1.26e+08 &         l2  \\
2 &      4.05 &          l1  \\
3 &  8.22e+06 &          l2  \\
4 &     0.22 &         l2  \\
\bottomrule
\end{tabular}

    \caption{\textcolor{\editcolor}{Hyperparameters for logistic regression, in the notation of 'sklearn' \cite{scikit-learn}.}}
    \label{table:hps:logit}
\end{table}

\paragraph{Tree based classifiers.}
A tree based classifier provides a split of the feature space $X(\Omega)$ into disjoint regions $R_1,\ldots,R_m$. The surrender probability $\hat{p}(1\vert x)$ is then formed by relative frequencies per regions $R_k$. For an arbitrary data point $x^{\prime}\in R_k$ we look at the subset $\mathcal{D}_{R_k}=\lbrace (x,y)\in\mathcal{D}_{\text{train}} : x\in R_k \rbrace$ and assign
        \begin{align*}
            \hat{p}(1\vert x^{\prime}) =  \frac{1}{\vert \mathcal{D}_{R_k} \vert}\sum_{(x,y)\in \mathcal{D}_{R_k}} y~~.
        \end{align*}
For an individual classification and regression tree (CART), see \cite{Hastie.2017}, regions $R_k$ are the result of recursive, binary splits of the region $X(\Omega)$. At each recursion, a CART searches for the feature $x^{(i)}$ and the hyperplane $H:=\lbrace x:~x^{(i)}=c \rbrace$, $c\in\mathbb{R}$, such that the respective split by $H$ leads to a maximum reduction of the impurity. We measure the impurity of a region $R$ by the binary cross-entropy\footnote{We note that the alternative \textit{Gini index}, see \cite{Hastie.2017}, yielded comparable results in our experiments.}. \textcolor{\editcolor}{Next, we use ensemble techniques and form a random forest classifier as e.g. in \cite{Hastie.2017} and the XGBoost classifier, a gradient boosting decision tree, see \cite{Chen.2016}. Both models are subject to a hyperparameter search for all surrender profiles.} \\
\textcolor{\editcolor}{For the random forest we tune the number of trees ('n$\_$estimators'), the maximum depth of each tree ('max$\_$depth'), how many samples are required for an iteration of splitting ('min$\_$samples$\_$split') and how many samples at least have to lie in the final layer of each tree ('min$\_$samples$\_$leaf'). Further, we optimize which fraction ('max$\_$features') of the features is considered during the splitting and whether to train individual trees on bootstrapped training data ('bootstrap'). Results in Table \ref{table:hps:rf} show medium-sized individual trees, which consider roughly $50\%$-$70\%$ of the features, when looking for an optimal split. Notable exceptions are profile 2, where all features are considered, when looking for a split, and profile 1, where no bootstrapping is applied. As the underlying surrender profiles become more complex, we see the number of individual trees increase. For the hyperparameters 'min$\_$samples$\_$leaf' and 'min$\_$samples$\_$split' we found the default values to work best, which yield little regularization.} 

\begin{table}[hptb]
    \centering
    \begin{tabular}{l|cccccc}
profile & bootstrap & max\_depth & max\_features & min\_samples\_leaf & min\_samples\_split & n\_estimators \\
\midrule
1 &     False &         7 &          0.6 &                1 &                 2 &          336 \\
2 &      True &         5 &         1 &                1 &                 2 &          403 \\
3 &      True &         8 &          0.7 &                1 &                 2 &         1303 \\
4 &      True &         8 &          0.5 &                1 &                 2 &         1202 \\
\bottomrule
\end{tabular}

    \caption{\textcolor{\editcolor}{Hyperparameters for random forest, in the notation of 'sklearn' \cite{scikit-learn}.}}
    \label{table:hps:rf}
\end{table}

\textcolor{\editcolor}{
In a similar fashion, for XGBoost  we look at the number of trees ('n$\_$estimators'), the maximum depth of individual trees ('max$\_$depth'), the strength of $L_2$- and $L_1$-regularization ('reg$\_$lambda' and 'reg$\_$alpha') and which portion of the data individual weak learners should be formed on ('subsample'). Further, we look at the fraction of features considered during splitting trees ('colsample$\_$bylevel'). The parameters 'gamma', 'learning$\_$rate' and 'min$\_$child$\_$weight' provide additional, regularizing parameters relating to the minimum loss reduction, a shrinkage factor per step and a minimum measure of importance to justify a split. For more detail we refer the reader to the documentation of \cite{Chen.2016} in the respective python package. \\
The results are presented in Table \ref{table:hps:xgb}. Most notably, we see very shallow tree structures of depths 1 up to 3 with relatively many individual trees. The exception of profile 3 with only 320 individual trees shows a comparably strong $L_1$-regularization. The shallow structures seem to be induced by the minimum weights required during splitting, i.e. 'min$\_$child$\_$weight', which are much higher than the default value of $1$. Lastly, profiles 1 and 4 show very conservative and low learning rates, in comparison to the default value of $0.3$.  }

\begin{table}[hptb]
    \centering
    \begin{minipage}{0.75\textwidth}
    \begin{flushleft}
        \begin{tabular}{l|cccccccccc}
        profile&  colsample\_bylevel &       gamma &  learning\_rate &  max\_depth &  min\_child\_weight  \\
        \midrule
        1 &           0.80 &            4.58 &       0.01 &        3 &              10 &          \\
        2 &           0.52 &            1.45 &       0.30 &        1 &              70 &          \\
        3 &           0.62 &            1.72 &       0.33 &        1 &              51 &          \\
        4 &           0.99 &            2.57 &       0.03 &        2 &              10 &          \\
        \bottomrule
             \textcolor{white}{profile} &    n\_estimators &  reg\_alpha &  reg\_lambda &  subsample & &\\
            \midrule
        1     &                   920 &   0.00 &    1.52 &   0.71 & & \\
        2    &                    980 &   0.00 &    1.07 &   0.99 & &\\
        3   &                    320 &   0.74 &    1.37 &   0.96 & &\\
        4  &                    780 &   0.023 &    1.00 &   0.64 & &\\
        \bottomrule
        \end{tabular}
    \end{flushleft}    
\end{minipage}

    \caption{\textcolor{\editcolor}{Hyperparameters for XGBoost.}}
    \label{table:hps:xgb}
\end{table}

\paragraph{Neural network classifiers.}
\textcolor{\editcolor}{We again consider a bagged and a boosted architecture. All models will be trained with stochastic gradient descent on the cross-entropy loss, implemented by the 'adam' algorithm, see \cite{adam}. To mitigate overfitting and reduce runtime, we apply early stopping\textcolor{\editcolor}{, which aborts training} if the validation loss does not decrease for 25 epochs. We explore batch sizes per GPU\footnote{The training of neural models is distributed across eight Quadro RTX 8000.} of $64, 128$ and $256$  and learning rates of $10^{-i}$ with $i=2, 2.5, \ldots, 4$ for bagging and $i=0.5, 1, 1.5, 2$ for boosting, respectively. The range of learning rates is motivated by preliminary experiments. }\\ 
The bagged model consists of neural networks $\hat{p}^{(k)}(1\vert \cdot)$, with $k=1,\ldots,N_{\text{bag}}$, \textcolor{\editcolor}{and $N_{\text{bag}}=5$.} All models $\hat{p}^{(k)}$ have the same architecture. We allow feed-forward network architectures \textcolor{\editcolor}{with up to 3 hidden layers, each with width$_i\in \lbrace 10, 15, \ldots, 50\rbrace$ units, sigmoid or ReLU activation ('actv') and dropout between hidden layers, which is uniformly, randomly sampled with a rate of at most $0.5$.}. The output layer has a single unit and a sigmoid activation $\sigma$. The bagged estimator is then formed by $\hat{p}(1\vert x):= \tfrac{1}{N_{\text{bag}}}\sum_k \hat{p}^{(k)}(1\vert x)$. \textcolor{\editcolor}{The search for optimal hyperparameters is performed for a single model only, since the models $\hat{p}^{(k)}(1\vert \cdot)$ are trained independently. However, to improve model performance, we perform a final fine tuning, where we collective train $\hat{p}(1\vert x)$, as done e.g. in \cite{Kiermayer2020}. For this step we reduce the learning rate by half. \\
In Table \ref{table:hps:ann:bag} we report the choice of hyperparameters for the architecture and common quantities for the stochastic gradient descent training performed by the 'adam' algorithm, see \cite{adam}. We observe the models for all surrender profiles to prefer the ReLU activation and, with the exception of profile 2, a depth of 3 hidden layers is superior to shallower structures. Note that the number of linear regions in $\hat{p}^{(k)}$ grows exponentially in the depths and only polynomially in the width, see e.g. \cite{Montufar.2014}. Further, it is notable that for profile 2 and 4 significantly higher dropout rates are preferred, which generally can be thought of as stronger regularization in favour of improved generalization, see e.g. \cite{sriv.2014}. For profile 4 we also find a comparably lower batch size for training to be beneficial. The individual widths of layers, as well training parameters are rather homogeneous and within the selected range.} \\

\begin{table}[htb]
    \centering
    \begin{tabular}{l|cccccccc}
profile &  actv & batch\_size & depth &    dropout &       lrate  & width\_1 & width\_2 & width\_3 \\
\midrule
1 &  ReLU &        256 &     3 &  0.007 &  0.003 &      35 &      25 &      20 \\
2 &  ReLU &        128 &     2 &   0.208 &       0.001 &      50 &      20 &      n/a \\
3 &  ReLU &        128 &     3 &   0.034 &  0.003 &      20 &      25 &      50 \\
4 &  ReLU &         64 &     3 &   0.123 &       0.001 &      20 &      20 &      20 \\
\bottomrule
\end{tabular}

    \caption{\textcolor{\editcolor}{Hyperparameters for a bagged ensemble of neural networks. Abbrev.: activation function: 'actv', learning rate: 'lrate', number of hidden layers: 'depth', width of hidden layer i: 'width$\_$i'.}}
    \label{table:hps:ann:bag}
\end{table}

Additionally, we construct a boosted neural network. Motivated by \textcolor{\editcolor}{preliminary} experiments and \cite{Badirli.2020}, we choose shallow neural networks $p^{(k)}$, $k=1,\ldots,N_{\text{boost}}$ as weak learners with $N_{\text{boost}}=\textcolor{\editcolor}{5}$. \textcolor{\editcolor}{The choice of $N_{\text{boost}}$ is empirically motivated, as we observed no benefit of larger values.} For $k>1$, each network $p^{(k)}$ has one hidden layer with \textcolor{\editcolor}{a width of up to $200$ units\footnote{For computational reasons, we use only 28 trials for one round of hyperparameter tuning of the boosted network and choose a stratified approach. We explore six different widths in steps of $10$ units at a time. Greater widths are then explored iteratively in follow-up searches of again 28 trials if the previous optimal parameter is found at the upper border of the search space.}}, a ReLU activation in the hidden layer and a single output unit with a linear activation. We start the boosting procedure with the baseline rate of $\hat{p}^{(1)}(1\vert x):=\sigma^{-1}\left(\tfrac{1}{\vert \mathcal{D}_{\text{train}} \vert}\sum_i y_i\right)$ for training data $(x_i,y_i)\in\mathcal{D}_{\text{train}}$. We then iteratively add and interdependently train weak learners $p^{(k)}$ until we obtain the final boosted model $\hat{p}:\mathbb{R}^p\rightarrow (0,1)$ by
\begin{align}\label{eq:NN_boost_prediction}
    \hat{p}(1\vert x) := \sigma\left(\sum_{k=1}^{N_{\text{boost}}} \hat{p}^{(k)}(1\vert x) \right).
\end{align}
\textcolor{\editcolor}{Similarly to the bagged setting and as in \cite{Badirli.2020}, we apply a final fine tuning for the collective estimator $\hat{p}(1\vert x)$, with a learning rate cut by half. Note that in contrast to \cite{Badirli.2020}, which utilizes gradient boosting and hence a Taylor approximation, in \eqref{eq:NN_boost_prediction} we work with the exact cross-entropy loss. While this is computationally more expensive as we cannot train on residuals, it allows us to use the sigmoid function $\sigma$ in \eqref{eq:NN_boost_prediction} to ensure sound confidence predictions $\hat{p}(1\vert x)\in(0,1)$. Lastly, the choice of hyperparameters is reported in Table \ref{table:hps:ann:boost}. The most notable differences are the much higher widths for profiles 1 and 4. For our purpose, single-layer, weak learners seem sufficient. For more elaborate surrender behaviour, e.g. with a higher dimensional feature space, multiple-layer learners might be beneficial, as suggested by \cite{Badirli.2020, Montufar.2014}. }
\begin{table}[hpbt]
    \centering
    \begin{tabular}{l|ccccc}
profile &  actv & batch\_size &      lrate &   width \\
\midrule
1 &  ReLU &         64 &  0.003 &             80 \\
2 &  ReLU &        256 &   0.031 &             40 \\
3 &  ReLU &        128 &        0.01 &             20 \\
4 &  ReLU &        128 &  0.003 &            120 \\
\bottomrule
\end{tabular}

    \caption{\textcolor{\editcolor}{Hyperparameters for a boosted ensemble of neural networks. Abbrev.: activation function: 'actv', learning rate: 'lrate'.}}
    \label{table:hps:ann:boost}
\end{table}

\section{\textcolor{\editcolor}{Numerical Results}} \label{section:num_experiments}
Let us now evaluate the classifiers mentioned above for all four surrender profiles. We start with a hypothetical analysis, where we assume that we can access the true probability of surrender within the next year of an arbitrary contract $x$. While this quantity $p(1\vert x)$ is not available in practice it yet provides instructive insight into the classifier's performance \textcolor{\editcolor}{and the information we would like to retrieve from observable quantities}. Also, it allows us to compare the bias of resampling to Lemma \ref{lemma:bias_resampling} and \ref{lemma:skew_resampling} numerically. We begin with a qualitative analysis of bias and variance of the models before we provide statistics to quantify our observations and the effect of resampling. In the end, we look at the modelled mean surrender rate over calendar-time, including confidence bands, for all classifiers and all profiles. In summary, we will see XGBoost \textcolor{\editcolor}{and random forest} to be the superior model \textcolor{\editcolor}{for the given data}, closely followed by neural classifiers. \textcolor{\editcolor}{We are also able to} confirm the bias of resampling\textcolor{\editcolor}{, which was derived in Section \ref{section:rare_events},} numerically. \textcolor{\editcolor}{Further, as we add non-stationary noise to the surrender behaviour, we will observe tree based methods to be more stable than our alternatives.} \textcolor{\editcolor}{Lastly}, we will notice that model evaluation by a single value, e.g. cross-entropy, is insufficient as it ignores the non-stationarity of our data. \textcolor{\editcolor}{Therefore}, mean surrender rates with confidence bands as in Corollary \ref{cor:confidence_intervals} will provide a more comprehensive perspective.

\paragraph{\textcolor{\editcolor}{Evaluation with complete information.}} In Figure \ref{fig:pq_plots} we consider test data $(x,y)\in\mathcal{D}_{\text{test}}$ and plot the label predictions $\hat{p}(1\vert x)$ against the true, latent surrender probability $p(1\vert x)$ for all discussed models and all four surrender profiles.

\begin{figure}[htbp] 
        \centering
        \begin{subfigure}{.9\textwidth} 
            \centering
            \includegraphics[width=.95\textwidth]{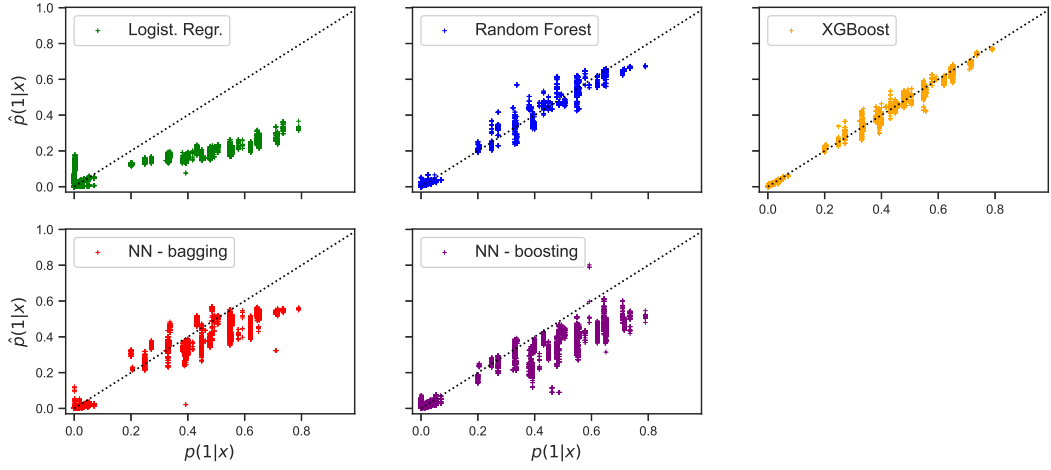}
            \subcaption{Profile 1}
            \label{fig:pq_1}
        \end{subfigure}
        \centering
        \begin{subfigure}{.9\textwidth} 
            \centering
            \includegraphics[width=.95\textwidth]{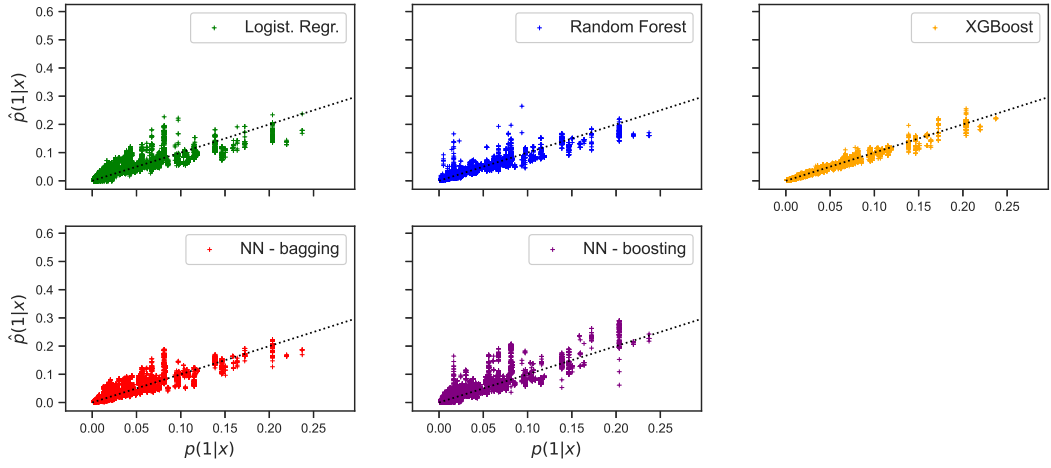}
            \subcaption{Profile 2}
            \label{fig:pq_2}
        \end{subfigure}
        \centering
        \begin{subfigure}{.9\textwidth} 
            \centering
            \includegraphics[width=.95\textwidth]{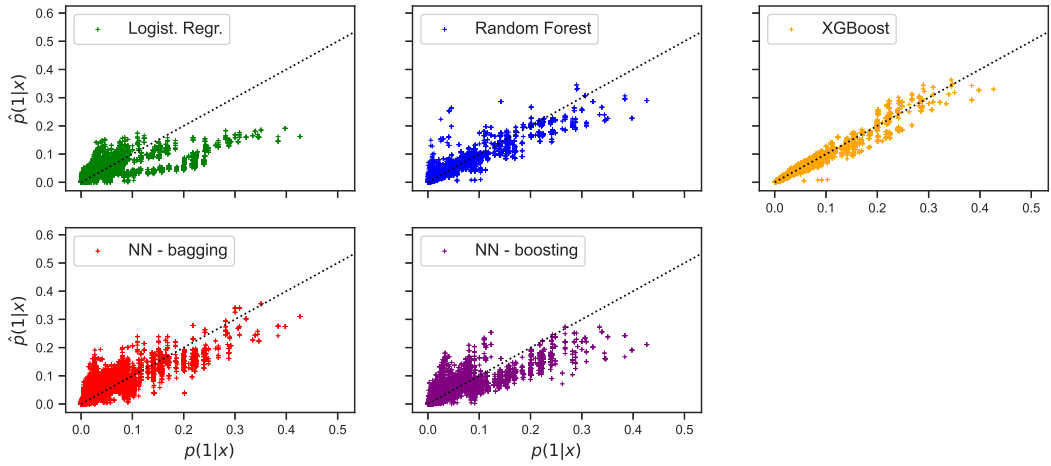}
            \subcaption{Profile 3}
            \label{fig:pg_3}
        \end{subfigure}
        \caption{$p$-$\hat{p}$ plots of the discussed classifiers for surrender profiles 1-4  \textcolor{\editcolor}{for $(x,\cdot)\in\mathcal{D}_{\text{test}}$}. }
\end{figure}
\begin{figure}[htbp]\ContinuedFloat
        \centering
        \begin{subfigure}{.9\textwidth} 
            \centering
            \includegraphics[width=.95\textwidth]{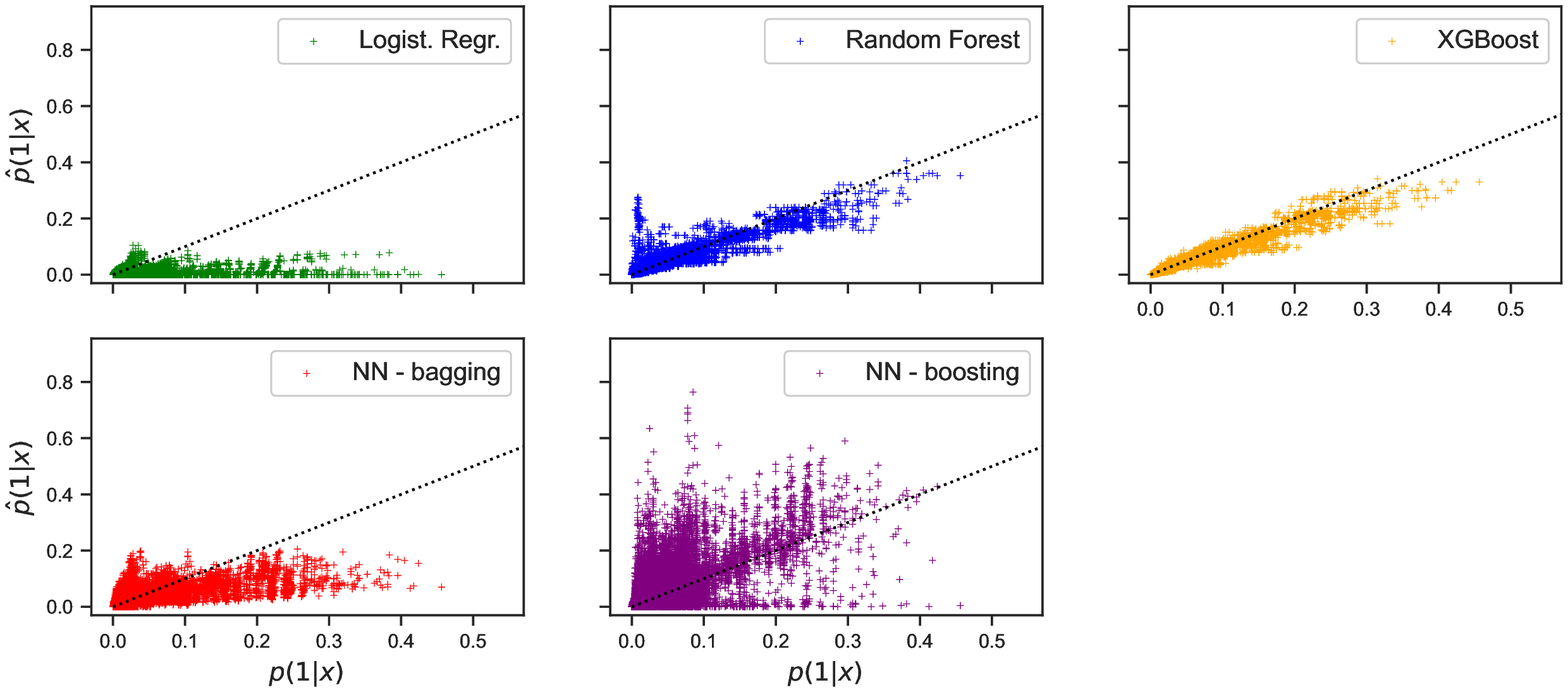}
            \subcaption{Profile 4}
            \label{fig:pq_4}
        \end{subfigure}
    \captionsetup{list=off,format=cont}
    \caption{$p$-$\hat{p}$ plots of the discussed classifiers for surrender profiles 1-4  \textcolor{\editcolor}{for $(x,\cdot)\in\mathcal{D}_{\text{test}}$}. }
    \label{fig:pq_plots}
\end{figure}

We observe that XGBoost \textcolor{\editcolor}{and random forest generally perform best and are approximately unbiased for all surrender profiles, i.e. their deviations from the 45°-line in Figure \ref{fig:pq_plots} are roughly symmetric.  Also, in comparison to other models, the errors made by XGBoost seem to exhibit a lower variance. This indicates that the respective bins of locally constant policyholder behaviour have successfully been identified. For profiles 1 and 3 non-tree-based methods show a slight bias and underestimate the surrender probability $p(1\vert x)$ for high risk contracts $x$. On the contrary, low risk contracts $x$ in profile 3 seem on average to be overestimated by non-tree-based methods. Recall that two lapse events $y_1,~y_2$ are realizations of different random variables $(Y\vert X=x_i)$ \textcolor{\editcolor}{with $i=1,2$}, given that the underlying contracts $x_1,x_2$ do not coincide. Hence, large or small values of $p(1\vert x)$ indicate the risk level of a specific contracts and not the tail of a single distribution. Further, neural networks seem to outperform the logistic regression, arguably since its implicit, hierarchical feature engineering is more adaptive than selective, polynomial feature engineering provided to the logistic regression\footnote{\textcolor{\editcolor}{Increasing the maximum degree of polynomially engineering features from 4 to 10, at the risk of numerical instability and multicollinearity, did generally not show any significant improvement for the logistic regression.}}. In particular in rare event settings, where little information on dominating features is accessible, this seems to be a desirable property. In profile 2, all models appear to be able to identify the dynamics of surrender approximately equally, with only slight differences in terms of the variance. The most distinct differences are found in profile 4, which has the same characteristics as profile 3 with additional, non-stationary noise that reverses its trend as we move from training to test data, see Figure \ref{fig:surrender_profile_4}. This provides a particularly challenging setting, where classifiers are forced to extrapolate, since the distribution of training and test data increasingly differs over time. In this setting, we clearly observe the tree based methods to provide more stable estimates, whereas other methods show poor results, see Figure \ref{fig:pq_4}. The logistic regression fails to generalize well to the test data, arguably since the underlying dynamics of surrender exceed the capacity of the model. The calibrated neural networks either underestimate the true probability $p(1\vert x)$, as for the bagged ensemble, or provide a high variance estimator. Instead of trying to mitigate this poor generalization, e.g. by increasing the size of the validation data or the strength of regularization, we take profile 4 as an use case and aim to identify non-stationary behaviour of policyholders in our practical model evaluation.}\\
 Recall that the illustrated $p$-$\hat{p}$ plot cannot be used for tuning hyperparameters, as the true probabilities $p(1\vert x)$ are unknown. \textcolor{\editcolor}{Therefore, we will now try to retrieve this latent information in a practical workflow. For completeness, we also quantify our observations from Figure \ref{fig:pq_plots} on bias on variance of all models.}\\


\begin{table}[bp] 
        \centering
        \begin{subtable}{\textwidth} 
            \centering
            \begin{tabular}{l|ccc:cc|ccc:cc}
Data & \multicolumn{5}{l}{Train.} & \multicolumn{5}{l}{Test} \\ 
Evaluation &    acc. &     $F_1$ & entropy & mae & Var &    acc. &    $F_1$ & entropy & mae & Var \\
\midrule
Baseline      &  0.9710 &  0.0000 &       0.1311 &           0.0496 &       1.08e-02 &  0.9546 &  0.0000 &       0.1890 &           0.0642 &       1.69e-02 \\
Logist. Regr. &  0.9710 &  0.0000 &       0.0812 &           0.0340 &       6.39e-03 &  0.9546 &  0.0000 &       0.1005 &           0.0359 &       7.75e-03 \\
Random Forest &  0.9774 &  \textcolor{gray}{0.6234} &       \textcolor{gray}{0.0461} &           0.0025 &       1.13e-04 &  0.9633 &  \textcolor{gray}{0.6121} &       0.0708 &           0.0035 &       1.54e-04 \\
XGBoost       &  \textcolor{gray}{0.9777} &  0.6213 &       0.0481 &           \textcolor{gray}{0.0019} &       \textcolor{gray}{3.56e-05} &  \textcolor{gray}{0.9636} &  0.6112 &       \textcolor{gray}{0.0706} &           \textcolor{gray}{0.0025} &       \textcolor{gray}{5.27e-05} \\
NN - bagging  &  0.9743 &  0.5112 &       0.0560 &           0.0096 &       1.37e-03 &  0.9607 &  0.4115 &       0.0740 &           0.0102 &       9.85e-04 \\
NN - boosting &  0.9733 &  0.3092 &       0.0603 &           0.0150 &       2.36e-03 &  0.9563 &  0.1120 &       0.0766 &           0.0143 &       1.97e-03 \\
\bottomrule
\end{tabular}

            \subcaption{Profile 1}
            \label{table:stats1}
        \end{subtable}
        \centering
        \begin{subtable}{\textwidth}
            \centering
            \begin{tabular}{l|ccc:cc|ccc:cc}
Data & \multicolumn{5}{l}{Train.} & \multicolumn{5}{l}{Test} \\ 
Evaluation &    acc. &     $F_1$ & entropy & mae & Var &    acc. &    $F_1$ & entropy & mae & Var \\
\midrule
Baseline      &  0.9835 &  0.0 &       0.0842 &           0.0169 &       7.84e-04 &  0.9869 &  0.0 &       0.0703 &           0.0147 &       2.70e-04 \\
Logist. Regr. &  0.9835 &  0.0 &       0.0724 &           0.0072 &       3.79e-04 &  0.9869 &  0.0 &       0.0616 &           0.0054 &       7.59e-05 \\
Random Forest &  0.9835 &  0.0 &       \textcolor{gray}{0.0665} &           0.0037 &       1.21e-04 &  0.9869 &  0.0 &       0.0603 &           0.0027 &       2.70e-05 \\
XGBoost       &  0.9835 &  0.0 &       0.0691 &           \textcolor{gray}{0.0020} &       \textcolor{gray}{3.65e-05} &  0.9869 &  0.0 &       \textcolor{gray}{0.0599} &           \textcolor{gray}{0.0013} &       \textcolor{gray}{6.15e-06} \\
NN - bagging  &  0.9835 &  0.0 &       0.0713 &           0.0058 &       3.66e-04 &  0.9869 &  0.0 &       0.0607 &           0.0040 &       5.81e-05 \\
NN - boosting &  0.9835 &  0.0 &       0.0718 &           0.0065 &       3.88e-04 &  0.9869 &  0.0 &       0.0612 &           0.0047 &       7.83e-05 \\
\bottomrule
\end{tabular}

            \subcaption{Profile 2}
            \label{table:stats2}
        \end{subtable}
        \centering
        \begin{subtable}{\textwidth}
            \centering
            \begin{tabular}{l|ccc:cc|ccc:cc}
Data & \multicolumn{5}{l}{Train.} & \multicolumn{5}{l}{Test} \\ 
Evaluation &    acc. &     $F_1$ & entropy & mae & Var &    acc. &    $F_1$ & entropy & mae & Var \\
\midrule
Baseline      &  0.9799 &  0.0000 &       0.0984 &           0.0206 &       8.84e-04 &  0.9838 &  0.0 &       0.0831 &           0.0185 &       5.48e-04 \\
Logist. Regr. &  0.9799 &  0.0000 &       0.0868 &           0.0131 &       5.74e-04 &  0.9838 &  0.0 &       0.0734 &           0.0106 &       3.50e-04 \\
Random Forest &  \textcolor{gray}{0.9800} &  \textcolor{gray}{0.0059} &       \textcolor{gray}{0.0738} &           0.0045 &       1.09e-04 &  0.9838 &  0.0 &       0.0684 &           0.0036 &       5.92e-05 \\
XGBoost       &  0.9799 &  0.0014 &       0.0790 &           \textcolor{gray}{0.0024} &       \textcolor{gray}{2.86e-05} &  0.9838 &  0.0 &       \textcolor{gray}{0.0678} &           \textcolor{gray}{0.0020} &       \textcolor{gray}{1.92e-05} \\
NN - bagging  &  0.9799 &  0.0000 &       0.0816 &           0.0080 &       2.45e-04 &  0.9838 &  0.0 &       0.0696 &           0.0061 &       1.47e-04 \\
NN - boosting &  0.9799 &  0.0000 &       0.0824 &           0.0087 &       3.19e-04 &  0.9838 &  0.0 &       0.0698 &           0.0069 &       1.75e-04 \\
\bottomrule
\end{tabular}

            \subcaption{Profile 3}
            \label{table:stats3}
        \end{subtable}
        \begin{subtable}{\textwidth}
            \centering
            \begin{tabular}{l|ccc:cc|ccc:cc}
Data & \multicolumn{5}{l}{Train.} & \multicolumn{5}{l}{Test} \\ 
Evaluation &    acc. &     $F_1$ & entropy & mae & Var &    acc. &    $F_1$ & entropy & mae & Var \\
\midrule
Baseline      &  0.9790 &  0.0000 &       0.1017 &           0.0212 &       9.25e-04 &  \textcolor{gray}{0.9841} &  0.0 &       0.0823 &           0.0189 &       5.46e-04 \\
Logist. Regr. &  0.9790 &  0.0000 &       0.0903 &           0.0147 &       6.65e-04 &  \textcolor{gray}{0.9841} &  0.0 &       0.2008 &           0.0150 &       8.25e-04 \\
Random Forest &  \textcolor{gray}{0.9791} &  \textcolor{gray}{0.0048} &       \textcolor{gray}{0.0767} &           0.0051 &       1.14e-04 &  \textcolor{gray}{0.9841} &  0.0 &       0.0674 &           0.0042 &       8.86e-05 \\
XGBoost       &  \textcolor{gray}{0.9791} &  0.0004 &       0.0812 &           \textcolor{gray}{0.0030} &       \textcolor{gray}{3.76e-05} &  \textcolor{gray}{0.9841} &  0.0 &       \textcolor{gray}{0.0665} &           \textcolor{gray}{0.0024} &       \textcolor{gray}{3.08e-05} \\
NN - bagging  &  0.9790 &  0.0000 &       0.0849 &           0.0107 &       3.56e-04 &  \textcolor{gray}{0.9841} &  0.0 &       0.0717 &           0.0093 &       3.58e-04 \\
NN - boosting &  \textcolor{gray}{0.9791} &  0.0026 &       0.0840 &           0.0090 &       3.33e-04 &  0.9838 &  0.0 &       0.1104 &           0.0141 &       8.58e-04 \\
\bottomrule
\end{tabular}

            \subcaption{Profile 4}
            \label{table:stats4}
        \end{subtable}
    \caption{Statistics of the discussed classifiers for surrender profiles 1-4. }
    \label{table:stats}
\end{table}

\paragraph{\textcolor{\editcolor}{Numeric evaluation.}} \textcolor{\editcolor}{Next,} we provide \textcolor{\editcolor}{statistics on} latent quantities, as the mean \textcolor{\editcolor}{absolute error} mae$(p,\hat{p})$ (in short: mae) and the variance (Var) of the errors $p(1\vert x)-\hat{p}(1\vert x)$, but also observable quantities as the accuracy (acc.), the $F_1$-score and the binary cross-entropy (entropy). All results are provided for training and test data separately. In each column we highlight the best value. Note that computing the variance of the errors $p(1\vert x)-\hat{p}(1\vert x)$ is fundamentally different from computing the variance of label predictions $\hat{p}(1\vert x)$. While the later is observable and \textcolor{\editcolor}{is minimized by a constant estimator}, it disregards the fact that every contract $x$ induces an individual random variable $(Y\vert X=x)$, which is to be modeled by $\hat{p}(1\vert x)$. Hence, the variance of predictions $\hat{p}(1\vert x)$ holds little value as it penalizes e.g. an unbiased classifier if true probabilities $p(1\vert x)$ evenly populate $[0,1]$.\\
\textcolor{\editcolor}{Table \ref{table:stats} confirms our observations on latent quantities visualized in Figure \ref{fig:pq_plots}. XGBoost shows the lowest mean and variance of absolute errors across all profiles, closely followed by the random forest. In terms of mean and variance, the neural networks outperform the logistic regression. In profile 4, the statistics confirm that the logistic regression and the neural booster both handle the induced stationary noise rather poorly. Overall, the we see significant improvements of the naive baseline model, with its mean absolute error being outperformed by the best cross-entropy model by about $1.5\%$ up $6.2\%$. The alternative models show very similar improvements.}\\
\textcolor{\editcolor}{Looking at the observable quantities, we first note very little differences in the accuracy of the models and $F_1$-score that are approximately $0$, except for profile 1. In their computation we use a default threshold of $0.5$. These observations are expected in a rare event setting and we will apply resampling to improve the $F_1$-score in the upcoming paragraph. For the cross-entropy, it is notable that the random forest throughout performs best on the training data, but does not generalize as well as the XGBoost classifier. Further, the cross-entropy reliably manages to sort the performance by model type, i.e. tree-based, followed by neural networks and logistic regression. However, the ranks of cross-entropy and mae values are not perfectly aligned, although highly correlated, see Table \ref{table:stats}. The final paragraph on time-dependent evaluations provide an additional diagnostic tool that aims to mitigate this discrepancy. First, however, we will try to improve the $F_1$-score, i.e. the balance between recall and specificity, by resampling and monitor distributional affects, in particular in view of Section \ref{section:rare_events} and Lemma \ref{lemma:bias_resampling} and \ref{lemma:skew_resampling}.}

\begin{remark}[ROC]
    \textcolor{\editcolor}{In addition to the statistics in Table \ref{table:stats}, 
    we inspected the ROC and precision-recall curves (PRC) and their corresponding AUC values of all models, evaluated on test data $\mathcal{D}_{\text{test}}$. Overall, the results indicated a very limited benefit of tuning the decision threshold $c\in[0,1]$, providing motivation for alternative approaches such as resampling. The respective figures are provided in the Appendix \ref{appendix:figures}. We find all models to be hardly distinguishable with respect to their ROC, which is insensitivity towards the underlying rare event character. For PRC, we generally observe rather flat curves with low AUC values of no more than $0.12$ for profiles 2,3 and 4, which indicate low $F_1$-scores across all choices of thresholds. For profile 1, we observe much higher AUC values of up to $1$ (ROC) and $0.6$ (PRC). By construction, the range of the true probabilities $p(1\vert x)$ in profile 1 is wider than for other profiles, see Figure \ref{fig:pq_plots}, which makes it easier to separate classes. At the same time, the PRC still fails to identify the significant underestimation of risk of the logistic regression, which we observed e.g. in Figure \ref{fig:pq_1}. Interestingly, for all profiles the AUC values of the PRC rank tree based methods first, followed by neural networks and logistic regression. }
\end{remark}

\paragraph{\textcolor{\editcolor}{The effect of resampling.}}
We retrain all models discussed \textcolor{\editcolor}{in Section \ref{section:models}, now using balanced data generated by} random undersampling or SMOTE resampling. \textcolor{\editcolor}{We employ the same set of hyperparameters as presented in Section \ref{section:models}, as the respective parametrizations have shown the capacity to capture the surrender profiles. Also, hyperparameter tuning for resampled data would be seed dependent, in particular for undersampling, and does not generalize well. For the sake of brevity, we do not present results for all four profiles in the undersampling and SMOTE resampling setting, but select profile 3. The results hereinafter are representative for all other combinations of profiles and resampling schemes and are in line with the theory in Section \ref{section:rare_events}.}
    
    \begin{table}[htb]
        \centering
        \begin{subtable}{\textwidth}
            \centering
            \begin{tabular}{l|ccccc|ccccc}
Data & \multicolumn{5}{l}{Train.} & \multicolumn{5}{l}{Test} \\ 
Evaluation &    acc. &     $F_1$ & entropy & mae & Var &    acc. &    $F_1$ & entropy & mae & Var \\
\midrule
Baseline      &  \textcolor{gray}{0.9799} &  0.0000 &       \textcolor{gray}{0.0984} &           \textcolor{gray}{0.0206} &           \textcolor{gray}{0.0009} &  \textcolor{gray}{0.9838} &  0.0000 &      \textcolor{gray}{ 0.0831} &           \textcolor{gray}{0.0185} &           \textcolor{gray}{0.0005} \\
Logist. Regr. &  0.6646 &  0.0856 &       0.5643 &           0.3564 &           0.0537 &  0.6996 &  0.0740 &       0.5096 &           0.3302 &           0.0527 \\
Random Forest &  0.7315 &  \textcolor{gray}{0.1118} &       0.4941 &           0.3140 &           0.0523 &  0.7619 &  0.0919 &       0.4529 &           0.2935 &           0.0502 \\
XGBoost       &  0.7426 &  0.1100 &       0.4909 &           0.3112 &           0.0503 &  0.7851 &  \textcolor{gray}{0.0991} &       0.4342 &           0.2849 &           0.0465 \\
NN - bagging  &  0.6923 &  0.0956 &       0.5357 &           0.3280 &           0.0631 &  0.7333 &  0.0826 &       0.4783 &           0.3018 &           0.0595 \\
NN - boosting &  0.6637 &  0.0863 &       0.5875 &           0.3664 &           0.0535 &  0.7021 &  0.0740 &       0.5299 &           0.3397 &           0.0522 \\
\bottomrule
\end{tabular}

            \subcaption{Resampling scheme: random undersampling}
            \label{table:undersampling_2}
        \end{subtable}
    \end{table}
    \begin{table}[htb]\ContinuedFloat
            \centering
            \begin{subtable}{\textwidth}
                \centering
                \begin{tabular}{l|ccccc|ccccc}
Data & \multicolumn{5}{l}{Train.} & \multicolumn{5}{l}{Test} \\ 
Evaluation &    acc. &     $F_1$ & entropy & mae & Var &    acc. &    $F_1$ & entropy & mae & Var \\
\midrule
Baseline      &  \textcolor{gray}{0.9799} &  0.0000 &       \textcolor{gray}{0.0984} &           \textcolor{gray}{0.0206} &           \textcolor{gray}{0.0009} &  \textcolor{gray}{0.9838} &  0.0000 &       \textcolor{gray}{0.0831} &           \textcolor{gray}{0.0185} &           \textcolor{gray}{0.0005} \\
Logist. Regr. &  0.6538 &  0.0843 &       0.5541 &           0.3482 &           0.0572 &  0.6883 &  0.0730 &       0.4976 &           0.3206 &           0.0559 \\
Random Forest &  0.7586 &  0.1151 &       0.4498 &           0.2829 &           0.0556 &  0.8065 &  0.1014 &       0.3795 &           0.2451 &           0.0495 \\
XGBoost       &  0.7663 &  0.1137 &       0.4571 &           0.2837 &           0.0545 &  0.8260 &  \textcolor{gray}{0.1064} &       0.3722 &           0.2428 &           0.0455 \\
NN - bagging  &  0.7523 &  \textcolor{gray}{0.1173} &       0.4523 &           0.2698 &           0.0684 &  0.7846 &  0.0908 &       0.4052 &           0.2417 &           0.0640 \\
NN - boosting &  0.7210 &  0.1024 &       0.4884 &           0.2977 &           0.0641 &  0.7535 &  0.0872 &       0.4359 &           0.2697 &           0.0616 \\
\bottomrule
\end{tabular}

                \subcaption{Resampling scheme: SMOTE}
                \label{table:SMOTE_2}
            \end{subtable}
        \caption{Statistics for profile 2 and models calibrated on resampled data. }
        \label{table:stats_resampling}
    \end{table}

    \begin{figure}[htb]
        \centering
        \begin{subfigure}{.9\textwidth} 
            \centering
            \includegraphics[width=\textwidth]{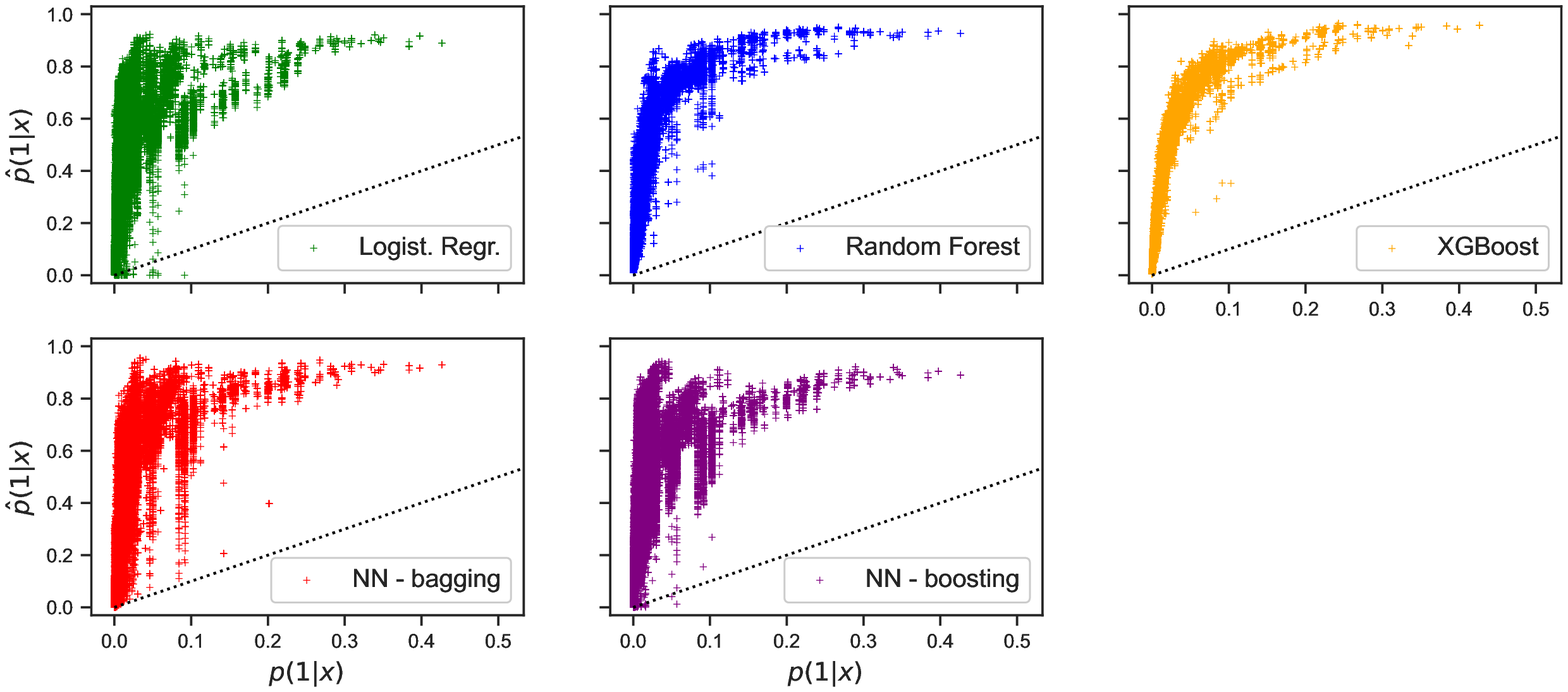}
            \subcaption{Resampling scheme: random undersampling.}
            \label{fig:undersampling_2}
        \end{subfigure}
    \end{figure}
    \begin{figure}[htb]\ContinuedFloat
        \centering
        \begin{subfigure}{.9\textwidth} 
            \centering
            \includegraphics[width=\textwidth]{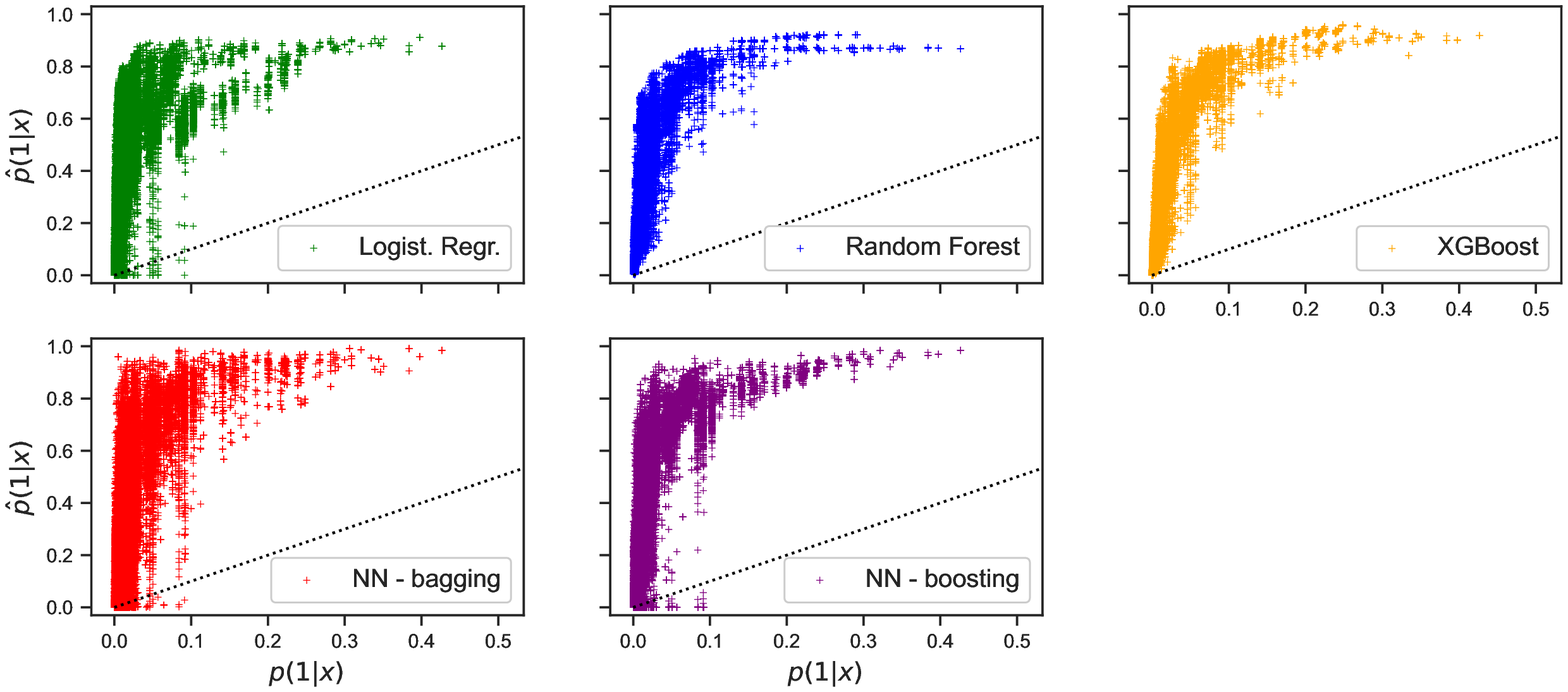}
            \subcaption{Resampling scheme: SMOTE.}
            \label{fig:SMOTE_2}
        \end{subfigure}
        \caption{$p$-$\hat{p}^S$ plots for surrender profile 2 for data $x\in\mathcal{D}_{\text{test}}$ and models formed on resampled data.}
        \label{fig:pq_resampling}
    \end{figure}

In Table \ref{table:stats_resampling} we revisit the evaluation of our classifiers, which now have been trained on resampled data. \textcolor{\editcolor}{Using a threshold $c=0.5$ to form label predictions}, we notice that \textcolor{\editcolor}{SMOTE} resampling indeed improves the $F_1$-score of all classifiers on the training as well as the test data, \textcolor{\editcolor}{compare with Table \ref{table:stats3}}. This means that \textcolor{\editcolor}{both resampling schemes} positively affect the balance between correct label predictions $\hat{y}=1$ and correctly classified observations $y=1$. While this might be the objective in situations where binary actions have to be taken, e.g. the decision to investigate for fraud, Table \ref{table:stats_resampling} also shows a downside \textcolor{\editcolor}{of resampling}. First, there is a trade-off between $F_1$-score and accuracy. Secondly and arguably more importantly, we observe that the cross-entropy as well as the mean and variance of the latent errors $\vert p(1\vert x)-\hat{p}^{S}(1\vert x)\vert$ have increased for all classifiers on training and test data, compared to training without resampling, see Figure \ref{table:stats3}. \\ 
\textcolor{\editcolor}{Lemma \ref{lemma:bias_resampling} and \ref{lemma:skew_resampling} in Section \ref{section:rare_events} give information about why the models do not capture the true probability of surrender properly. Looking at $p$-$\hat{p}^{S}$-plots in Figure \ref{fig:pq_resampling}, we can numerically affirm both lemmas and observe the overestimation of the surrender risk, as well as the concave shape of that bias due to resampling.} Note that both lemmas describe the relation between $\hat{p}$ and $\hat{p}^S$, \textcolor{\editcolor}{ whereas we depict the $\hat{p}$ and $p$, since the later is the latent quantity of interest.} \textcolor{\editcolor}{If we compare the logistic regression models in Figures \ref{fig:pg_3} and \ref{fig:pq_resampling}, we even observe that the bias introduced by resampling dominates in a way that models previously underestimating the true probability $p$, after resampling, now clearly overestimate the quantity. \\
For completeness, we also investigated receiver operating curves and precision recall curves and their respective AUC values of models trained on resampled training data $\mathcal{D}_{\text{train}}$ and evaluated on test data $\mathcal{D}_{\text{test}}$. However, we found neither global nor local improvements in the respective curves. In fact, the curves closely resemble the curves prior to resampling. This indicates that in our setting, where surrender is a rare and low-probability event, models trained on resampled data did not find a more selective separation of the surrender and non-surrender classes. Illustrations of the curves can be found in the Appendix \ref{appendix:figures}, Figure \ref{fig:roc_resampling}.} \\
\textcolor{\editcolor}{Altogether, the provided results strongly advise against the use of resampling if we are interested in an unbiased modeling of the true probability of surrender. Optimizing models based on frequentist performance measures, our main motivation for and a common benefit of applying resampling, turns out to be in conflict with our probabilistic objective. Low $F_1$-scores or low AUC values of precision-recall curves show to be natural consequence of the rare event setting with hardly separable classes and do not indicate poor model performance. In the next paragraph, we provide a final, probabilistic evaluation that complements the information of the cross-entropy loss. }

\paragraph{\textcolor{\editcolor}{Time-dependent evaluation.}}
At last, we omit resampling and evaluate all classifiers in a practical setting, where we cannot access the latent probabilities $p(1\vert x)$. We now take a time-series perspective and look at predicted mean surrender rates and their confidence bands over calendar-time, as proposed in Corollary \ref{cor:confidence_intervals}. \textcolor{\editcolor}{These are then compared to the true, observed surrender rates $\tfrac{1}{\vert Y\vert}\sum_{i\subset Y} y_i$, where $Y\subset \lbrace1, 2, \ldots, N\rbrace$ is an index set of contracts from a specific calendar year. } This concept uncovers the quality of the proposed classifiers in a going concern setting and respects that an insurance portfolio \textcolor{\editcolor}{or the behaviour of policyholders} might change over time, leading to non-constant risk exposure e.g. in term of a value-at-risk measure. \textcolor{\editcolor}{Overall, we will observe the superior performance of XGBoost and random forest, but now in a more detailed and informative way than examining a single cross-entropy value, as in Table \ref{table:stats}.} The \textcolor{\editcolor}{new evaluation is} illustrated in Figure \ref{fig:msr_plots}, where the split in calendar time between train and test data is indicated by a vertical, gray line. Upper and lower $\alpha=0.95$ confidence bands, as in Corollary \ref{cor:confidence_intervals}, are displayed by up- respectively down-facing triangles.

\begin{figure}[htb] 
        \centering
        \begin{subfigure}{.49\textwidth} 
            \centering
            \includegraphics[width=.95\textwidth]{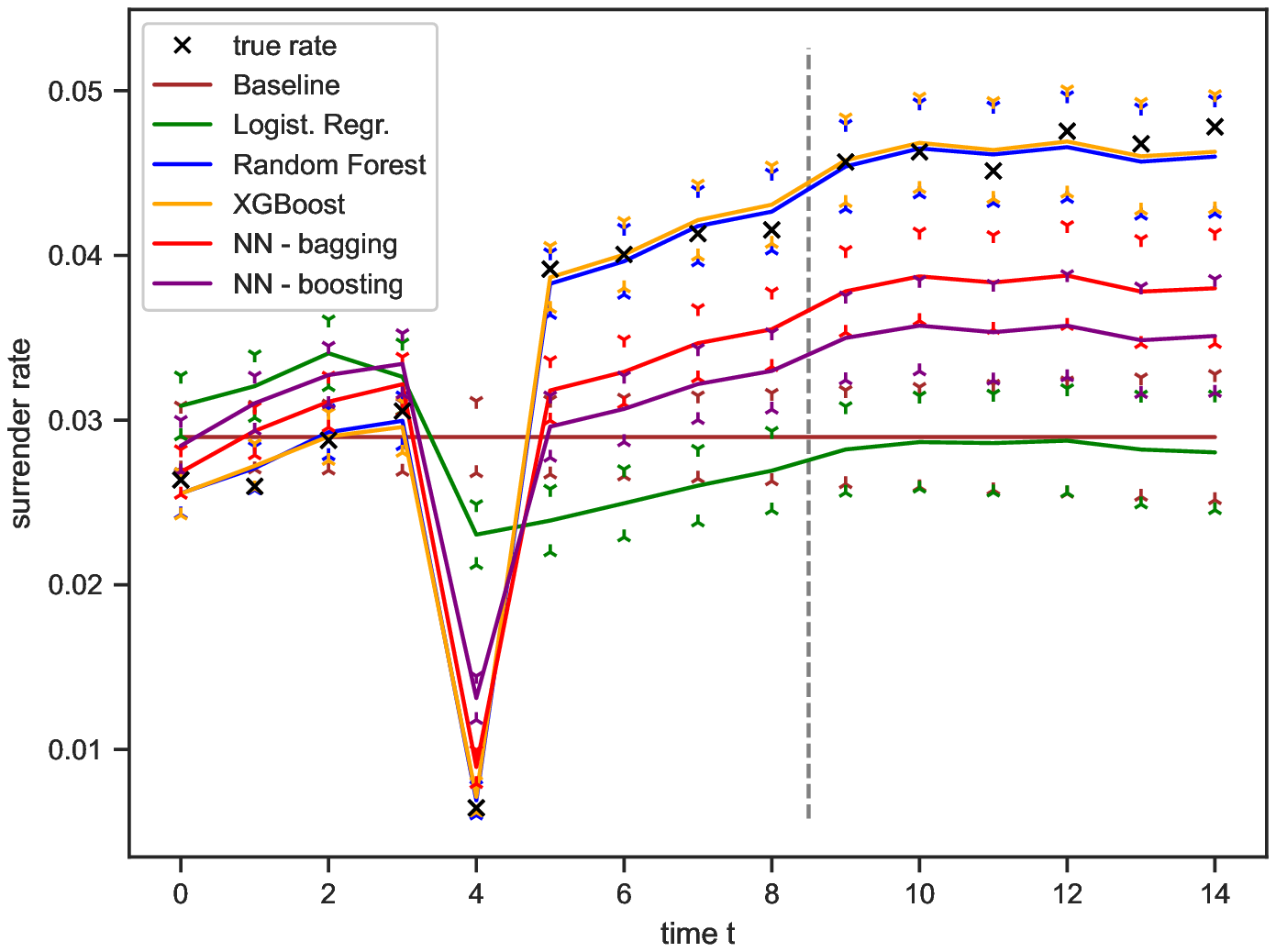}
            \subcaption{Profile 1}
            \label{fig:msr_1}
        \end{subfigure}
        \begin{subfigure}{.49\textwidth} 
            \centering
            \includegraphics[width=.95\textwidth]{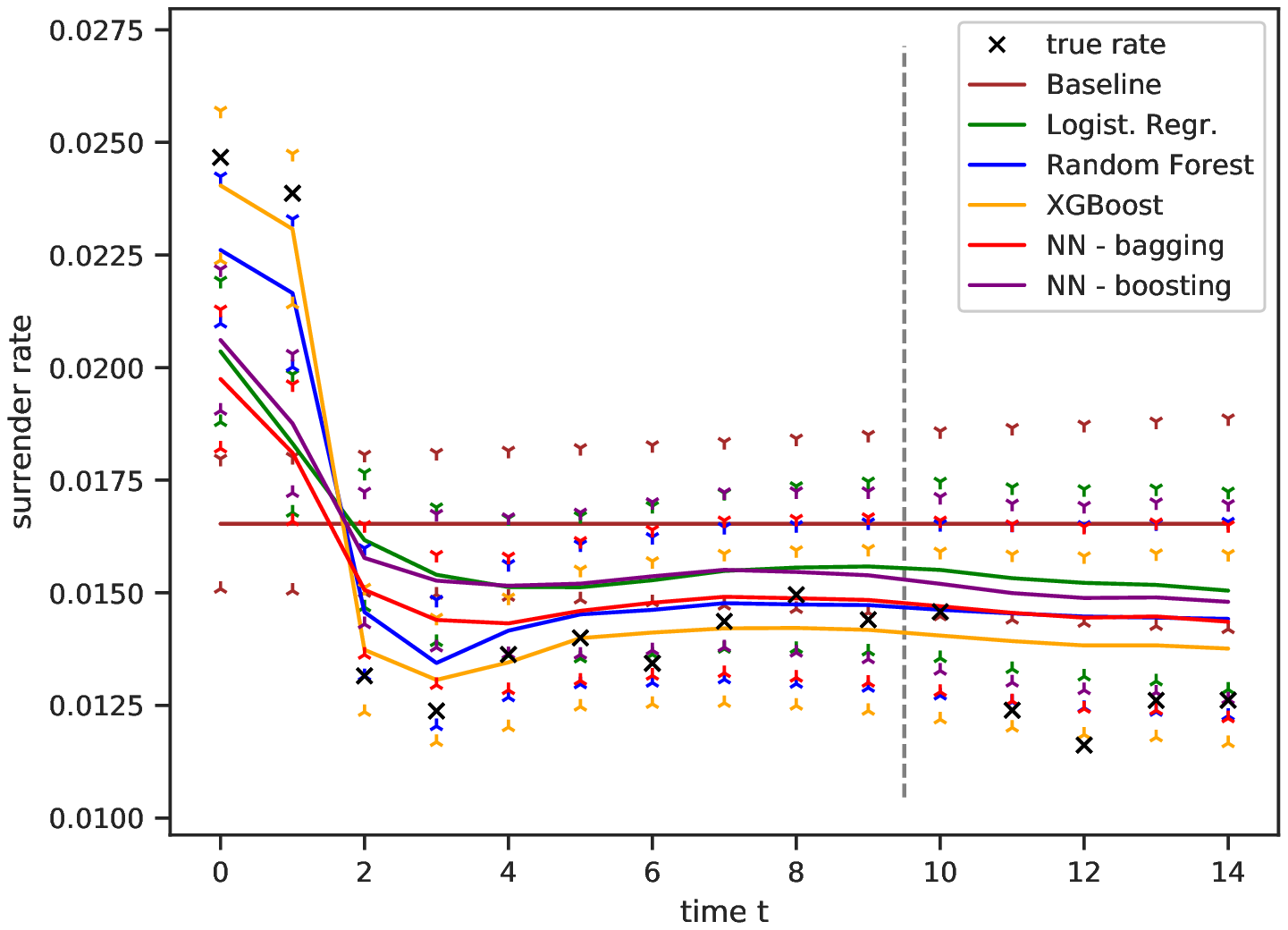}
            \subcaption{Profile 2}
            \label{fig:msr_2}
        \end{subfigure}
        \begin{subfigure}{.49\textwidth} 
            \centering
            \includegraphics[width=.95\textwidth]{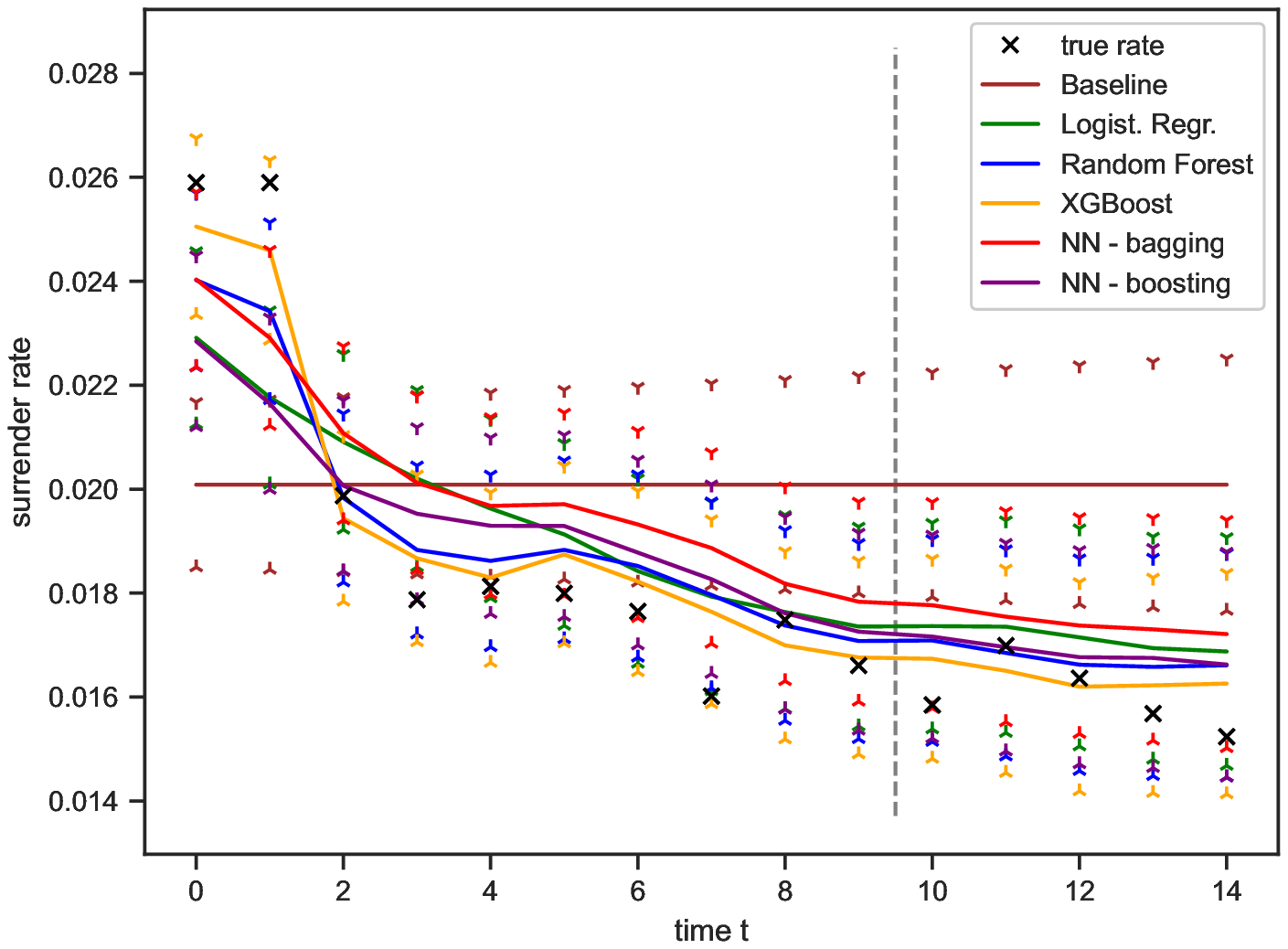}
            \subcaption{Profile 3}
            \label{fig:msr_3}
        \end{subfigure}
        \begin{subfigure}{.49\textwidth} 
            \centering
            \includegraphics[width=.95\textwidth]{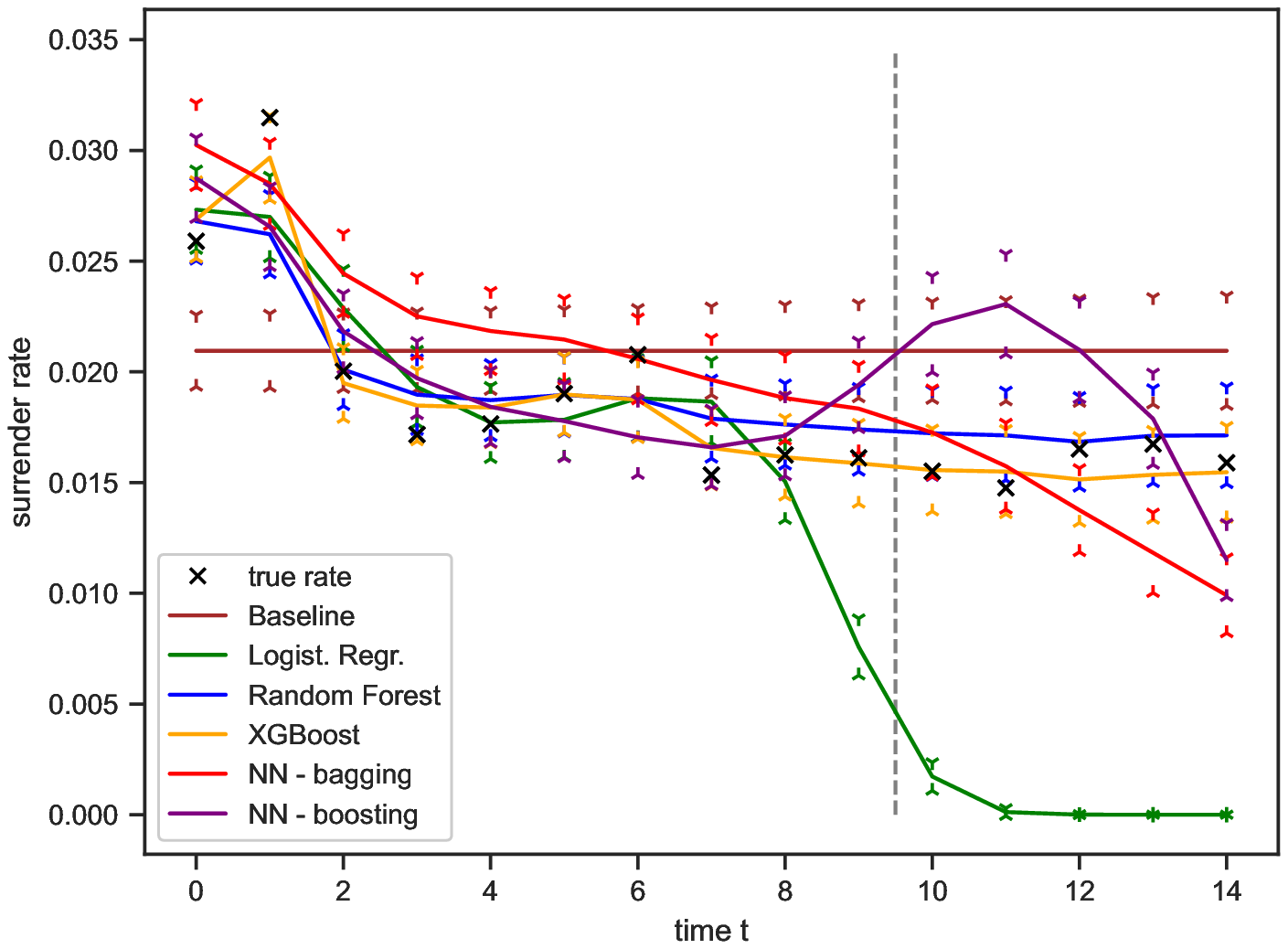}
            \subcaption{Profile 4}
            \label{fig:msr_4}
        \end{subfigure}
    \caption{Mean surrender rates over calendar-time of the classifiers for surrender profiles 1-4, including  confidence bands ($\alpha=0.95$) based on Corollary \ref{cor:confidence_intervals}. Vertical lines indicate a train-test-split.}
    \label{fig:msr_plots}
\end{figure}

First, we observe that the mean surrender rates are non-stationary. Profiles 2-4 show mean surrender rates that decline over time. In contrast, in profile 1 we see an increase of mean surrender activity with a notable drop at calendar year 4. Recall that surrender profile 1 as in \cite{Milhaud.2011} exhibits a significant increase in surrender activity just prior to an elapsed duration of 4 years, followed by a sharp decline. \\
For all surrender profiles, the naive baseline provides an insufficient classifier, which naturally does not capture any of the non-stationarity. Hence, a constant risk buffer does not appropriately compensate for the simplicity of the model. In profile 1, we observe that all classifiers except the logistic regression capture the trend of mean surrender, including the characteristic, sharp drop at calendar year 4. However, XGBoost \textcolor{\editcolor}{and random forest} are the only classifiers for which the true rate lies consistently within the confidence band of its predictions. \textcolor{\editcolor}{Further, neural networks show to be adaptive as the composition of the portfolio changes, but underestimate the rate from year 5 onward. Recall that the logistic regression can perfectly replicate all underlying surrender profiles, given perfect preprocessing. Hence, Figure \ref{fig:msr_1} also indicates the advantages of implicit data preprocessing by hierarchical models. This seems particularly desirable in a rare event setting where for example too few surrender events are available to accurately determine highly relevant groups of ages. } In profiles 2 and 3, for most years the quality of the proposed classifiers, excluding the naive baseline, cannot be distinguished based on their $0.95$-confidence bands. \textcolor{\editcolor}{However, looking at the best estimate, i.e. the solid lines in Figure \ref{fig:msr_plots}, the tree-based methods predict accurate surrender rates more confidently, with XGBoost outperforming the random forest model.} 
\textcolor{\editcolor}{Lastly, we add non-stationary noise to profile 4, thereby creating profile 4. In this challenging setting, we see a overall decent fit on the training data, i.e. left to the dotted gray line. As we move into the test set, we observe all but the tree-based models confidently fail to capture the true rate of surrender and extrapolate the non-stationary noise. The logistic regression even drifts away from the true, observed rate at the last calendar year in the training data and overall generalizes poorly. As we employed equal diligence for all surrender profiles during hyperparameter tuning, including a $3$-fold cross-validation, Figure \ref{fig:msr_4} shows that the given logistic regression is incapable of capturing the dynamics of surrender, despite being closely related to the meta model for surrender. Further, the bagged neural network gradually drifts away and underestimates the risk, while the boosted neural network highly fluctuates around the true quantity.} In that sense, results on profile 4 also indicate tree based classifiers to provide more stable estimates when characteristics of training and test set differ\textcolor{\editcolor}{, which is very plausible, given their prior assumption on locally constant rates. It is interesting to note that we can use the presented confidence bands not only to identify well-performing models, but also to monitor our assumption and potentially identify changing behaviour of policyholders. In profile 4, the noise can be viewed as the economic environment or arbitrary information, which we do not capture explicitly. As the trend of the noise changes, recall Figure \ref{fig:surrender_profile_4}, the mean surrender rates and their confidence bands reflect that change of the underlying conditions.}
\section{Conclusion} \label{section:conclusion}
In the present work, we perform extensive numerical experiments, where we look at four different surrender profiles, each motivated by empirical research in the literature.  Our modeling approaches include the logistic regression, tree based classifiers and neural networks, each in a bagged and a boosted version. The results generally indicate a highly practical performance, where XGBoost is the superior model across all surrender profiles, \textcolor{\editcolor}{closely followed by the random forest and eventually neural networks}. During the evaluation we \textcolor{\editcolor}{first look at latent quantities, such as the bias and the variance, given the true surrender probability of each contract and then aim to extract latent information from observable quantities.} With regard to low bias, our results indicate a trade-off between accurate label prediction for rare events, such as surrender, and a sound estimation of the true surrender probabilities. We also numerically affirm earlier findings in the literature that common, \textcolor{\editcolor}{consistent} resampling techniques can indeed improve frequentist evaluations of our classifiers, as e.g. the $F_1$-score. However, \textcolor{\editcolor}{in our setting these improvements do not hold across all potential choices of decision thresholds $c\in[0,1]$. Further,} our theoretical and numerical results show that \textcolor{\editcolor}{resampling} comes along with a significant bias of the predicted surrender probabilities. Hence, resampling severely skews the underlying risk and is impractical e.g. with regard to a value-at-risk assessment of surrender risk. \textcolor{\editcolor}{Overall, our analysis strongly advises against the use of resampling, when the objective is a sound prediction of the surrender probability.} \\
Lastly, \textcolor{\editcolor}{we observe the cross-entropy to be a good proxy of the latent bias, at least in terms of ranking models by their performance.} To \textcolor{\editcolor}{disentangle this single value and to} promote sound model evaluation, \textcolor{\editcolor}{we propose confidence bands on mean surrender rates with respect to calendar-time as a complementary evaluation.} This allow us to identify poor performance in a going concern perspective, where the composition of a portfolio or the predominant risk drivers might change over time. \textcolor{\editcolor}{At the same time, we observe that confidence bands can be used as a diagnostic tool to identify when the behaviour of policyholders changes. Overall, this final evaluation provides not only the correct ranking of model performance, but also more insight into at what time models start to incorrectly predict surrender. In particular}, our experiments highlight that adding a risk buffer to a naive baseline does not cover the surrender risk sufficiently. This observation indicates the importance of time for a practical model evaluation and provides support for a probabilistic, time-series perspective on surrender risk. \\

Further research should focus on the application of our findings to high-dimensional, real data sets. We are interested in the benefit of confidence bands for mean surrender rates on data sets, where prior analysis was based on frequentist concepts and the application of resampling. Confidence bands could also be used to monitor if a given model needs to be recalibrated as either the composition of the portfolio or the behaviour of policyholders change. Further, it would be interesting to extend our results by including the loss-given-surrender, or an appropriate deterministic proxy, and thereby evaluate models based on their estimate of the economic capital at risk due to surrender. The presented concept of mean event probabilities and confidence bands is not restricted to surrender or one year time horizons, but is valid for general Bernoulli-type events. Hence, it can also be used to monitor e.g. the quality of a model for expected fraud rates or one-year death events.

\section*{Acknowledgements}
The author wants to thank his supervisor Christian Weiß for his precious feedback and numerous helpful discussions. 

\printbibliography

@article{bergstra2015hyperopt,
  title={Hyperopt: a python library for model selection and hyperparameter optimization},
  author={Bergstra, J. and Komer, B. and Eliasmith, C. and Yamins, D. and Cox, D.},
  journal={Computational Science \& Discovery},
  volume={8},
  number={1},
  %pages={014008},
  year={2015},
  publisher={IOP Publishing}
}

@article{scikit-learn,
 title={Scikit-learn: Machine Learning in {P}ython},
 author={Pedregosa, F. and Varoquaux, G. and Gramfort, A. and Michel, V.
         and Thirion, B. and Grisel, O. and Blondel, M. and Prettenhofer, P.
         and Weiss, R. and Dubourg, V. and Vanderplas, J. and Passos, A. and
         Cournapeau, D. and Brucher, M. and Perrot, M. and Duchesnay, E.},
 journal={Journal of Machine Learning Research},
 volume={12},
 pages={2825--2830},
 year={2011}
}

@misc{An.2021,
      title={Why resampling outperforms reweighting for correcting sampling bias with stochastic gradients}, 
      author={ An, J. and Ying, L. and Zhu, Y.},
      year={2021},
      note  = {Available on arXiv 2009.13447}
}

@article{sriv.2014,
  author  = {Srivastava, N. and Hinton, G. and Krizhevsky, A. and Sutskever, I. and Salakhutdinov, R.},
  title   = {Dropout: A Simple Way to Prevent Neural Networks from Overfitting},
  journal = {Journal of Machine Learning Research},
  year    = {2014},
  volume  = {15},
  number  = {56},
  pages   = {1929-1958},
  %url     = {http://jmlr.org/papers/v15/srivastava14a.html}
}

@article{Blum.1992,
title = {Training a 3-node neural network is NP-complete},
journal = {Neural Networks},
volume = {5},
number = {1},
pages = {117-127},
year = {1992},
%issn = {0893-6080},
%doi = {https://doi.org/10.1016/S0893-6080(05)80010-3},
%url = {https://www.sciencedirect.com/science/article/pii/S0893608005800103},
author = {Avrim L. Blum and Ronald L. Rivest}
}

@inproceedings{Montufar.2014,
 author = {Montufar, G. and Pascanu, R. and Cho, K. and Bengio, Y.},
 booktitle = {Advances in Neural Information Processing Systems},
 publisher = {Curran Associates, Inc.},
 title = {On the Number of Linear Regions of Deep Neural Networks},
 %url = {https://proceedings.neurips.cc/paper/2014/file/109d2dd3608f669ca17920c511c2a41e-Paper.pdf},
 volume = {27},
 year = {2014}
}

@article{Krah.2018,
 author = {Krah, A.-S. and Nikoli{\'c}, Z. and Korn, R.},
 year = {2018},
 title = {A Least-Squares Monte Carlo Framework in Proxy Modeling of Life Insurance Companies},
 pages = {62},
 volume = {6},
 number = {2},
 journal = {Risks}
}

@article{Weiss.2019,
 author = {Wei{\ss}, C. and Nikoli{\'c}, Z.},
 year = {2019},
 title = {An aspect of optimal regression design for LSMC},
 pages = {283--290},
 volume = {25},
 number = {4},
 journal = {Monte Carlo Methods and Applications}
 }

@article{Cybenko1989,
  title={Approximation by superpositions of a sigmoidal function},
  author={Cybenko, G.},
  journal={Mathematics of Control, Signals and Systems},
  year={1989},
  volume={2},
  pages={303-314}
}

@article{Li.2012,
 author = {Li, S.-H. and Yen, D. C. and Lu, W.-H. and Wang, C.},
 year = {2012},
 title = {Identifying the signs of fraudulent accounts using data mining techniques},
 pages = {1002--1013},
 volume = {28},
 number = {3},
 %issn = {07475632},
 journal = {Computers in Human Behavior},
 %doi = {10.1016/j.chb.2012.01.002},
 %file = {755e6940-e202-4b55-90e1-08d53be97ff3:C\:\\Users\\mark.kiermayer\\AppData\\Local\\Swiss Academic Software\\Citavi 6\\ProjectCache\\ctjizja63o3ct1whx43fsihb0cmfkupi438r6aqy4pwqhp5hb\\Citavi Attachments\\755e6940-e202-4b55-90e1-08d53be97ff3.pdf:pdf}
}

@article{Weiss.2004,
 author = {Weiss, G. M.},
 year = {2004},
 title = {Mining with rarity},
 pages = {7--19},
 volume = {6},
 number = {1},
 %issn = {1931-0145},
 journal = {ACM SIGKDD Explorations Newsletter},
 %doi = {10.1145/1007730.1007734}
}

@book{Rubino.2009,
 author = {Rubino, G. and Tuffin, B.},
 year = {2009},
 title = {Rare event simulation using Monte Carlo methods},
 %url = {http://site.ebrary.com/lib/alltitles/docDetail.action?docID=10345880},
 address = {Chichester},
 publisher = {Wiley},
 %isbn = {9780470772690},
 %doi = {10.1002/9780470745403}
}

@book{dickson_hardy_waters_2009, 
    place={Cambridge}, 
    series={International Series on Actuarial Science}, 
    title={Actuarial Mathematics for Life Contingent Risks}, 
    %DOI={10.1017/CBO9780511800146}, 
    publisher={Cambridge University Press}, 
    author={Dickson, D. C. and Hardy, M. R. and Waters, H. R.}, 
    year={2009}, 
    collection={International Series on Actuarial Science}
}

@phdthesis{Weiss.2001,
 author = {Weiss, G. M. and Provost, F.},
 year = {2001},
 title = {The effect of class distribution on classifier learning},
 address = {San Diego},
 publisher = {Department of Comupter Science and Engineering},
 school = {{University of California}},
 %doi = {10.7282/T3-V9KT-9510}
}

@misc{FED_Retirement_Germany,
  title = {{Old-age pensions}},
  author = {Federal Ministry of Labour and Social Affairs},
  howpublished = {\url{https://www.bmas.de/EN/Our-Topics/Pensions/old-age-pensions.html}},
  note = {Accessed: 2019-06-27}
}

@article{Badirli.2020,
 author = {Badirli, S. and Liu, X. and Xing, Z. and Bhowmik, A. and Doan, K. and Keerthi, S. S.},
 date = {2020},
 title = {Gradient Boosting Neural Networks: GrowNet},
 note = {Available on arXiv 2002.07971},
 %url = {https://arxiv.org/pdf/2002.07971},
 keywords = {Machine Learning (cs.LG);Machine Learning (stat.ML)}
}

@article{RobertE.Schapire.1999,
 author = {Schapire, R. E.  and Singer, Y. },
 year = {1999},
 title = {Improved Boosting Algorithms Using Confidence-rated Predictions},
 keywords = {Artificial Intelligence;Control;Mechatronics;Natural Language Processing (NLP);Robotics;Simulation and Modeling},
 pages = {297--336},
 volume = {37},
 number = {3},
 %issn = {1573-0565},
 journal = {Machine Learning},
 %doi = {10.1023/A:1007614523901},
 %file = {db02cf47-6566-434c-ae60-93ccdd6d75b0:C\:\\Users\\mark.kiermayer\\AppData\\Local\\Swiss Academic Software\\Citavi 6\\ProjectCache\\ctjizja63o3ct1whx43fsihb0cmfkupi438r6aqy4pwqhp5hb\\Citavi Attachments\\db02cf47-6566-434c-ae60-93ccdd6d75b0.pdf:pdf}
}

@article{Kolkiewicz.2006,
 author = {Kolkiewicz, A. W. and Tan, K. S.},
 year = {2006},
 title = {Unit-Linked Life Insurance Contracts with Lapse Rates Dependent on Economic Factors},
 pages = {49--78},
 volume = {1},
 number = {1},
 %issn = {1748-4995},
 journal = {Annals of Actuarial Science},
 %doi = {10.1017/S1748499500000051}
}

@article{Kling.2014,
 author = {Kling, A. and Ruez, F. and Ru{\ss}, J.},
 year = {2014},
 title = {The impact of policyholder behavior on pricing, hedging, and hedge efficiency of withdrawal benefit guarantees in variable annuities},
 pages = {281--314},
 volume = {4},
 number = {2},
 %issn = {2190-9733},
 journal = {European Actuarial Journal},
 %doi = {10.1007/s13385-014-0093-0}
}

@article{Loisel.2011,
 author = {Loisel, S. and Milhaud, X.},
 year = {2011},
 title = {From deterministic to stochastic surrender risk models: Impact of correlation crises on economic capital},
 pages = {348--357},
 volume = {214},
 number = {2},
 %issn = {03772217},
 journal = {European Journal of Operational Research},
 %doi = {10.1016/j.ejor.2011.04.038}
}

@book{Klenke.2014,
 author = {Klenke, A.},
 year = {2014},
 title = {Probability Theory},
 address = {London},
 publisher = {{Springer}},
 %isbn = {978-1-4471-5360-3},
 %doi = {10.1007/978-1-4471-5361-0}
}

@article{Eling.2014,
     author = {Eling, M. and Kiesenbauer, D.},
     journal = {The Journal of Risk and Insurance},
     number = {2},
     pages = {241--269},
     %publisher = {[American Risk and Insurance Association, Wiley]},
     title = {What Policy Features Determine Life Insurance Lapse? An Analysis of the German Market},
     volume = {81},
     year = {2014}
}

@article{Kiesenbauer.2012,
 author = {Kiesenbauer, D.},
 year = {2012},
 title = {Main Determinants of Lapse in the German Life Insurance Industry},
 pages = {52--73},
 volume = {16},
 number = {1},
% issn = {1092-0277},
 journal = {North American Actuarial Journal},
 %doi = {10.1080/10920277.2012.10590632}
}

@article{Hodge.2004,
 author = {Hodge, V. and Austin, J.},
 year = {2004},
 title = {A Survey of Outlier Detection Methodologies}, %: Artificial Intelligence Review, 22(2), 85-126}

@article{King.2001,
 author = {King, G. and Zeng, L.},
 year = {2001},
 title = {Logistic Regression in Rare Events Data},
 pages = {137--163},
 volume = {9},
 number = {2},
 %issn = {1047-1987},
 journal = {Political Analysis},
 %doi = {10.1093/oxfordjournals.pan.a004868},
 %file = {cec892b0-6495-43f0-8784-5ec40bda9c04:C\:\\Users\\mark.kiermayer\\AppData\\Local\\Swiss Academic Software\\Citavi 6\\ProjectCache\\ctjizja63o3ct1whx43fsihb0cmfkupi438r6aqy4pwqhp5hb\\Citavi Attachments\\cec892b0-6495-43f0-8784-5ec40bda9c04.pdf:pdf}
}

@article{Loisel.2019,
%author = {Loisel, S. and Piette, P. and Tsai, J.},
% year = {2019},
% title = {Applying economic measures to lapse risk management with machine learning approaches}
%}

@article{adam,
author = {Kingma, D. and Ba, J.},
year = {2015},
%month = {12},
pages = {},
title = {Adam: A Method for Stochastic Optimization},
journal = {International Conference on Learning Representation}
}

@inproceedings{Wallace.11.12.201114.12.2011,
 author = {Wallace, B. C. and Small, K. and Brodley, C. E. and Trikalinos, T. A.},
 title = {Class Imbalance, Redux},
 pages = {754--763},
 %publisher = {IEEE},
 %isbn = {978-1-4577-2075-8},
 booktitle = {IEEE 11th International Conference on Data Mining},
 year = {2011},
 %doi = {10.1109/ICDM.2011.33},
 %file = {http://ieeexplore.ieee.org/document/6137280}
}

@book{Wakefield.2013,
 author = {Wakefield, J.},
 year = {2013},
 title = {Bayesian and Frequentist Regression Methods},
 address = {New York},
 publisher = {{Springer}},
 %isbn = {978-1-4419-0924-4},
 %doi = {10.1007/978-1-4419-0925-1}
}

@article{Sokolova.2009,
 author = {Sokolova, M. and Lapalme, G.},
 year = {2009},
 title = {A systematic analysis of performance measures for classification tasks},
 pages = {427--437},
 volume = {45},
 number = {4},
 %issn = {03064573},
 journal = {Information Processing {\&} Management},
 %doi = {10.1016/j.ipm.2009.03.002}
}

@inproceedings{Seiffert.2007,
 author = {Seiffert, C. and Khoshgoftaar, T. M. and {van Hulse}, J. and Napolitano, A.},
 title = {Mining Data with Rare Events: A Case Study},
 pages = {132--139},
 %publisher = {{IEEE Computer Society Press}},
 %isbn = {0-7695-3015-X},
 %editor = {Staff, IEEE},
 booktitle = {19th International Conference on Tools with Artificial Intelligence},%: : 2007: Patras, Greece}

@article{Provost.2001,
 author = {Provost, F. and Fawcett, T.},
 year = {2001},
 title = {Robust Classification for Imprecise Environments},
 pages = {203--231},
 volume = {42},
 number = {3},
 %issn = {08856125},
 journal = {Machine Learning},
 %doi = {10.1023/A:1007601015854}
}

@article{Phua.2012,
 author = {Phua, C. and Lee, V. and Smith, K. and Gayler, R.},
 year = {2012},
 title = {A Comprehensive Survey of Data Mining-based Fraud Detection Research},
 note={Available on arXiv 1009.6119}
 %url = {http://arxiv.org/pdf/1009.6119v1},
 %doi = {10.1016/j.chb.2012.01.002},
 %file = {http://arxiv.org/abs/1009.6119v1},
 %file = {829f40a0-da63-4a39-91b9-694aedc1d89f:C\:\\Users\\mark.kiermayer\\AppData\\Local\\Swiss Academic Software\\Citavi 6\\ProjectCache\\ctjizja63o3ct1whx43fsihb0cmfkupi438r6aqy4pwqhp5hb\\Citavi Attachments\\829f40a0-da63-4a39-91b9-694aedc1d89f.pdf:pdf}
}

@article{Milhaud.2011,
 author = {Milhaud, X. and Loisel, S. and Maume-Deschamps, V.},
year = {2011},
%month = {12},
pages = {5-48},
title = {Surrender Triggers in Life Insurance: Classification and Risk Predictions},
volume = {22},
number = {11},
journal = {Bulletin Français d'Actuariat}
}

@article{Milhaud.2018,
 author = {Milhaud, X. and Dutang, C.},
 year = {2018},
 title = {Lapse tables for lapse risk management in insurance: a competing risk approach},
 pages = {97--126},
 volume = {8},
 number = {1},
 %issn = {2190-9733},
 journal = {European Actuarial Journal},
 %doi = {10.1007/s13385-018-0165-7}
}

@article{Maalouf.2014,
 author = {Maalouf, M. and Siddiqi, M.},
 year = {2014},
 title = {Weighted logistic regression for large-scale imbalanced and rare events data},
 pages = {142--148},
 volume = {59},
 %issn = {09507051},
 journal = {Knowledge-Based Systems},
 %doi = {10.1016/j.knosys.2014.01.012}
}

@book{Hastie.2017,
 author = {Hastie, T. and Tibshirani, R. and Friedman, J. H.},
 year = {2017},
 title = {The elements of statistical learning: Data mining, inference, and prediction},
 keywords = {Maschinelles Lernen;Statistik;Supervised learning (Machine learning)},
 address = {New York},
 edition = {Second edition}, %corrected at 12th printing 2017}

@book{Glasserman.2010,
 author = {Glasserman, P.},
 year = {2010},
 title = {Monte Carlo methods in financial engineering},
 address = {New York},
 publisher = {Springer},
 %isbn = {978-0-387-21617-1},
 %file = {http://www.worldcat.org/oclc/940691451}
}

@article{Burkhart.2018,
 author = {Burkhart, T.},
 year = {2018},
 title = {Surrender Risk in the Context of the Quantitative Assessment of Participating Life Insurance Contracts under Solvency II},
 pages = {66},
 volume = {6},
 number = {3},
 journal = {Risks},
 %doi = {10.3390/risks6030066}
}

@book{Borovkov.2013,
 author = {Borovkov, A. A.},
 year = {2013},
 title = {Probability Theory},
 %url = {https://ebookcentral.proquest.com/lib/gbv/detail.action?docID=3107061},
 address = {London},
 publisher = {{Springer}},
 %isbn = {978-1-4471-5201-9},
 series = {Universitext},
 %file = {http://zbmath.org/?q=an:1297.60002}
}

@article{Bauer.2006,
 author = {Bauer, D. and Kiesel, R. and Kling, A. and Ru{\ss}, J.},
 year = {2006},
 title = {Risk-neutral valuation of participating life insurance contracts},
 pages = {171--183},
 volume = {39},
 number = {2},
 %issn = {01676687},
 journal = {Insurance: Mathematics and Economics},
 %doi = {10.1016/j.insmatheco.2006.02.003}
}

@article{Barsotti.2016,
 author = {Barsotti, F. and Milhaud, X. and Salhi, Y.},
 year = {2016},
 title = {Lapse risk in life insurance: Correlation and contagion effects among policyholders' behaviors},
 pages = {317--331},
 volume = {71},
 %issn = {01676687},
 journal = {Insurance: Mathematics and Economics},
 %doi = {10.1016/j.insmatheco.2016.09.008}
}

@article{Aleandri.2017,
 author = {Aleandri, M.},
 year = {2017},
 title = {Modeling Dynamic Policyholder Behaviour through Machine Learning Techniques},
 journal = {Submitted to Scuola de Scienze Statistiche}
}

@book{Aalen.2008,
 author = {Aalen, O. O. and Borgan, {\O}. and Gail, M. and Gjessing, H. K. and Krickeberg, K. and Samet, J. and Tsiatis, A. and Wong, W.},
 year = {2008},
 title = {Survival and Event History Analysis: A Process Point of View},
 %url = {http://nbn-resolving.de/urn:nbn:de:1111-20081017151},
 keywords = {Distribution (Probability theory);Econometrics;Epidemiology;Mathematical statistics;Statistics;System safety},
 address = {New York},
 publisher = {Springer},
 %isbn = {9780387202877},
 series = {Statistics for Biology and Health},
 %doi = {10.1007/978-0-387-68560-1},
 %file = {http://d-nb.info/990894657/34}
}

@incollection{Cerchiara.2008,
 author = {Cerchiara, R. R. and Edwards, M. and Gambini, A.},
 title = {Generalized Linear Models in Life Insurance:~Decrements and Risk Factor Analysis Under Solvency II},
 booktitle = {18th AFIR~Colloquium},
 year = {2008},
 address = {Rome}
}

@book{Goodfellow.2016,
 author = {Goodfellow, I. and Bengio, Y. and Courville, A.},
 title = {Deep Learning},
 year = {2016},
 %isbn = {0262035618, 9780262035613},
 publisher = {MIT Press},
}

@article{Chandola.2009,
 author = {Chandola, V. and Banerjee, A. and Kumar, V.},
 year = {2009},
 title = {Anomaly detection: ACM Computing Surveys, 41(3), 1-58},
 %doi = {10.1145/1541880.1541882}
}

@inproceedings{Chen.2016,
    author = {Chen, T. and Guestrin, C.},
    title = {XGBoost: A Scalable Tree Boosting System},
    year = {2016},
    %isbn = {9781450342322},
    %publisher = {Association for Computing Machinery},
    %address = {New York},%, NY, USA}

@book{GDV.2019,
 author = {GDV},
 year = {2019},
 title = {Statistical Yearbook of German Insurance 2018},
 publisher = {GDV}
}

@book{Fuhrer.2006,
 author = {F{\"u}hrer, C. and Grimmer, A.},
 year = {2006},
 title = {Einf{\"u}hrung in die Lebensversicherungsmathematik},
 %price = {EUR 45.00},
 keywords = {Insurance, life;Lebensversicherung;Mathematics;Theorie;Versicherungsmathematik},
 address = {Karlsruhe},
 publisher = {{Verlag Versicherungswirtschaft}},
 %isbn = {9783899522266},
 %file = {http://d-nb.info/979560802/04},
 %file = {http://zbmath.org/?q=an:1205.91002}
}

@article{Fine.1999,
 author = {Fine, J. P. and Gray, R. J.},
 year = {1999},
 title = {A Proportional Hazards Model for the Subdistribution of a Competing Risk},
 pages = {496-509},
 volume = {94},
 number = {446},
 %issn = {01621459},
 journal = {Journal of the American Statistical Association},
 %doi = {10.2307/2670170}
}

@article{Fawcett.2006,
 author = {Fawcett, T.},
 year = {2006},
 title = {An introduction to ROC analysis},
 pages = {861--874},
 volume = {27},
 number = {8},
 %issn = {01678655},
 journal = {Pattern Recognition Letters},
 %doi = {10.1016/j.patrec.2005.10.010}
}

@book{EuropeanComission.2010,
 author = {{European Comission}},
 year = {2010},
 title = {QIS5 Technical Specifications},
 %editor = {EC.QIS5}
}

@article{Elkan.2001,
 author = {Elkan, C.},
 year = {2001},
 title = {The Foundations of Cost-Sensitive Learning},
 pages = {973--978},
 volume = {2001},
 journal = {In Proceedings of the Seventeenth International Joint Conference on Artificial Intelligence}
}

@article{Eling.2013,
 author = {Eling, M. and Kochanski, M.},
 year = {2013},
 title = {Research on lapse in life insurance: what has been done and what needs to be done?},
 pages = {392--413},
 volume = {14},
 number = {4},
 %issn = {1526-5943},
 journal = {The Journal of Risk Finance},
 %doi = {10.1108/JRF-12-2012-0088}
}

@book{Durrett.2010,
 author = {Durrett, R.},
 year = {2010},
 title = {Probability: Theory and examples},
 %url = {http://proquest.tech.safaribooksonline.de/9781107385504},
 keywords = {Probabilities},
 %address = {New York},
 edition = {4th ed.},
 publisher = {{Cambridge University Press}},
 %isbn = {9780521765398},
 %series = {Cambridge series on statistical and probabilistic mathematics},
 %file = {http://zbmath.org/?q=an:1202.60001}
}

@article{Kiermayer2020,
author = { Kiermayer, M.   and   Weiß, C. },
title = {Grouping of contracts in insurance using neural networks},
journal = {Scandinavian Actuarial Journal},
volume = {0},
number = {0},
pages = {1-28},
year  = {2020},
publisher = {Taylor & Francis},
%doi = {10.1080/03461238.2020.1836676},
%URL = {https://doi.org/10.1080/03461238.2020.1836676}
}

@article{Chawla.2002,
 author = {Chawla, N. V. and Bowyer, K. W. and Hall, L. O. and Kegelmeyer, W. P.},
 year = {2002},
 title = {SMOTE: Synthetic Minority Over-sampling Technique},
 pages = {321--357},
 volume = {16},
 journal = {Journal of Artificial Intelligence Research},
 %doi = {10.1613/jair.953},
 %file = {e0816177-06b2-4456-b62e-7089398b88af:C\:\\Users\\mark.kiermayer\\AppData\\Local\\Swiss Academic Software\\Citavi 6\\ProjectCache\\ctjizja63o3ct1whx43fsihb0cmfkupi438r6aqy4pwqhp5hb\\Citavi Attachments\\e0816177-06b2-4456-b62e-7089398b88af.pdf:pdf}
}

@article{Yu.2019,
 author = {Yu, Lu and Cheng, Jiang and Lin, Tzuting},
 year = {2019},
 title = {Life insurance lapse behaviour: evidence from China},
 pages = {653--678},
 volume = {44},
 number = {4},
 %issn = {1018-5895},
 journal = {The Geneva Papers on Risk and Insurance - Issues and Practice},
 %doi = {10.1057/s41288-018-0104-5}
}

\appendix
\section{Appendix}

\subsection{Proofs} \label{appendix:proofs}

    \begin{proof}[Proof of Proposition \ref{prop:CLT}] \label{proof:prop:CLT}
        In order to apply the central limit theorem of Lindeberg-Feller, it remains to show that the Lindeberg condition holds, see Theorem 15.43 in \cite{Klenke.2014}. Hence, for all $\delta>0$ we have to show that
        \begin{align} \label{eq:clt_proof}
            L_N(\delta):&=\frac{1}{\sigma^2(S_N)}\sum_{i=1}^{N} \mathbb{E}\left[ (Z_i-p(1\vert x_i))^2\mathds{1}_{\lbrace \vert Z_i-p(1\vert x_i)\vert>\delta \sigma(S_N) \rbrace}\right] \xrightarrow[N \to \infty]{} 0 .
        \end{align}
        Observe that $\vert Z_i-p(1\vert x_i)\vert$ is bounded, while $\sigma(S_N)^2=\sum_{i=1}^{N}p(1\vert x_i)(1-p(1\vert x_i))\xrightarrow[N \to \infty]{} \infty$, given that $p(1\vert x_i)\in[\varepsilon, 1-\varepsilon]$ for all $i\in\mathbb{N}$. Hence, for every $\delta>0$ there is $\Bar{N}_{\delta}\in \mathbb{N}$ such that $\vert Z_i-p_{x_i} \vert<\delta\sigma(S_N),$ for all $N>\Bar{N}_{\delta}$ and for all $i\in\mathbb{N}$. Therefore, the sum in \eqref{eq:clt_proof} contains at most $\Bar{N}_{\delta}$ bounded summands and we conclude $L_N(\delta)\xrightarrow[N \to \infty]{} 0$.
    \end{proof}

    \begin{proof}[Proof of the Corollary \ref{cor:confidence_intervals}] \label{proof:cor:CLT}
        By Lemma \ref{prop:CLT}, we know 
        \begin{align}
            \frac{1}{\sigma(S_N)}\sum_{i} Z_i - p(1\vert x_i) \overset{d}{\rightarrow} \mathcal{N}\left(0,1\right).
        \end{align}
        Then, for large values of $N$
        \begin{align}
            1-\frac{\alpha}{2} & = \mathbb{P}\left(\tfrac{1}{N}\sum_i Z_i \leq z \right) \approx \Phi\left(\frac{N~ z-\sum_i p_{x_i}}{\sqrt{\sum_i p(1\vert x_i)(1-p(1\vert x_i)}}\right),
                    \intertext{holds, and therefore}
            z & \approx \frac{1}{N}\sum_i p(1\vert x_i) + \Phi^{-1}\left(\frac{\alpha}{2}\right)~\frac{\sqrt{\sum_i p(1\vert x_i)(1-p(1\vert x_i)})}{N}.
        \end{align}
        
        Let $\hat{p}:x\mapsto \hat{p}(1\vert x)$ describe the estimator for the true Bernoulli probabilities $p:x\mapsto p(1\vert x)$. Then, substituting $\frac{1}{N}\sum_i p(1\vert x_i)$ by its estimate $\frac{1}{N}\sum_i \hat{p}(1\vert x_i)$ and the standard deviation $\sigma\left( S_N\right)=\sqrt{\sum_i p(1\vert x_i)(1-p(1\vert x_i))}$ by its asymptotically unbiased sample estimate $\hat{\sigma}(S_N)=\sqrt{\frac{N}{N-1}}~\sqrt{\sum_i \hat{p}(1\vert x_i)(1-\hat{p}(1\vert x_i))}$ yields the statement.
    \end{proof}
    
    \begin{proof}[Proof of Lemma \ref{lemma:bias_resampling}] \label{proof:lemma:res}
        Using the decomposition proposed in Theorem \ref{theorem:effect_of_resampling} and $\hat{p}^S(y=1)>\hat{p}(y=1)$ yields
        \begin{align*}
            \hat{p}^S(1\vert x) &= \hat{p}(1\vert x)~\frac{\hat{p}^S(y=1)(1-\hat{p}(y=1))}{\hat{p}(y=1)(1-\hat{p}(1\vert x))+\hat{p}^S(y=1)(\hat{p}(1\vert x)-\hat{p}(y=1))} \\
                    & > \hat{p}(1\vert x)~\frac{\hat{p}^S(y=1)(1-\hat{p}(y=1))}{\hat{p}^S(y=1)(1-\hat{p}(1\vert x)+\hat{p}(1\vert x)-\hat{p}(y=1))} \\
                    & = \hat{p}(1\vert x)~.
        \end{align*}
    \end{proof}

\subsection{Simulation of endowment policies.} \label{appendix:simulation_policies}
We present the simulation scheme for the portfolio at the initial calendar year. New business at subsequent calendar years is simulated analogously with the condition that the currently elapsed duration equals zero. Let us denote an arbitrary contract $X=(X^{(1)},\ldots,X^{(n)})$ with $n=8$. The features correspond to the calendar year ($X^{(1)}$), the current age of the policyholder ($X^{(2)}$), the face amount of the contract ($X^{(3)}$),  the duration ($X^{(4)}$), the elapsed duration ($X^{(5)}$), the remaining duration ($X^{(6)}$), the frequency of premium payments ($X^{(7)}$) and the annual amount of premium payments ($X^{(8)}$). For the simulation we use a Pearson gamma distribution $\mathbf{\Gamma}_{\alpha,\lambda}$ with shape $\alpha>0$ and rate $\lambda>0$, see \cite{Borovkov.2013}, as well as the uniform distribution $\mathcal{U}(0,1)$. The gamma distribution provides a flexible, right skewed distribution. For a random variable $\zeta\sim \mathbf{\Gamma}_{\alpha,\lambda}$ the expectation and variance of $\zeta$ are given by
\begin{align*}
    \mathbb{E}(\zeta) = \frac{\lambda}{\alpha}, \quad\quad  \text{Var}(\zeta) = \frac{\lambda}{\alpha^2}.
\end{align*}
In \cite{Milhaud.2018}, we find $34\%$ of policyholders to be of ages 0-34, respectively $34.04\%$ and $18.5\%$ of ages 35-54 and 55-84. This indicates a right-skewed distribution for a policyholder's age. Further, most lapse events, including maturity, are indicated to occur within 15 years. The authors in \cite{Milhaud.2018} also report roughly $25\%$ annual premium payments,  $60\%$ infra annual and $15\%$ supra annual payments. The annual premium amount is reported to be highly right-skewed. We calibrate the marginal distributions of features $X^{(i)}$, $i>1$, to closely imitate these statistics. We define 
\begin{align*}
    X^{(2)} &\sim\Gamma_{5.5,~6.8^{-1}}, & X^{(4)} &\sim 5 + \Gamma_{5,~1.5^{-1}},\\
    X^{(3)} &\sim 5'000 + \Gamma_{4,~2'000^{-1}}, & X^{(5)} &\sim \min(X_4\cdot \mathcal{U}(0,1),X_2).
\end{align*}
As we do not permit negative ages at the underwriting of the contract, the elapsed duration $X^{(5)}$ is engineered to not exceed the current age. The remaining duration $X^{(6)}$ of a contract is then obtained deterministically by $X^{(6)}:= X^{(4)}-X^{(5)}$. Further, the premium frequency $X^{(7)}$ is a categorical variable which mirrors \cite{Milhaud.2018} with
\begin{align*}
    X^{(7)}&=\begin{cases}\text{upfront} & \text{, with prob. } 0.15 \\ \text{annual} & \text{, with prob. } 0.25 \\ \text{monthly} & \text{, with prob. } 0.6 \end{cases}\quad. 
\end{align*}
Last, we determine the fair premium by the equivalence principle in \cite{dickson_hardy_waters_2009, Fuhrer.2006} annualize it linearly based on the premium frequency and the number of remaining premium payments up to maturity, respectively at most up to the last premium payment at retirement, i.e. at age 67. The resulting annual premium $X^{(8)}$ is then used to calibrate the parameters of the face amount $X^{(3)}$. The reported parametrization of $X^{(3)}$ leads to an annual, mean premium of $1'300$ EUR, which inherits the right-skew from $X^{(3)}$. \\

\subsection{Surrender profiles} \label{appendix:surrender_profiles}

All surrender profiles used for the experiments in this work are based on findings in the literature, see \cite{Milhaud.2011, Cerchiara.2008, Eling.2014}, which all employ logistic regressions and report quantitative results on $\beta_{x}^{(i)} x^{(i)}$ for surrender. Empirical results in \cite{Milhaud.2011} are obtained on endowment insurance contracts in Spain, while \cite{Cerchiara.2008} and \cite{Eling.2013} consider life insurance contracts of multiple types in the Italian and German market. Given surrender behaviour on various markets, we do not aim to combine them to a holistic surrender model. Instead we create multiple surrender profiles that capture a variety of the observed risk characteristics. To plausibly combine multiple risk drivers $\beta_{x}^{(i)} x^{(i)}$ we adjust the baseline risk $\beta_0$ such that the observed mean surrender rates of each risk profile fall into practical ranges of 0.01 to 0.05.

\paragraph{Profiles 1 and 2.}
In \cite{Milhaud.2011} the authors use odd ratios to identify quality and severity of risk drivers for endowment insurance products. To compute the odd ratio of a single feature $i$, we take for two contracts $x$ and $\Tilde{x}$, which only differ in feature $i$, i.e. $x^{(j)}=\Tilde{x}^{(j)}$ for $j\neq i$. The odd ratio is then computed by
\begin{align}
    \frac{p(1\vert x)/\left(1-p(1\vert x)\right)}{p(1\vert \Tilde{x})/\left(1-p(1\vert \Tilde{x})\right)} = \exp\lbrace \beta_{x}^{(i)} x^{(i)}-\beta_{\Tilde{x}}^{(i)}\Tilde{x}^{(i)}) \rbrace.
\end{align}
If we set $\beta_{\Tilde{x}}^{(i)}\Tilde{x}^{(i)}:=0$ as the baseline risk of feature $i$, we can extract $\beta_{x}^{(i)} x^{(i)}$ for all features $i$. Consequently, we construct surrender profile 1 based on empirical odd ratios and surrender profile 2 based on modeled odd ratios, both are reported in \cite{Milhaud.2011}. The risk drivers are restricted to the current age of the policyholder, the elapsed duration of the contract, the frequency of premium payments, i.e. monthly, annually or as a lump sum, and the annualized premium amount. Due to data confidentiality in \cite{Milhaud.2011}, the actual premium amounts are omitted. Hence, we transfer the odd ratio for the savings premium in \cite{Milhaud.2011} to our meta model by assuming the indicated jumps to occur at plausible levels of $1'000$ EUR resp. $2'000$ EUR.

\paragraph{Profiles 3 and 4.}
We keep the the effect of the elapsed duration and the annualized premium amount fixed to the values in surrender profile 2. For the premium frequency, as well as the remaining duration we employ results in \cite{Eling.2014}, including regression coefficient estimated on a per contract basis. These results indicate a reduced risk of surrender for single premium payments and an increased risk for a high remaining duration. Further, in \cite{Cerchiara.2008} we find a mitigating effect of mid-range ages, as well as an increased surrender activity for contracts with a duration of less than $10$ years, both reported in the form of $\beta_{x}^{(i)} x^{(i)}$. This completes surrender profile 3. Lastly, for surrender profile 4 we additionally include the calendar year as a risk driver and impose coefficients reported in \cite{Cerchiara.2008} for the years $1991$ up to $2007$. The calendar year can be viewed as a proxy for the economic environment or alternatively as noise to the surrender activity, as input variables $x$ in our data do not include economic features.

\subsection{Figures} \label{appendix:figures}
    
    \begin{figure}[H]
        \centering
        \begin{subfigure}{.95\textwidth} 
            \centering
            \includegraphics[width=\textwidth]{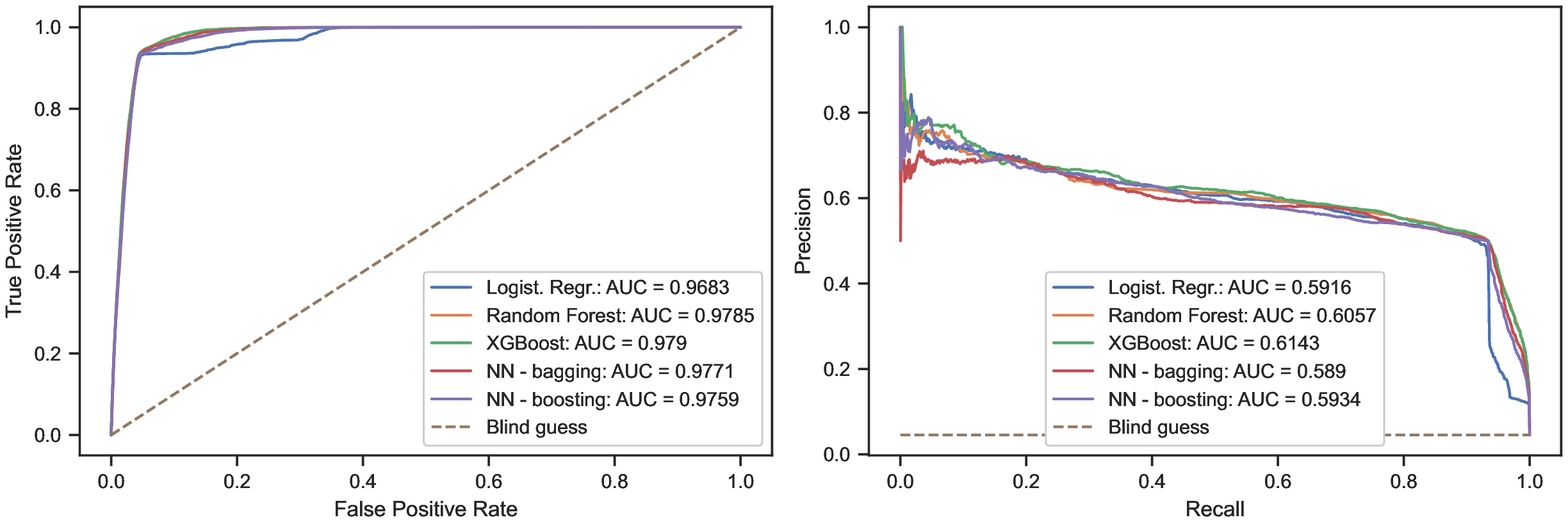}
            \subcaption{Profile 1}
            \label{fig:roc_1}
        \end{subfigure}
        \begin{subfigure}{.95\textwidth} 
            \centering
            \includegraphics[width=\textwidth]{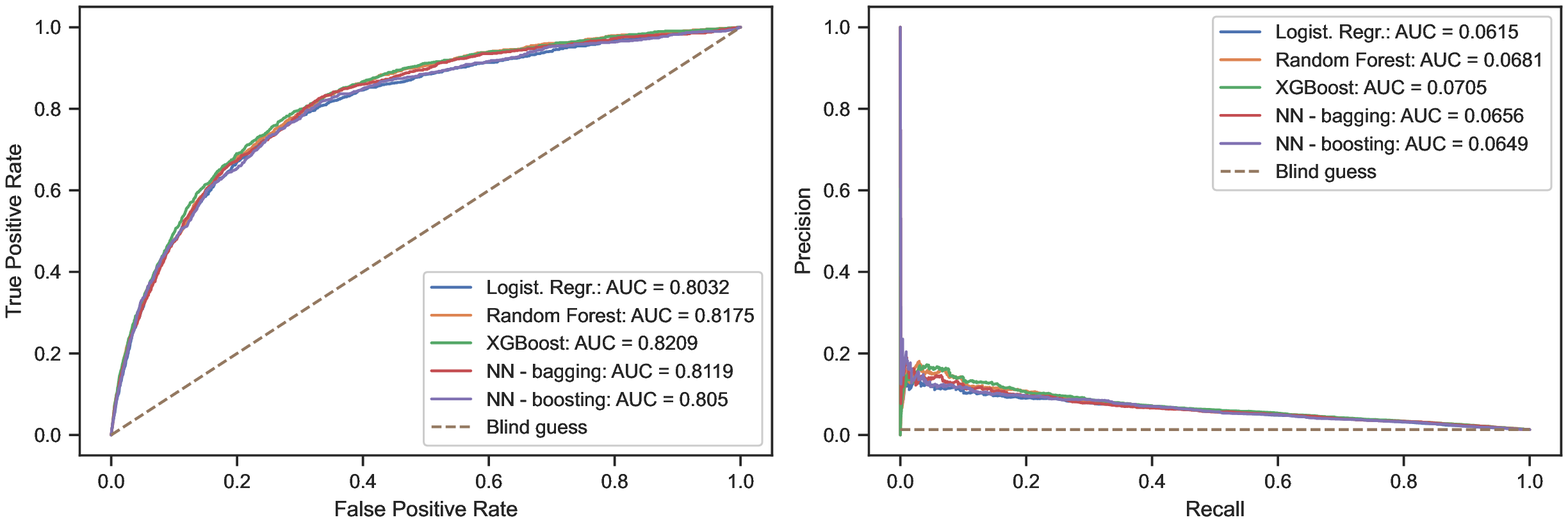}
            \subcaption{Profile 2}
            \label{fig:roc_2}
        \end{subfigure}
        \begin{subfigure}{.95\textwidth} 
            \centering
            \includegraphics[width=\textwidth]{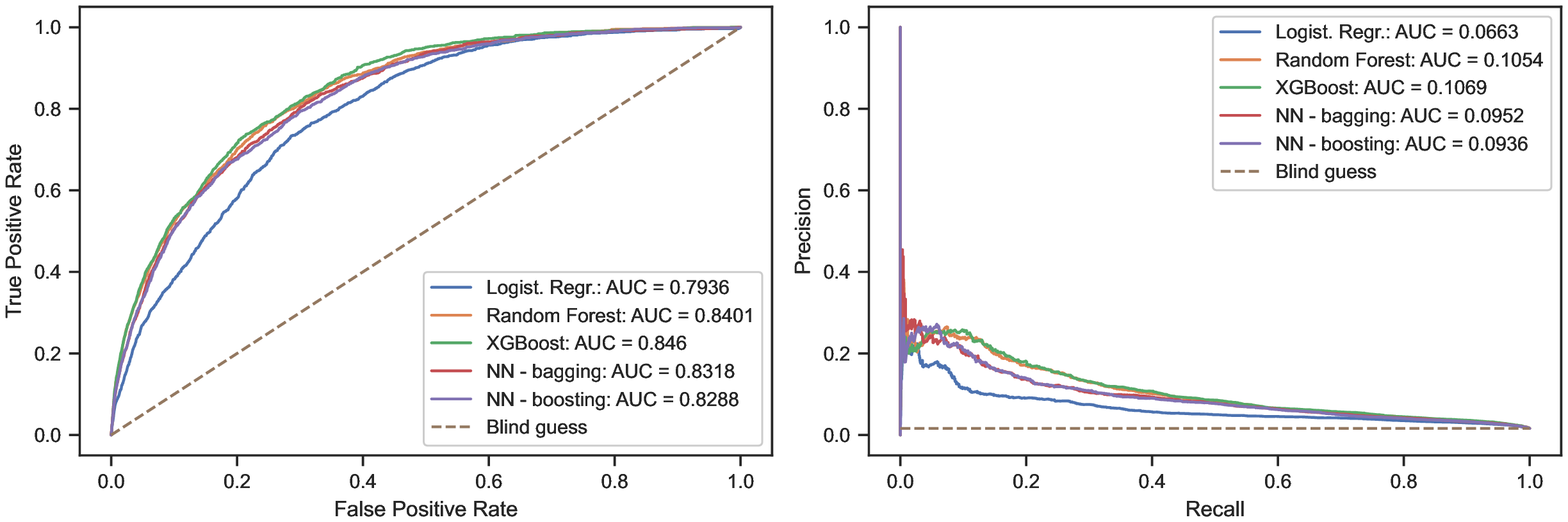}
            \subcaption{Profile 3}
            \label{fig:roc_3}
        \end{subfigure}
        \caption{ROC curves (left) and precision-recall curves (right) on test data $\mathcal{D}_{\text{test}}$ for the respective surrender profiles.}
    \end{figure}
    \begin{figure}[H]
        \ContinuedFloat
        \centering
        \begin{subfigure}{.95\textwidth} 
            \centering
            \includegraphics[width=\textwidth]{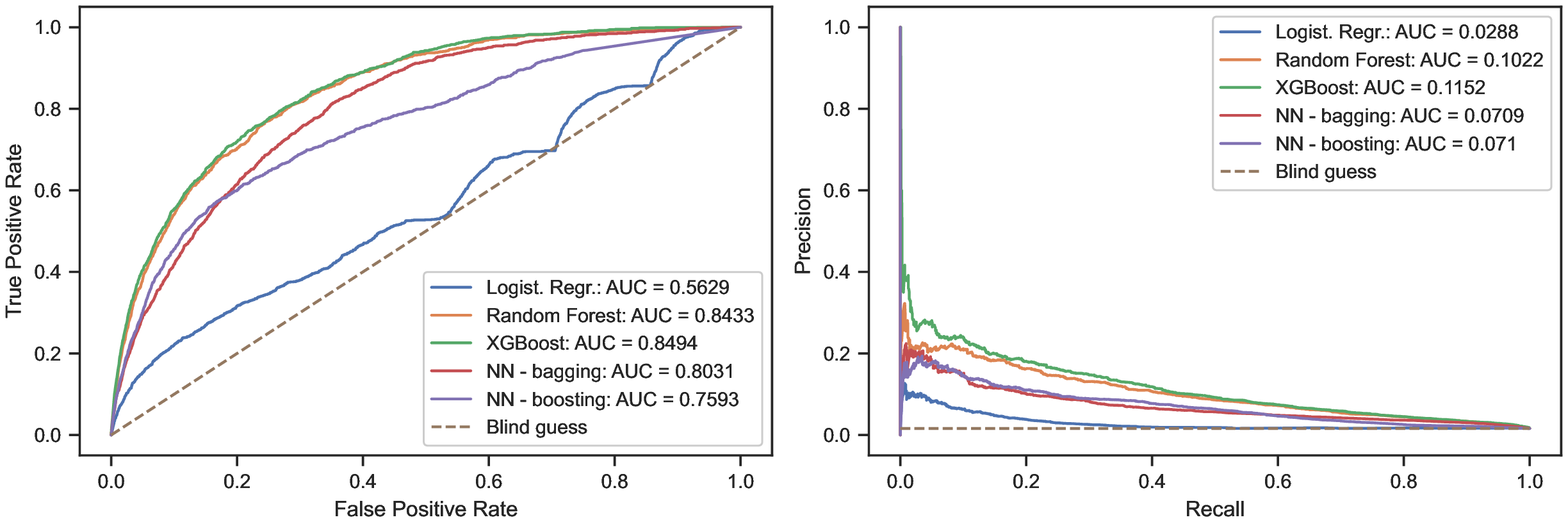}
            \subcaption{Profile 4}
            \label{fig:roc_4}
        \end{subfigure}
        \captionsetup{list=off,format=cont}
        \caption{ROC curves (left) and precision-recall curves (right) on test data $\mathcal{D}_{\text{test}}$ for the respective surrender profiles.}
        \label{fig:my_label}
    \end{figure}
    
    \begin{figure}[H]
        \centering
        \begin{subfigure}{.95\textwidth} 
            \centering
            \includegraphics[width=\textwidth]{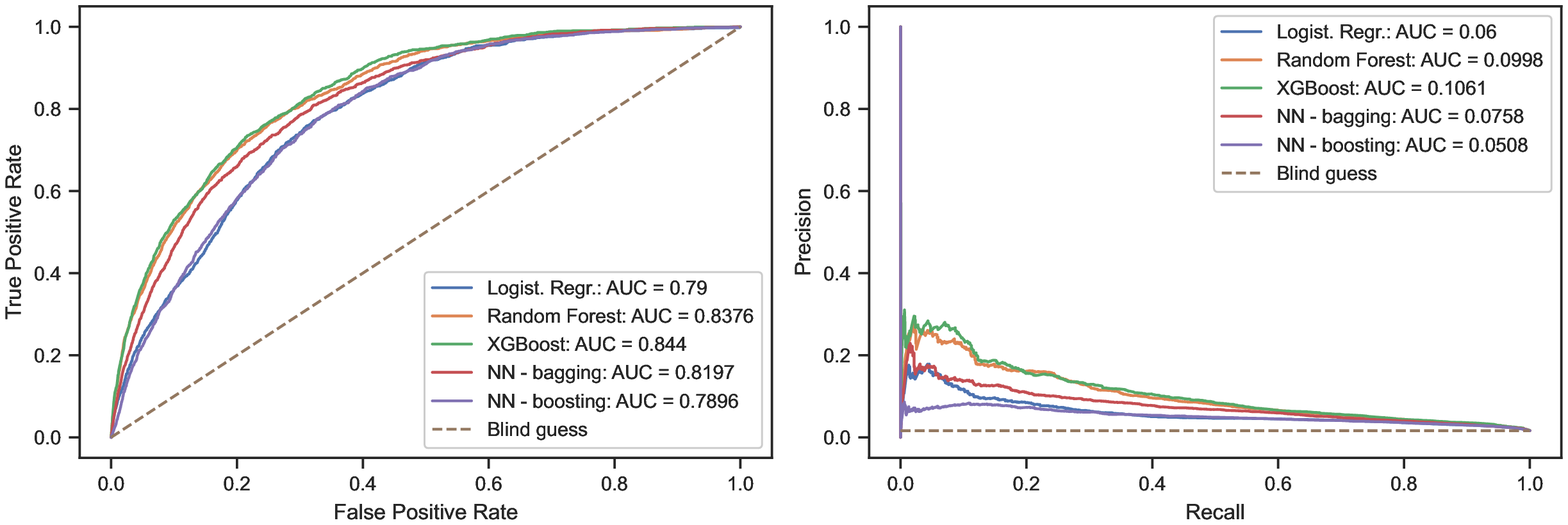}
            \subcaption{Resampling scheme: random undersampling}
            \label{fig:roc_undersampling}
        \end{subfigure}
        \begin{subfigure}{.95\textwidth} 
            \centering
            \includegraphics[width=\textwidth]{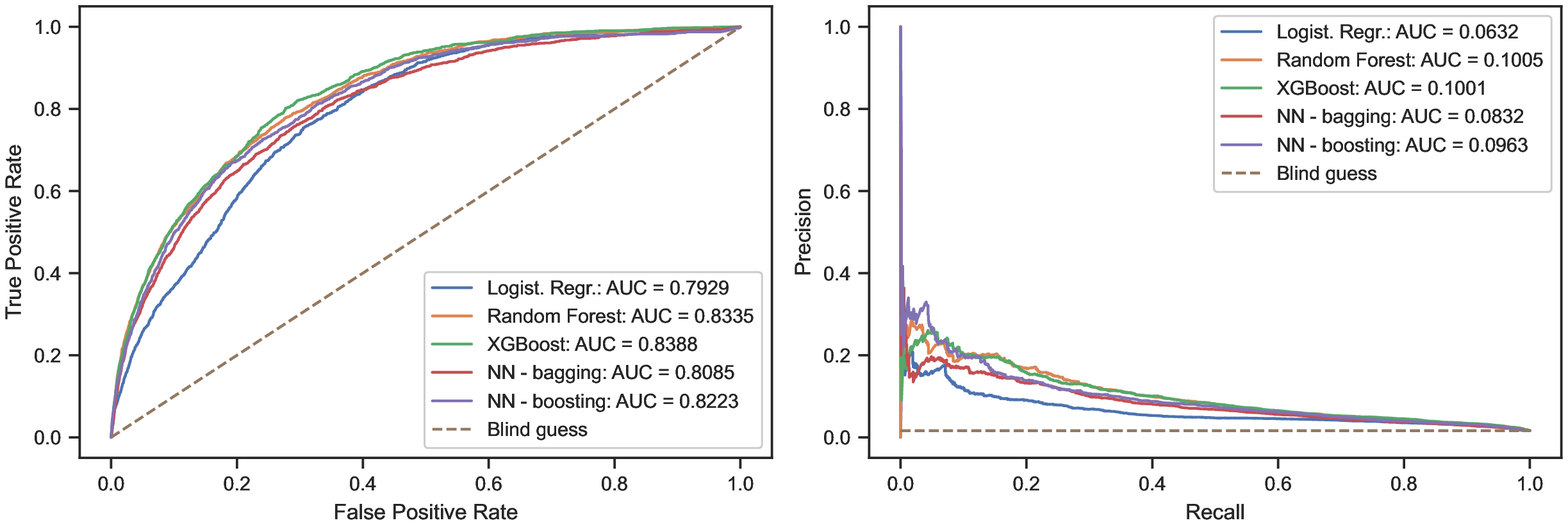}
            \subcaption{Resampling scheme: SMOTE}
            \label{fig:roc_SMOTE}
        \end{subfigure}
        \caption{ROC curves (left) and precision-recall curves (right) on test data $\mathcal{D}_{\text{test}}$ for surrender profile 3. All models were trained on resampled on training data $\mathcal{D}_{\text{train}}$}
        \label{fig:roc_resampling}
    \end{figure}

\newpage
\subsection{Tables} \label{appendix:tables}
\vspace{10pt}
        \begin{table}[htb]
            \centering
            \renewcommand\arraystretch{1.5}
            \begin{tabular}{lcl}
                \hline
                Measure & Formula & Interpretation \\
                \hline
                Accuracy & $\tfrac{TP+TN}{TP+FP+FN+TN}$ & Overall share of correctly classified labels \\
                Precision & $\tfrac{TP}{TP+FP}$ & Share of correct label predictions $\hat{y}=1$ \\
                Recall\footnote{Also known as sensitivity or true positive rate} & $\tfrac{TP}{TP+FN}$ & Share of correctly classified data with $y=1$ \\
                Specificity & $ \tfrac{TN}{FP+TN}$ & Share of correctly classified data with $y=0$ \\  
                False positive rate  & $\tfrac{FP}{FP+TN}$ & Share of misclassified data with $y=0$ \\
                $F_{\beta}$-score & $\tfrac{(1+\beta)\cdot\text{recall}\cdot\text{precision}}{\beta^2\cdot\text{recall}+\text{precision}}$ & Weighted balance ($\beta>0$) between recall and precision \\
                \hline
            \end{tabular}
            \caption{Measures for binary classification, adapted from \cite{Sokolova.2009}. \textcolor{\editcolor}{Abbreviations: True Positive (TP), True Negative (TN), False Positive (FP), False Negative (FN).}}
            \label{table:classification_measures}
        \end{table} 
        
        \begin{table}[htb]
            \centering
            \begin{tabular}{c|cccc|cc}
profile &  $\vert \mathcal{D}_{\text{train}}\vert$ &  imbalance &  RUS &  SMOTE &  $\vert\mathcal{D}_{\text{test}}\vert$ &  imbalance \\
\midrule
1 &      189285 &     0.0290 &     10966 &      367604 &      75414 &    0.0454 \\
2 &      212978 &     0.0165 &      7042 &      418914 &      81113 &    0.0131 \\
3 &      217515 &     0.0201 &      8738 &      426292 &      89671 &    0.0162 \\
4 &      216734 &     0.0210 &      9082 &      424386 &      89105 &    0.0159 \\
\bottomrule
\end{tabular}

            \caption{Overview of number of 1-year observations and imbalance in the training and test set of all surrender profiles. Additionally, we report the number of observations in a perfectly balanced training set resulting from random-undersampling (RUS) and SMOTE-resampling. }
            \label{table:data_review}
        \end{table}
        

\end{document}